\documentclass[a4paper,UKenglish,cleveref, autoref, thm-restate]{article}
\usepackage[dvipsnames]{xcolor}
\usepackage{mathtools} 
\usepackage{amsthm}
\usepackage{thmtools}
\usepackage{enumerate}
\usepackage{amsmath}
\usepackage{amssymb}
\usepackage{algorithm}
\usepackage[noend]{algpseudocode}
\usepackage{bm}
\usepackage{bbm}
\usepackage[margin=1in]{geometry}
\usepackage{amsfonts}
\usepackage{amscd}
\usepackage{enumitem}
\setlist[itemize]{noitemsep, topsep=3pt}
\usepackage{microtype}
\usepackage{tikz}
\usetikzlibrary{positioning}
\usetikzlibrary{arrows.meta}
\usetikzlibrary{calc}
\newcounter{casecounter}

\usepackage{xspace}
\usepackage{color}
\usepackage{url}
\usepackage{comment}
\usepackage[colorlinks]{hyperref}
\hypersetup{
 colorlinks={true},
  linkcolor=[rgb]{0.3,0.3,0.6},
  citecolor=[rgb]{0.2, 0.6, 0.2},
  urlcolor=[rgb]{0.6, 0.2, 0.2}
}
\usepackage[capitalise, nameinlink]{cleveref}
\usepackage{bbm}
\usepackage{times}
\usepackage{titlesec}

\newtheorem{theorem}{Theorem}[section]
\newtheorem{observation}[theorem]{Observation}
\newtheorem{lemma}[theorem]{Lemma}
\newtheorem{corollary}[theorem]{Corollary}

\newtheorem{proposition}[theorem]{Proposition}
\newtheorem{definition}[theorem]{Definition}

\newtheorem{mainthm}{Theorem}

\usepackage{authblk}
\newcommand{\uni}{University}
\newcommand{\saar}{Saarland}
\newcommand{\sic}{\saar\xspace Informatics Campus}
\newcommand{\sbn}{Saarbr\"ucken}
\newcommand{\de}{Germany}

\newcommand{\mpi}{Max Planck Institute for Informatics}

\title{The Complexity of Bisimilarity and Model Checking in Finitary Diagrams} 

 \author[1]{Markus Bl{\"a}ser\thanks{Email: \texttt{mblaeser@cs.uni-saarland.de}.}}

 \author[2]{Sagnik Dutta\thanks{Email: \texttt{sadutta@mpi-inf.mpg.de}.}}

 \author[3]{Samuel Okyay}

\affil[1,3]{\saar\xspace\uni, \sic, \sbn, \de}
\affil[2]{\mpi, \sic, \sbn, \de}

\date{}

\newcommand*\xbar[1]{%
  \hbox{%
    \vbox{%
      \hrule height 0.7pt 
      \kern0.4ex
      \hbox{%
        \kern0.05em
        \ensuremath{#1}%
        \kern0.05em
      }%
    }%
  }%
} 

\newcommand{\calA}{\mathcal{A}}
\newcommand{\calC}{\mathcal{C}}
\newcommand{\be}{\xbar{e}}
\newcommand{\Rset}{\mathbb{R}}
\newcommand{\Nset}{\mathbb{N}}

\newcommand{\leP}{\le_\mathrm{P}}
\newcommand{\leNP}{\le_{\mathrm{NP}}}
\newcommand{\leNEXP}{\le_{\mathrm{NEXP}}}
\newcommand{\poly}{\operatorname{poly}}

\newcommand{\NP}{\mathsf{NP}}
\newcommand{\coNP}{\mathsf{coNP}}
\newcommand{\RP}{\mathsf{RP}}

\newcommand{\EXPSPACE}{\mathsf{EXPSPACE}}
\newcommand{\NEXP}{\mathsf{NEXP}}
\newcommand{\PSPACE}{\mathsf{PSPACE}}

\newcommand{\existsR}{\exists\Rset}
\newcommand{\ETR}{\mathrm{ETR}}

\newcommand{\ETIM}{\mathrm{ETIM}}
\newcommand{\genETIM}{\mathrm{genETIM}}
\newcommand{\ETSM}{\mathrm{ETSLM}}
\newcommand{\genETSM}{\mathrm{genETSLM}} 
\newcommand{\SDIT}{\mathrm{SDIT}}

\newcommand{\TQBF}{\mathrm{TQBF}}

\newcommand{\FF}{\mathrm{FF}}
\newcommand{\GL}{\mathrm{GL}}
\newcommand{\posFF}{\mathrm{posFF}}

\newcommand{\BisimFD}{\mathrm{BisimFD}}
\newcommand{\rk}{\operatorname{rk}}
\newcommand{\ETRinv}{\mathrm{ETR}_\mathrm{inv}}

\newcommand{\sabo}{\operatorname{sabove}}
\newcommand{\abo}{\operatorname{above}}
\newcommand{\comm}{\operatorname{comm}}

\newcommand{\match}{\operatorname{match}}
\newcommand{\Z}{\mathbb{Z}}

\newcommand{\id}{\mathbf{I}}

\renewcommand{\de}{\overrightarrow{E}}

\newcommand{\F}{\mathbb{F}}

\begin{document}

\maketitle

\begin{abstract}
 Inspired by the work of Dubut, Goubault, and Goubault-Larrecq (ICALP 2015) on natural homology, Dubut (RAMiCS 2020) introduces finitary diagrams and studies bisimilarity and diagrammatic path logics for them. To this aim, he defines a fragment of the existential theory of the reals, called the existential theory of invertible matrices (ETIM). Using a PSPACE upper bound for this fragment, he proves that for finitary diagrams, bisimilarity can be decided in EXPSPACE and model checking for diagrammatic path logic in PSPACE. 

We significantly improve both these bounds and settle the complexity of model checking for finitary diagrams. As our first main result, we show that there is an efficient randomized algorithm for ETIM. Combining this with the previous work by Dubut, we obtain an NEXP upper bound for bisimilarity of finitary diagrams and an NP upper bound for diagrammatic path logic. We also provide a matching NP-hardness proof for the latter. The hardness proof introduces constrained layered poset problems, which may be of independent interest, and connects them to finitary diagrams using Gabriel's theorem for representations of path quivers.
For bisimilarity over finite fields, we further improve the upper bound to PSPACE.  In ETIM, we quantify over invertible matrices. We finally ask what happens if we instead quantify over matrices from the special linear group, that is, of determinant one. We show that in this case, the resulting fragment is equivalent to the existential theory of the reals, under a mild generalization of the allowed linear constraints.
\end{abstract}

\section{Introduction}

The study of behavioral equivalence in complex systems has been a central theme in concurrency theory and coalgebraic semantics. Among the most prominent notions of equivalence is bisimilarity, which provides a way to compare systems based on their observable behavior rather than their internal structure. In classical models, such as transition systems, bisimilarity relates states in such a way that every move of one system can be matched by a move of the other.  Dubut, Goubault, and Goubault-Larrecq \cite{DBLP:conf/icalp/DubutGG15} proposed a categorical version of this idea in the setting of directed algebraic topology, where systems are represented not merely by states and transitions, but by diagrams with values in algebraic categories. This approach enables the comparison of systems with rich algebraic structure, including linear dynamical systems and weighted automata.

Dubut \cite{DBLP:conf/RelMiCS/Dubut20} introduced \emph{finitary diagrams} as a finite, algebraic setting in which this categorical notion of bisimilarity can be studied algorithmically.  A finitary diagram is a functor from a finite poset to a category of finite-dimensional vector spaces. While deciding the bisimilarity of two diagrams, there are two problems: first, finding out how to relate the executions and second, constructing the 
bisimulation, in particular, the isomorphisms. The first part is difficult in general, because 
this relation is necessarily infinite when there are loops. Restricting the input category to a finite poset removes this difficulty and allows us to focus on the algebraic problem of finding suitable isomorphisms between vector spaces.  This makes finitary diagrams a natural setting for understanding the complexity of bisimilarity.

Dubut, Goubault, and Goubault-Larrecq \cite{DBLP:conf/icalp/DubutGG15} proposed a homology theory based on natural systems of abelian groups, meant to reflect directed structure, unlike classical homology which ignores direction.
A \emph{diagram} is a functor $F: C \to A$, where $C$ is a small category and $A$ is a category of observations. Inspired by the theory in \cite{joyalNielsenWinskel1996} of comparing transition systems, Dubut, Goubault, and Goubault-Larrecq \cite{DBLP:conf/icalp/DubutGG15} defined
two diagrams $F: C \to A$ and $G: D \to A$ to be bisimilar if there is a span of open morphisms between them, i.e., a diagram $H : E \to A$ and two open morphisms from $H$ to $F$ and $H$ to $G$, respectively. Dubut \cite{DBLP:conf/RelMiCS/Dubut20} studies equivalent notions of bisimilarity. 
He first defines the notion of bisimulation between two diagrams $F: C \to A$ and $G: D \to A$: A bisimulation should identify pairs of elements $c\in C$ and $d \in D$ such that for any morphism $i: c \to c'$ of $C$, there must be a morphism $j: d \to d'$ of $D$ and an isomorphism $g: F(c') \to G(d')$ satisfying the following commutativity relation (see also Figure~\ref{fig:bisim})
\begin{equation} \label{eq:com}
  g \circ F(i) = G(j) \circ f,
\end{equation}
where $F(i)$ and $G(j)$ are the induced morphisms of $A$. 
He shows that $F$ and $G$ are bisimilar if and only if there is a bisimulation between them. Then he goes on to show that bisimilarity can also be characterized in terms of path logics, in the spirit of \cite{hennessyMilner1980,joyalNielsenWinskel1996}. He defines a logic called diagrammatic path logic and proves that two diagrams are bisimilar if and only if they are logically equivalent, i.e., for every $c \in C$, there is a $d \in D$ such that for every diagrammatic path formula $S$, either $F,c$ and $G,d$ are both a model for $S$ or both are not. 

\subsection{Previous results}

Finding  bisimulations for finitary diagrams typically has a ``combinatorial'' part and an ``algebraic'' part. In the combinatorial part, we have to find an alignment of the objects in $C$ and $D$  whereas in the algebraic part, we need to find the isomorphisms $f$ and $g$. 

As a first tool for computing these isomorphisms, \cite{DBLP:conf/RelMiCS/Dubut20} introduced the existential theory of invertible matrices ($\ETIM_{\Rset})$.  It contains sentences of the form
\[
   \exists_{n_1} X_1 \exists_{n_2} X_2 \dots \exists_{n_k} X_k: \bigwedge_{i = 1}^m P_i(X_1,\dots,X_k). 
\]
Here $n_\kappa \ge 0$ are natural numbers, $1 \le \kappa  \le k$, and $X_\kappa$ are variables that quantify over invertible matrices in $\Rset^{n_\kappa \times n_\kappa}$. $P_j$ is a predicate of the form 
\begin{equation} \label{eq:etim}
     A X_\kappa = X_\mu B \quad \text{for $1 \le \kappa , \mu \le k$}
\end{equation} 
for some matrices $A$ and $B$ of matching sizes and with rational entries. Such systems of equations can be used to model the commutativity relations in \cref{eq:com}. It is easy to see that $\ETIM_{\Rset}$ is a fragment of the existential theory of the reals ($\ETR$). Since $\ETR \in \PSPACE$ \cite{existentialTheoryOfRealsCanny1988some,renegar1992computational}, this implies that $\ETIM_{\Rset} \in \PSPACE$ too. The survey \cite[L-Open2]{DBLP:journals/corr/abs-2407-18006} asks the natural question whether $\ETIM_{\Rset}$ is $\existsR$-complete.

Utilizing the $\PSPACE$ upper bound for $\ETIM_{\Rset}$, Dubut \cite{DBLP:conf/RelMiCS/Dubut20} shows that bisimilarity of finitary diagrams can be decided in $\EXPSPACE$. For this, he guesses the tuples of a bisimulation relation with placeholders for the isomorphisms, implicitly using that if there is  bisimulation, then there is one of at most exponential size. Then to check whether isomorphisms in the guessed relation can be instantiated such that they fulfill the commutativity relations in (\ref{eq:com}), he sets up a system of equations and uses the $\PSPACE$ algorithm for $\ETIM$ on an exponentially large instance to check its feasibility. 

In a similar fashion, Dubut \cite{DBLP:conf/RelMiCS/Dubut20} solves the model checking problem for positive diagrammatic path logic in $\PSPACE$. He first guesses the ``combinatorial'' part of the model checking problem, then sets up a system of equations of the form (\ref{eq:com}), and then invokes the $\PSPACE$ algorithm for $\ETIM_{\Rset}$. 

\subsection{Our contributions}

We begin by showing that the aforementioned algebraic part of finding bisimulations is often easy. In particular, we demonstrate that the existential theory of invertible matrices $\ETIM$ allows for an efficient randomized algorithm over the reals, i.e., $\ETIM_{\Rset} \in \RP$, greatly improving on the upper bound of $\PSPACE$ by \cite{DBLP:conf/RelMiCS/Dubut20}. This is still true when we allow arbitrary linear constraints in the entries of the matrices and not only constraints of the form (\ref{eq:etim}). We call this generalization $\genETIM_{\Rset}$. We obtain our efficient algorithm by reducing the problem to the well-known \emph{polynomial identity testing problem} (PIT).  \cite[L-Open2]{DBLP:journals/corr/abs-2407-18006} asks whether $\ETIM_{\Rset}$ is $\existsR$-complete. Our results answer this question in the negative (assuming $\RP \not= \existsR$). We also show that derandomizing the algorithm  will be difficult, at least for $\genETIM_{\Rset}$, since this would be equivalent to derandomizing symbolic determinant identity testing ($\SDIT_{\Rset}$), which is a major open problem in complexity theory \cite{DBLP:journals/cc/KabanetsI04}.

$\ETIM_{\Rset} \in \RP$ implies that $\ETIM_{\Rset} \in \NP$. In fact, we are able to show that $\ETIM_{\F} \in \NP$ for all fields $\F$. Using this new upper bound for $\ETIM$ readily gives improved upper bounds for testing bisimilarity of finitary diagrams as well as model checking for (negation-free) diagrammatic path logic of finitary diagrams. We call these problems $\BisimFD_{\F}$ and $\posFF_{\F}$ respectively. We get that $\posFF_{\F} \in \NP$ and $\BisimFD_{\F} \in \NEXP$ for all fields $\F$.

For $\posFF_{\F}$, we prove a matching lower bound --- we show that the problem is $\NP$-hard too. In the reduction, we define constrained layered poset problems, which might be of independent interest for showing hardness proofs in the context of finitary diagrams. To relate constrained layered posets to finitary diagrams, we make use of Gabriel's theorem for the representation of path quivers. For $\BisimFD_{\F}$, we go further and present a $\PSPACE$ upper bound when $\F$ is a finite field.

Finally, we study the theory of special linear matrices. Instead of quantifying over invertible matrices, here we quantify over matrices with determinant $1$. This corresponds to isomorphisms that are volume preserving. While we show that the generalized theory of real invertible matrices is in $\RP$, we prove that the generalized theory of special linear matrices is $\existsR$-complete. We leave it as an open question whether this is also true if we only allow for linear constraints of the form $(\ref{eq:etim})$.

\section{Preliminaries}

\subsection{Finitary diagrams}

\begin{definition}[\cite{DBLP:conf/RelMiCS/Dubut20}] A \emph{finitary diagram} $F$ over a field $\F$ consists of the following data:
    \begin{enumerate}
        \item a finite partially ordered set (for short poset) $(C,\leq)$ which describes the domain,
        \item for every $c \in C$, a natural number $F(c)$ (which stands for the vector space $\F^{F(c)}$),
        \item for every pair $c\leq c'$ of $C$, a matrix $F(c\leq c')$ of size $F(c') \times F(c)$, with coefficients from $\F$, such that:
        \begin{itemize}
            \item $F(c\leq c)$ is the identity matrix for all $c \in C$,
            \item for every triple $c\leq c'\leq c''$, $F(c\leq c'') = F(c'\leq c'') \cdot F(c \leq c')$, where ``$\cdot$'' denotes matrix multiplication.
        \end{itemize}
    \end{enumerate}
\end{definition}

\subsection{Bisimilarity}

We first give the general definition of bisimulations in the setting of \cite{DBLP:conf/icalp/DubutGG15} and \cite{DBLP:conf/RelMiCS/Dubut20} and then specialize it to finitary diagrams.

\begin{definition}
A \emph{bisimulation} $R$  between two diagrams $F : C \rightarrow A$ and $G:D \rightarrow A$ is a set of triples $(c,f,d)$ where $c$ is an object of $C$, $d$ is an object of $D$ and $f: F(c) \rightarrow G(d)$ is an isomorphism of A such that:
    \begin{enumerate}
        \item For every $(c,f,d) \in R$ and $i: c \rightarrow c' \in C$, there exists $j: d\rightarrow d' \in D$ and $g: F(c') \rightarrow G(d') \in A$ such that $g \circ F(i) = G(j) \circ f$ and $(c',g,d') \in R$, see \cref{fig:bisim}.
        \item Symmetrically, for every $(c,f,d) \in R$ and $j:d \rightarrow d' \in D$, there exists $i:c \rightarrow c' \in C$ and $g : F(c') \rightarrow G(d') \in A$ such that $g \circ F(i) = G(j) \circ f$ and $(c',g,d')\in R$.
        \item $\forall c \in C: \exists d,f: (c,f,d) \in R$.
        \item $\forall d \in D: \exists c,f: (c,f,d) \in R$.
    \end{enumerate}
\end{definition}

\begin{figure}
\begin{center}
  \begin{tikzpicture}[main/.style = {draw, circle}, node distance=2cm and 2cm] 
        \node (1) {$c$};
        \node [below of = 1](2) {$c'$};
        \node [right of = 1](3) {$F(c)$};
        \node [below of = 3](4) {$F(c')$};
        \node [right = 3cm of 3](5) {$G(d)$};
        \node [below of = 5](6) {$G(d')$};
        \node [right of = 5](7) {$d$};
        \node [below of = 7](8) {$d'$};
        
        \draw[->] (1) --  node[auto] {$i$} (2);
        \draw[->] (3) --  node[auto] {$F(i)$} (4);
        \draw[->] (5) --  node[auto] {$G(j)$} (6);
        \draw[->] (3) --  node[auto] {$f$} (5);
        \draw[->] (4) --  node[auto] {$g$} (6);
        \draw[->] (7) --  node[auto] {$j$} (8);
    \end{tikzpicture}  
\end{center}
\caption{If $f$ identifies $c$ with $d$ and $i$ is a morphism $c \to c'$, then there must be an object $d'$ and an isomorphism $g$ such that the diagram commutes.   \label{fig:bisim}}
\end{figure}

\begin{definition} \label{def:bisim}
    Two diagrams are called \emph{bisimilar} if there is a bisimulation between them.
\end{definition}

We will denote the problem of bisimilarity testing in finitary diagrams by $\BisimFD_{\F}$, where $\F$ is the underlying field of the vector spaces in the diagram. The main result of \cite{DBLP:conf/RelMiCS/Dubut20} on the bisimilarity of finitary diagrams is the following.
\begin{proposition}
    $\BisimFD_{\Rset} \in \EXPSPACE$.
\end{proposition}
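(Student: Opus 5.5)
Our approach follows Dubut's and has two stages: reduce bisimilarity to a structural bound plus an algebraic feasibility check, then decide the check with the known $\PSPACE$ algorithm for $\ETR$.

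\textbf{Stage 1: a structural normal form.} We show that if $F:C\to\mathbf{Vect}$ and $G:D\to\mathbf{Vect}$ are bisimilar, then there is a bisimulation of at most exponential size in $|C|+|D|$. Call the pair of underlying relations $\{(c,d)\}$ with multiplicities the \emph{skeleton} of a bisimulation. We build the bisimulation as a forest of triples. Roots are triples $(c_0,f_0,d_0)$ chosen to cover minimal elements, via conditions 3 and 4. Each triple $(c,f,d)$ gets children, namely one witness $(c',g,d')$ for each $c\le c'$ and one for each $d\le d'$.

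Since $C$ and $D$ are finite posets and each step either stays at or strictly moves up in the order, every branch has length at most $|C|+|D|$. We may take only the strict steps, because the reflexive step $c\le c$ can be witnessed by $(c,f,d)$ itself with $j=\mathrm{id}$. Each node has at most $|C|+|D|$ children. The tree is therefore of size $(|C|+|D|)^{O(|C|+|D|)}$, which is exponential.

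To make this a bisimulation, we restrict a given bisimulation $R$ to the triples reachable in this unfolding. The restriction still satisfies conditions 1 and 2, because every required witness is included, and conditions 3 and 4 hold by the choice of roots. The main subtlety is that conditions 3 and 4 need every $c$ and every $d$ to be covered. We handle this by including one root triple for each element, which still gives exponential size.

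\textbf{Stage 2: the algebraic check.} We guess nondeterministically a shape of such an unfolding: a tree of pairs $(c,d)$ together with, for each edge, the chosen morphisms $c\le c'$ and $d\le d'$. The isomorphisms are left as placeholder variables $X_t$, one for each node $t$, each an invertible matrix of size $F(c_t)$; we require $F(c_t)=G(d_t)$. The commutativity conditions \eqref{eq:com} along the edges become the equations $X_{t'}F(c\le c')=G(d\le d')X_t$. These are exactly $\ETIM_{\Rset}$ constraints of the form \eqref{eq:etim}, with rational data taken from the input.

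Bisimilarity holds if and only if some guessed shape yields a satisfiable $\ETIM_{\Rset}$ sentence. The instance has exponential size. Invertibility is expressed in $\ETR$ by adding variables $Y_t$ with $X_tY_t=\mathbf{I}$. Since $\ETR\in\PSPACE$, the instance can be decided in space polynomial in its size, that is, in exponential space. The nondeterministic guess can be folded in: we iterate over all exponential-size shapes in exponential space, or invoke $\mathsf{NEXPSPACE}=\EXPSPACE$ by Savitch.

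\textbf{Expected main obstacle.} The delicate point is the structural bound in Stage 1: arguing that restricting a given bisimulation to the unfolding preserves conditions 1 to 4. This matters because the same pair $(c,d)$ may require different isomorphisms along different paths, which is why we use a tree and not a set of pairs. Once this is settled, the rest is bookkeeping.
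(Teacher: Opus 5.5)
Your proposal is correct and follows essentially the route the paper attributes to Dubut. You bound the size of a bisimulation by an exponential-size unfolding of witnesses, which corresponds to the paper's lemma extracting a sub-bisimulation of size $2^{\poly(n)}$. You then guess its combinatorial shape with placeholder matrices, decide the resulting exponential-size $\ETIM_{\Rset}$ instance with the $\PSPACE$ algorithm for $\ETR$, and absorb the guess via Savitch.

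One minor remark: roots for the minimal elements of $C$ and $D$ already suffice for conditions 3 and 4, since every element of a finite poset lies above a minimal one. Adding a root per element is harmless, though.
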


\subsection{Diagrammatic path logic and finitary formulae}
\label{subsec:FF}

\cite{DBLP:conf/RelMiCS/Dubut20} introduces the so-called \emph{diagrammatic path logic}, which is similar to the logic
introduced by \cite{hennessyMilner1980} for transition systems or to path logics developed by \cite{joyalNielsenWinskel1996}. \emph{Finitary formulae} are an instance of diagrammatic path logic for finitary diagrams. The syntax is as follows.
\begin{description}
\item[Object formulae:] $S ::= [n] P$ with $n \in \Nset$
\item[Morphism formulae:] $P ::= \langle M \rangle P\ |\ ?S\ |\ \neg P\ |\  P_1 \wedge P_2\ |\ \top $,
\end{description}
where $M$ is a matrix over some field $\F$. Here, $[n]P$ asserts that the current object represents the vector space $\F^n$,
$\langle M\rangle P$ ``fires'' a transition via the matrix $M$,
``?'' transitions back to object-level evaluation, $\neg$ and $\wedge$ are standard Boolean connectives, and
$\top$ is the tautology.

The semantics are as follows: For a diagram $F: \calC \to \calA$, an object $c \in \calC$, and an isomorphism $f$ of $\cal A$
of the form $f : \F^{F(d)} \to \F^{F(d)}$ for some $d$, we define $F, c \models S$ for an object formula $S$, and $F, f, d \models P$ for a morphism formula $P$ by induction on the structure:
\begin{enumerate}
    \item $F,c \models [n]P$ iff $F(c) = n$ and $F,f,c \models P$ for some isomorphism $f: \F^{F(c)} \to \F^{F(c)}$.
    \item $F,f,c \models \langle M \rangle P$ iff there is a $c \le c'$ in $\cal C$ and an isomorphism $f'$ of $\F^{F(c')}$ such that $M f = f' F(c \le c')$.
    \item ? switches back to object formulae, $F, f, c \models ?S$
    iff $F,c \models S$.
    \item Conjunction has the usual semantics, i.e., $F, f, c \models P_1 \wedge P_2$ iff $F, f, c \models P_1$ and $F, f, c \models P_2$.
    \item The same is true for negation, $F, f, c \models \neg P$ iff $F, f, c \not \models P$.
    \item Finally, $\top$ is always satisfied, i.e., $F, f, c \models \top$ always holds.
\end{enumerate}
This setup closely mirrors labeled transition or path logics, but here the ``labels'' are matrices over a field.

Let $\FF$ denote all triples $(F,c,S)$, where $S$ is a finitary object formula, such that $F,c \models S$. A finitary formula is called \emph{positive} if it does not contain any negations. Let $\posFF_{\F}$  denote the subset of $\FF$ corresponding to positive finitary formulas, with $\F$ being the underlying field. \cite[Thm. 9]{DBLP:conf/RelMiCS/Dubut20} shows the following:
\begin{proposition}
    $\posFF_{\Rset} \in \PSPACE$.
\end{proposition}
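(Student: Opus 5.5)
The statement is Dubut's result that $\posFF_{\Rset} \in \PSPACE$. I would split the check into a combinatorial guessing phase followed by one algebraic feasibility test, and then invoke the $\PSPACE$ upper bound for $\ETIM_{\Rset}$ (which follows from $\ETR \in \PSPACE$). Since $\mathsf{NPSPACE} = \PSPACE$ by Savitch's theorem, it suffices to give a nondeterministic polynomial-space procedure.

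\textbf{Unfolding the formula.} Take an instance $(F,c,S)$ with $S$ positive. The only nondeterminism in the semantics is existential:
\begin{itemize}
\item each object subformula $[n]P$ evaluated at an object $c$ introduces a fresh isomorphism variable $X$, an invertible $n\times n$ matrix, provided the check $F(c)=n$ passes;
\item each modality $\langle M\rangle P$ evaluated with current isomorphism $f$ at object $c$ requires choosing some $c' \ge c$ in $C$ and introduces a fresh invertible matrix $X'$ of size $F(c')$ subject to $M X = X' F(c\le c')$.
\end{itemize}
Conjunctions evaluate both branches at the same $(f,c)$, and $?S$ switches back to object level. Because no negation occurs, every choice is existential, so satisfaction is equivalent to the following. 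There is an assignment of objects $c_v \in C$ to the occurrences $v$ of modalities and object formulae in the syntax tree of $S$, such that all dimension checks $F(c_v)=n$ pass and the conjunction of all the resulting equations is satisfiable by invertible matrices.

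Each syntax-tree node gets exactly one object and one matrix variable, so this data is polynomial in $|S|$ and $|F|$. Each equation $M X_\kappa = X_\mu B$ with $B = F(c\le c')$ is exactly of the form (\ref{eq:etim}). If $M$ has the wrong size for $F(c')$, the constraint is unsatisfiable and that guess is rejected.

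\textbf{Proof steps.}
\begin{enumerate}
\item Guess the object assignment in polynomial time.
\item Verify the order constraints $c \le c'$ and the dimension checks.
\item Build the $\ETIM_{\Rset}$ sentence: one existential quantifier per node and one conjunct per modality.
\item Decide it in $\PSPACE$ via the $\ETR$ algorithm.
\end{enumerate}
The whole procedure runs in $\mathsf{NPSPACE} = \PSPACE$. One could even enumerate the assignments deterministically in polynomial space.

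\textbf{Main obstacle.} The step needing the most care is correctness of the flattening. The recursive semantics quantifies a fresh $f$ inside each $[n]$ and each $\langle M\rangle$, and the resulting witnesses must be consistent only along each branch of the syntax tree. Different conjuncts may choose different $c'$ and different $f'$, so each occurrence must get its own variables rather than sharing them. I would prove by induction on the structure of $P$ that $F,f,c\models P$ holds if and only if there exist an assignment for the subtree and invertible matrices satisfying the subtree's equations with the root variable fixed to $f$. Positivity is exactly what makes this induction go through, since all quantifiers remain existential and commute with conjunction.
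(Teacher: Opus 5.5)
Your proposal is correct and matches the argument the paper attributes to Dubut, which it later writes out as the nondeterministic reduction $\posFF_{\F} \leNP \ETIM_{\F}$: guess the object $c'$ at each $\langle M\rangle$ occurrence, introduce a fresh matrix variable for every $[n]$ and $\langle M\rangle$ together with the equations $M X = X' F(c \le c')$, and decide the resulting $\ETIM_{\Rset}$ sentence via $\ETR \in \PSPACE$. Enumerating the polynomial-size guesses in polynomial space then gives the bound, and your induction on $P$ with fresh, branch-local variables is the right correctness argument.
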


\subsection{Existential theory of the reals}

The \emph{existential theory of the reals} ($\ETR$) is the decision problem of determining the truth of formulas of the form
\begin{equation}
\exists x_1, \ldots, x_n \; : \;
\Big(p_1(x_1, \ldots, x_n) \ \bowtie_1 \ 0\Big) \;\land\; \cdots \;\land\;
\Big(p_m(x_1, \ldots, x_n) \ \bowtie_m \ 0\Big), \label{eq:etr}
\end{equation}
where each $p_i$ is a multivariate polynomial with integer (or rational) coefficients, and
\[
\bowtie_i \;\in\; \{=,\ <,\ \le,\ >,\ \ge\}.
\]
The complexity class $\existsR$ is the set of all languages polynomial-time many-one reducible to the $\ETR$ problem. It satisfies
 $\NP \subseteq \existsR \subseteq \PSPACE$.
By now, there is an abundance of complete problems for $\existsR$ known, see the recent compendium \cite{DBLP:journals/corr/abs-2407-18006}.

\subsection{Existential theory of invertible matrices}

Dubut \cite{DBLP:conf/RelMiCS/Dubut20} defines the \emph{existential theory of invertible matrices} as an intermediate problem. It contains sentences of the form
\[
   \exists_{n_1} X_1 \exists_{n_2} X_2 \dots \exists_{n_k} X_k: \bigwedge_{i = 1}^m P_i(X_1,\dots,X_k). 
\]
Here $n_\kappa \ge 0$ are natural numbers, $1 \le \kappa  \le k$ and $X_\kappa$ are variables that quantify over invertible matrices in $\F^{n_\kappa \times n_\kappa}$, for some field $\F$. $P_j$ is a predicate of the form $A X_\kappa = X_\mu B$ for $1 \le \kappa , \mu \le k$ for some matrices $A$ and $B$ of matching sizes and with entries from $\F$. $\ETIM_{\F}$ is the set of all true sentences of the above form. 

The predicates were chosen to be of the above form because they naturally appear in the case of finitary diagrams. We can also consider a more general problem where each $P_i$ is an arbitrary affine linear equation in the entries of the matrices $X_1,\dots,X_k$. We call this problem $\genETIM_{\F}$, as it is a generalization of $\ETIM_{\F}$. 

\subsection{Nondeterministic reductions}

As a tool to prove the containment of problems in $\NP$ or $\NEXP$, we will use \emph{nondeterministic reductions}, which already implicitly appear in \cite{DBLP:conf/RelMiCS/Dubut20}.

\begin{definition}
    A language $A$ is nondeterministically polynomial time many-one reducible to $B$ if there is a deterministically polynomial time computable function $f$ with two inputs such that for all $x$: $x \in A$ iff there is a $y$ with $|y| \le \poly(|x|)$ such that $f(x,y) \in B$. We write $A \leNP B$.
\end{definition}

\begin{proposition} \label{prop:reduc:NP}
    If $A \leNP B$ and $B \in \NP$, then $A \in \NP$.
\end{proposition}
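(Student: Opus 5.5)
The plan is to compose the reduction with the verifier for $B$, which yields a verifier for $A$. Since $B \in \NP$, there is a polynomial $q$ and a deterministic polynomial-time verifier $V_B$ such that for all $z$: $z \in B$ iff there is a $w$ with $|w| \le q(|z|)$ and $V_B(z,w)$ accepts. From $A \leNP B$ we have a deterministic polynomial-time $f$ and a polynomial $p$ with: $x \in A$ iff there is $y$, $|y| \le p(|x|)$, such that $f(x,y) \in B$.

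Define the verifier $V_A$ for $A$. It takes $x$ and a certificate consisting of a pair $(y,w)$, encoded in some standard self-delimiting way. On this input it checks that $|y| \le p(|x|)$, computes $z = f(x,y)$, checks that $|w| \le q(|z|)$, and runs $V_B(z,w)$.

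The running time is polynomial. Since $f$ runs in time $r(|x|+|y|)$ for some polynomial $r$, we have $|z| \le r(|x|+p(|x|))$, which is polynomial in $|x|$. Consequently $|w|$ is bounded by a polynomial in $|x|$, the whole certificate has polynomial length, and $V_A$ runs in time polynomial in $|x|$.

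For correctness, suppose first $x \in A$. Then some admissible $y$ gives $f(x,y) \in B$, some admissible $w$ witnesses this, and $V_A$ accepts $(y,w)$. Conversely, if $V_A$ accepts some $(y,w)$, then $|y| \le p(|x|)$ and $V_B(f(x,y),w)$ accepts. Hence $f(x,y) \in B$, and therefore $x \in A$.

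The only step needing any care is bounding $|f(x,y)|$ polynomially in $|x|$ so that the combined certificate stays polynomial. This follows from $f$ being polynomial-time computable and $|y|$ being polynomially bounded.
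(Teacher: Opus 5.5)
Your proof is correct. You compose the reduction with a verifier for $B$, use the pair $(y,w)$ as the certificate, and note that $|f(x,y)|$ is polynomially bounded in $|x|$ because $|y| \le p(|x|)$; this is the standard argument. The paper states the proposition without proof as a routine fact, so your argument is the one it implicitly relies on.
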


We will also need nondeterministic exponential-time reductions. Exponential here means $2^{\poly(n)}$.

\begin{definition}
    A language $A$ is nondeterministically exponential-time many-one reducible to $B$, if there is an exponential time computable function $f$ with two inputs such that for all $x$: $x \in A$ iff there is a $y$ with $|y| \le 2^{\poly(|x|)}$ such that $f(x,y) \in B$. We write $A \leNEXP B$.
\end{definition}

\begin{proposition} \label{prop:reduc:NEXP}
    If $A \leNEXP B$ and $B \in \NP$, then $A \in \NEXP$.
\end{proposition}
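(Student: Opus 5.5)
The statement to prove is \cref{prop:reduc:NEXP}: if $A \leNEXP B$ and $B \in \NP$, then $A \in \NEXP$. I will compose the nondeterministic reduction with the $\NP$ verifier for $B$ and check that the combined guess-and-verify procedure runs in nondeterministic exponential time.

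Fix the exponential-time computable function $f$ from the definition of $A \leNEXP B$. There is a polynomial $p$ such that for all $x$, $x \in A$ iff some $y$ with $|y| \le 2^{p(|x|)}$ satisfies $f(x,y) \in B$. Since $f$ is computable in time $2^{q(\cdot)}$ for some polynomial $q$, on input $(x,y)$ with $|y| \le 2^{p(|x|)}$ it runs in time $2^{q(|x| + 2^{p(|x|)})}$. This looks doubly exponential, so I will read ``exponential time computable'' as time $2^{\poly(|x|)}$, measured in the length of the original input $x$. That is the intended meaning, since otherwise the proposition fails. In particular $|f(x,y)| \le 2^{r(|x|)}$ for some polynomial $r$. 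Fixing this convention explicitly is the only real subtlety.

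Let $M$ be a nondeterministic polynomial-time machine for $B$ running in time $s(m)$ on inputs of length $m$. The $\NEXP$ algorithm for $A$ works as follows on input $x$:
\begin{enumerate}
\item guess $y$ with $|y| \le 2^{p(|x|)}$, taking $2^{p(|x|)}$ steps;
\item compute $z = f(x,y)$ deterministically, taking $2^{\poly(|x|)}$ steps, so that $|z| \le 2^{r(|x|)}$;
\item run $M$ on $z$, taking $s(2^{r(|x|)}) = 2^{O(r(|x|) \deg s)}$ steps, which is still $2^{\poly(|x|)}$.
\end{enumerate}

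For correctness: some computation path accepts iff there exist $y$ and an accepting path of $M$ on $f(x,y)$. This holds iff some $y$ in the allowed range has $f(x,y) \in B$, which holds iff $x \in A$. The total running time is $2^{\poly(|x|)}$, so $A \in \NEXP$. The argument is exactly the analogue of \cref{prop:reduc:NP}, with polynomial bounds replaced by exponential ones, using that a polynomial of $2^{\poly(n)}$ is again $2^{\poly(n)}$.
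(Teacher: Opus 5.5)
Your proof is correct. It is the standard composition argument (guess $y$, compute $f(x,y)$, then run the $\NP$ machine for $B$), which the paper takes for granted: it states the proposition without proof as the exponential analogue of \cref{prop:reduc:NP}. You are also right that the running time of $f$ must be read as $2^{\poly(|x|)}$ in the original input length rather than as exponential in $|x|+|y|$: under the literal reading, doubly-exponential nondeterministic time would reduce to a padded problem in $\NP$, and the statement would contradict the nondeterministic time hierarchy theorem.
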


\subsection{A tool from quiver theory and persistent homology}

A finitary diagram is a functor from a poset category to the category of finite-dimensional vector spaces and linear maps. This is closely related to the notion of representations of a quiver and when the poset is totally ordered, it is exactly the same as single-parameter persistent modules. Therefore, tools from quiver theory and persistent homology can prove useful for problems on finitary diagrams. We use one particular tool which follows from the well-known Gabriel's theorem on quivers and also appears in persistent homology as the rank invariant criterion \cite[Theorem 12]{persistence}.

\begin{theorem}\label{rank-invariant}
    We are given the matrices $A_1, \cdots, A_k$ and $B_1, \cdots, B_k$ over some field $\F$, where for each $i \in [k]$, the matrices $A_i$ and $B_i$ have dimension $d_{i+1} \times d_i$ for some integers $d_1,\cdots,d_{k+1}$. For $1 \leq i \leq i' \leq k$, let $A_{[i,i']}$ denote the product $A_{i'}A_{i'-1}\cdots A_{i}$ and $B_{[i,i']}$ denote the product $B_{i'}B_{i'-1}\cdots B_{i}$. Then,
    \begin{align*}
        &\exists \text{ invertible matrices } X_1,\cdots,X_{k+1}\ :\ \forall i \geq 2,\ X_i\cdot A_{i-1} = B_{i-1} \cdot X_{i-1}\\
        &\iff \text{ for all } 1 \leq i \leq i' \leq k,\ \rk A_{[i,i']} = \rk B_{[i,i']}\ .
    \end{align*}
\end{theorem}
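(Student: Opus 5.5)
The plan is to treat the matrices $A_1,\dots,A_k$ and $B_1,\dots,B_k$ as two representations $V_A$ and $V_B$ of the linearly oriented path quiver $1 \to 2 \to \cdots \to k+1$ (type $A_{k+1}$). The vector space at vertex $i$ is $\F^{d_i}$ and the arrow $i \to i+1$ acts by $A_i$, respectively $B_i$. A tuple of invertible matrices $X_1,\dots,X_{k+1}$ with $X_{i+1}A_i = B_iX_i$ is exactly an isomorphism of quiver representations $V_A \cong V_B$. So the statement says that two representations of a linear $A_{k+1}$ quiver are isomorphic iff all their ranks $\rk A_{[i,i']}$ agree.

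The direction ``$\Rightarrow$'' is immediate. From the commutation relations we get by induction $X_{i'+1}A_{[i,i']} = B_{[i,i']}X_i$. Since $X_i$ and $X_{i'+1}$ are invertible, $A_{[i,i']}$ and $B_{[i,i']}$ are equivalent matrices and hence have equal rank.

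For ``$\Leftarrow$'' I will invoke Gabriel's theorem, or more elementarily the interval decomposition used in persistence. Every finite-dimensional representation of the linear quiver decomposes, uniquely up to isomorphism by Krull--Schmidt, as a direct sum of interval modules $\F_{[s,t]}$ with $1\le s\le t\le k+1$. Such a module has $\F$ at vertices $s,\dots,t$, identity maps inside the interval, and zero elsewhere. Hence $V_A \cong V_B$ iff the multiplicities $m_A(s,t)$ and $m_B(s,t)$ agree for all intervals.

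The key computation expresses these multiplicities through ranks. Write $r(i,j)$ for the rank of the composite map from vertex $i$ to vertex $j$, with $i \le j$. Thus $r(i,j)=\rk A_{[i,j-1]}$ for $i<j$, and $r(i,i)=d_i$. Each interval $[s,t]$ contributes $1$ to $r(i,j)$ exactly when $s\le i$ and $j\le t$. Inclusion--exclusion then gives
\[
m(s,t) = r(s,t) - r(s-1,t) - r(s,t+1) + r(s-1,t+1),
\]
with the convention that $r$ is $0$ whenever an index falls outside $[1,k+1]$. The dimensions $d_i$ coincide for $A$ and $B$ by hypothesis, and all the other ranks agree by assumption. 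So the multiplicities agree, the representations are isomorphic, and the components of an isomorphism give the desired $X_i$.

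The main obstacle is the decomposition into interval modules. If I do not simply cite Gabriel's theorem or \cite[Theorem 12]{persistence}, I would prove it by induction on $k$. Choose bases compatible with the filtrations by images and kernels, splitting each space $V_i$ into the image of the incoming map plus a complement and, compatibly, the kernel of the outgoing map composites. This is standard but fiddly. Uniqueness of the multiplicities then follows directly from the rank formula above, so it does not need Krull--Schmidt separately.
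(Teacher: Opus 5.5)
Your proposal is correct and follows essentially the paper's route. The paper likewise views the data as two representations of the linearly oriented $A_{k+1}$ path quiver and derives the criterion from their decomposition into interval representations, citing Gabriel's theorem and the persistence rank-invariant theorem. Your explicit inclusion--exclusion formula, which recovers the interval multiplicities from the ranks, together with your direct argument for $\Rightarrow$, supplies the step the paper leaves implicit when it says the theorem ``follows from the uniqueness of the representation.''
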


This criterion follows from the classification of representations of the Dynkin quiver $A_n$ (a special case of Gabriel's theorem). For more background on the criterion and its connection to Gabriel's theorem, refer to \cref{gabriel}.

\section{Overview of results and techniques}

We give a comprehensive overview of our results and explain the main techniques used in our proofs.

\subsection{Existential theory of invertible matrices}

Dubut introduces the existential theory of invertible matrices ($\ETIM$) to get upper bounds for deciding bisimilarity of finitary diagrams and model checking of finitary formulae. An instance of $\ETIM_{\F}$ is of the form $\exists_{n_1} X_1 \exists_{n_2} X_2 \dots \exists_{n_k} X_k: \bigwedge_{i = 1}^m P_i(X_1,\dots,X_k)$, where each constraint is of the form $A X_i = X_j B$ for matrices $A$ and $B$ over $\F$. If we consider the case of the real field, an instance of $\ETIM$ can be easily translated into an equivalent instance of the existential theory of the reals. This leads \cite[L-Open2]{DBLP:journals/corr/abs-2407-18006} to ask the natural question whether $\ETIM_{\Rset}$ is $\existsR$-complete. We answer this question in the negative (assuming $\RP \not= \existsR$) by giving an efficient randomized algorithm.

Dubut uses $\ETIM_{\F}$ to verify the commutativity relations of the form $(\ref{eq:etim})$, which explains the structure of the constraints $P_i$. It turns out that our algorithm can also handle the case of arbitrary affine linear equations in the entries of the matrices, which we call $\genETIM_{\F}$.

\begin{mainthm}[\cref{thm:genETIM:RP}]
  $\genETIM_{\F} \in \RP$ for infinite fields $\F$.
\end{mainthm}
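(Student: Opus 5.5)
We reduce $\genETIM_{\F}$ to polynomial identity testing. An instance consists of matrix variables $X_1,\dots,X_k$ of sizes $n_1,\dots,n_k$, together with a system of affine linear equations in their $N=\sum_\kappa n_\kappa^2$ entries. The constraints are jointly satisfiable by some tuple of invertible matrices iff the affine subspace $L\subseteq \F^N$ cut out by the equations contains a point at which $\prod_\kappa \det X_\kappa \neq 0$.

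\textbf{Parametrize the solution space.} First we solve the linear system by Gaussian elimination in polynomial time. If it is inconsistent, we reject. Otherwise we write $L=\{v_0+\sum_{j=1}^r t_j v_j : t\in\F^r\}$, where $v_0$ is a particular solution and $v_1,\dots,v_r$ is a basis of the homogeneous solution space. The bit sizes of the $v_j$ stay polynomial, as usual for Gaussian elimination over $\mathbb{Q}$. For a general field we work over the field generated by the input entries and count arithmetic operations. Substituting gives matrices $X_\kappa(t)$ whose entries are affine linear forms in $t_1,\dots,t_r$. The instance is true iff the polynomial
\[
p(t)=\prod_{\kappa=1}^k \det X_\kappa(t)
\]
is not identically zero as a function on $\F^r$. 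Because $\F$ is infinite, this holds iff $p$ is nonzero as a formal polynomial.

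\textbf{Randomized test.} Since $\F[t]$ is an integral domain, $p\neq 0$ iff every factor $\det X_\kappa(t)$ is a nonzero polynomial. Each factor is a symbolic determinant of total degree at most $n_\kappa$. We therefore apply the Schwartz--Zippel lemma to each factor separately, or equivalently to $p$ itself, whose degree is at most $\sum_\kappa n_\kappa$. We pick a finite subset $S\subseteq\F$ of size at least $2\sum_\kappa n_\kappa$; for instance $S=\{0,1,\dots,2\sum_\kappa n_\kappa\}$ when $\operatorname{char}\F=0$, and distinct elements from the infinite field otherwise. We draw $t$ uniformly from $S^r$ and evaluate each $\det X_\kappa(t)$ by Gaussian elimination over the rationals or the base field, which takes polynomial time with polynomially bounded bit sizes. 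We accept iff all determinants are nonzero.

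\textbf{Correctness and the main obstacle.} If the instance is false, $p\equiv 0$, so some $\det X_\kappa(t)$ vanishes for every choice of $t$. We never accept, so there are no false positives. If the instance is true, $p\neq 0$, and we reject with probability at most $\deg p/|S|\le 1/2$. This is exactly the one-sided error required for $\RP$.

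The main point needing care is the equivalence between ``exists invertible solution'' and ``$p\not\equiv 0$''. It uses the infiniteness of $\F$: a nonzero polynomial over an infinite field has a non-root. That non-root is a point of $L$ at which all $X_\kappa$ are simultaneously invertible.

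A secondary issue is the bit complexity when the entries of $v_j$ and the evaluation points are rational. The standard polynomial bounds for Gaussian elimination and determinant computation (e.g.\ Edmonds' or Bareiss' algorithm) handle this.
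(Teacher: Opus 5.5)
Your proposal is correct and follows essentially the same route as the paper: solve the affine system, parametrize its solution space, substitute into the matrices, and use the Schwartz--Zippel lemma to test whether the resulting symbolic determinants are nonzero polynomials. The only cosmetic difference is how the error is bounded: the paper tests each $\det X_h'$ separately with error at most $1/(2k)$ and takes a union bound, whereas you evaluate all determinants at a single random point and bound the error via the degree of their product.
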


\begin{proof}[Proof overview] In a $\genETIM$ instance, we have linear equations from the constraints $P_1,\dots,P_m$, and polynomial inequalities involving the determinants, expressing that the matrices $X_1,\dots,X_k$ are invertible. First, we parametrize the solution space of the linear system using free variables and substitute this parametrization into the matrix variables. The instance is satisfiable if and only if after the substitution, all of the determinants are nonzero polynomials. This can be tested with the famous Schwartz-Zippel lemma. We need the infiniteness of the field to sample enough points from it for the use of the lemma.
\end{proof}

One can ask the question whether our algorithm can be derandomized. This turns out to be a hard problem, since it is equivalent to the complement of the symbolic determinant identity problem, whose derandomization over the rationals/reals in particular implies strong circuit lower bounds (see \cite{DBLP:journals/cc/KabanetsI04}). 

\begin{mainthm}[\cref{thm:genETIM:SDIT}]
     $\genETIM_{\F}$ is deterministically polynomial time equivalent to the complement of the symbolic determinant identity testing problem $\SDIT_{\F}$ for all fields $\F$.
\end{mainthm}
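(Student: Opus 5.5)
We prove two deterministic polynomial-time many-one reductions: one from $\genETIM_{\F}$ to the complement of $\SDIT_{\F}$, and one in the opposite direction. Here $\SDIT_{\F}$ is the set of symbolic matrices $M = M_0 + \sum_j y_j M_j$ with entries affine linear in the variables $y_j$ and coefficients from $\F$ such that $\det M \equiv 0$ as a polynomial. The complement consists of those $M$ whose determinant is a nonzero polynomial.

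\emph{From $\genETIM_{\F}$ to $\overline{\SDIT_{\F}}$.} We follow the proof overview of \cref{thm:genETIM:RP}, but stop before the randomized test.
\begin{enumerate}
\item Collect all affine linear constraints in the entries of $X_1,\dots,X_k$ and solve the system by Gaussian elimination. This takes polynomial time, and the bit sizes stay polynomial over $\Rset$ or $\mathbb{Q}$; over a general field we count field operations.
\item If the system is inconsistent, output a fixed instance with identically zero determinant, e.g.\ the $1\times 1$ zero matrix.
\item Otherwise, write the solution set as $v_0 + \sum_j y_j v_j$ with free variables $y_j$. Substituting this gives symbolic matrices $X_\kappa(y)$ with affine linear entries.
\item Form the block-diagonal matrix $M(y) = \operatorname{diag}(X_1(y),\dots,X_k(y))$. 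Then $\det M = \prod_\kappa \det X_\kappa$.
\end{enumerate}
Since the polynomial ring is an integral domain, $\det M$ is a nonzero polynomial iff every $\det X_\kappa$ is a nonzero polynomial.

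The main subtlety is finite fields. A nonzero polynomial need not have a nonvanishing point in $\F^m$, for example $y^q - y$ over $\F_q$. So the equivalence ``instance satisfiable iff $\det M \not\equiv 0$'' holds directly only for infinite $\F$. I would handle this by interpreting $\SDIT_{\F}$ as asking for a point in the algebraic closure, or equivalently as the formal polynomial question. For finite fields, I would check against the paper's definition whether $\SDIT$ is meant as identity testing of the formal polynomial. If it is not, I would allow matrices over extension fields in $\genETIM$. This step is the main obstacle, and I expect the paper to adopt the formal-polynomial convention.

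\emph{From $\overline{\SDIT_{\F}}$ to $\genETIM_{\F}$.} Given an $n\times n$ symbolic matrix $M(y)$, introduce one matrix variable $X$ of size $n\times n$, quantified over invertible matrices. Encode the $y_j$ as auxiliary variables by one of two devices:
\begin{itemize}
\item introduce a further invertible matrix variable $Z$ of size $(m+1)\times(m+1)$, whose entries $Z_{1,j}$ serve as $y_j$; or
\item place the $y_j$ as off-diagonal entries of a larger matrix.
\end{itemize}
The cleanest choice is the second: take one matrix variable $X$ of size $(n+m+1)\times(n+m+1)$ and impose affine constraints so that $X$ is block upper triangular,
\[
X=\begin{pmatrix} M(y) & * \\ 0 & U\end{pmatrix},
\]
where $U$ is upper unitriangular with its first row containing the $y_j$, and $*$ is constrained to $0$. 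Then $\det X = \det M(y)$, and the $y_j$ range freely over $\F$. Hence the instance is satisfiable iff $\det M(y)\neq 0$ for some $y\in\F^m$. Over infinite fields this is equivalent to $\det M\not\equiv 0$. For finite fields, the same caveat as in the first direction applies and is resolved by the same convention.

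Both maps are clearly deterministic polynomial time, which gives the claimed equivalence.
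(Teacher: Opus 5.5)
\textbf{Finite fields.} The one real problem is how you treat finite fields: you resolve the convention question in the wrong direction. Under the formal-polynomial reading of $\SDIT_{\F}$, your first reduction is incorrect for finite $\F$.

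Over $\F_2$, the instance $\exists_1 X_1\,\exists_1 X_2 : x^{(2)}_{1,1} = x^{(1)}_{1,1}+1$ has no solution, because one of $x, x+1$ is always $0$. Yet your construction outputs $\operatorname{diag}(y,\,y+1)$, whose determinant $y^2+y$ is a nonzero polynomial. Allowing extension fields in $\genETIM$ does not rescue this, because it changes the problem being reduced.

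The convention that makes the statement true for all fields is the evaluation reading: $\det A(y)=0$ for every $y\in\F^m$. It coincides with formal identity when $\F$ is infinite. This is also what the paper must intend. Over a finite field the formal version lies in $\mathsf{coRP}$ (Schwartz--Zippel over an extension field), so under the formal reading the paper's remark that $\SDIT_{\F}$ is $\coNP$-complete over finite fields would give $\NP\subseteq\RP$.

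Under the evaluation reading there is nothing left to fix. Both of your reductions actually establish ``satisfiable iff $\det M(y)\neq 0$ for some $y\in\F^m$'', so they go through verbatim for every field. Your block-diagonal product is essential here. The same example shows that testing each $\det X_\kappa$ separately, as a literal reading of Algorithm~\ref{alg:genETIM} suggests, would be wrong over $\F_2$: $y$ and $y+1$ each fail to vanish somewhere, but never at a common point.

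\textbf{Comparison of routes.} Apart from this, your first direction matches the paper. Your second direction is correct but takes a different route.

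The paper first applies a Schur-complement trick, $B=\left(\begin{smallmatrix}A_0 & U\\ V & I_N\end{smallmatrix}\right)$ with $N=n^2m$, to reach a matrix whose entries are single variables or constants. It then encodes this matrix using only constraints of the form $x_{i,j}=a_{i,j}$ and $x_{i,j}=x_{i',j'}$.

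You instead use the full affine constraints of $\genETIM$ directly. You impose $X_{a,b}=(M_0)_{a,b}+\sum_j (M_j)_{a,b}\,y_j$, with the $y_j$ placed in free entries of a unitriangular block. This is shorter and gives a smaller instance ($n+m+1$ rather than $n+n^2m$). The paper's route gives the stronger fact that this direction already holds for the fragment of $\genETIM$ with only equality-type constraints.
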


\begin{proof}[Proof overview] The reduction from $\genETIM_{\F}$ to the complement of $\SDIT_{\F}$ is already implicit in the proof strategy of the above theorem.
For the other direction, we essentially use the linear constraints of $\genETIM_{\F}$ to specify the affine linear entries of the $\SDIT_{\F}$ instance. 
\end{proof}

\subsection{First results through our algorithm for ETIM}

Dubut essentially constructs nondeterministic reductions from $\BisimFD$ and $\posFF$ to $\ETIM$. The first one is an exponential time reduction, the second one is polynomial-time. In both reductions, he nondeterministically guesses the assignments between states and for each such guess, he creates an equation of the form (\ref{eq:etim}) to check the existence of a matching isomorphism. All these checks can be pushed to the end, making the algorithms by Dubut essentially nondeterministic reductions. Since $\ETIM_{\F} \in \NP$ (using \cref{thm:genETIM:RP} for infinite fields and trivially for finite fields) and $\NP$ is closed under nondeterministic polynomial time reductions and the closure of $\NP$ under nondeterministic exponential time reductions is $\NEXP$, we get the following results

\begin{mainthm}[\cref{cor:Bisim:NEXP}]
$\BisimFD_{\F} \in \NEXP$ for all fields $\F$.
\end{mainthm}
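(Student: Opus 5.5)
We combine a nondeterministic exponential-time reduction $\BisimFD_{\F} \leNEXP \ETIM_{\F}$ with $\ETIM_{\F} \in \NP$ and then apply \cref{prop:reduc:NEXP}. For infinite $\F$, membership of $\ETIM_{\F}$ in $\NP$ follows from \cref{thm:genETIM:RP}, since $\RP \subseteq \NP$ and $\ETIM_{\F}$ is a special case of $\genETIM_{\F}$. For finite $\F$ it is trivial: guess all matrices $X_\kappa$ explicitly (polynomially many field elements), then check invertibility and the linear constraints in polynomial time.

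The reduction follows Dubut's construction. Let $F$ on $(C,\le)$ and $G$ on $(D,\le)$ be given. We first argue that if a bisimulation exists, then one exists with at most exponentially many triples. Replace the isomorphism component by a placeholder index; the \emph{skeleton} of a bisimulation is then a finite structure on labels from $C \times D$. Concretely, unfold the bisimulation along chains $c_0 < c_1 < \dots$ and $d_0 < d_1 < \dots$. Since $C$ and $D$ are finite posets, strict chains have length at most $|C|$ or $|D|$. We may also take each successor witness to be strict or trivial: the trivial morphism $c \le c$ is matched by $d\le d$ with $g=f$. Hence each triple needs only boundedly many successor witnesses, one per pair (triple, $c \le c'$) and one per (triple, $d\le d'$). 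Starting from one triple per $c\in C$ and per $d \in D$ and closing under witness requirements yields a tree of depth at most $|C|+|D|$ with branching at most $|C|+|D|$, hence $2^{\poly(n)}$ nodes. Each node of this tree is a triple whose isomorphism is a fresh variable.

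The certificate $y$ is this exponential-size skeleton: for every node, its pair $(c,d)$ with $F(c)=G(d)$, and for each required morphism the chosen witness child. The function $f(x,y)$ outputs an $\ETIM_{\F}$ instance. It has one invertible matrix variable $X_v$ of size $F(c)$ for each node $v$, and for each parent--child pair $v \to w$ along $i: c\le c'$, $j: d\le d'$ it has the constraint $G(d\le d')\, X_v = X_w\, F(c\le c')$, which is of the form \cref{eq:etim}. Verifying that the skeleton satisfies the combinatorial closure conditions (items 1--4 of the bisimulation definition, with every required witness present) takes exponential time; if it fails, output a fixed false instance. Correctness has two directions. A bisimulation gives a pruned exponential skeleton together with a satisfying assignment. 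Conversely, a satisfying assignment gives a set of triples that is a bisimulation, because every triple's obligations are witnessed within the tree.

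The main obstacle is the size bound. Naively, a bisimulation may reuse triples, and exploring witnesses could cycle. The tree unfolding avoids both problems by duplicating nodes instead of identifying them. Only witnesses along strictly increasing comparabilities are needed, since reflexive ones are self-witnessed, and leaves at maximal elements need no further witnesses because $c'=c$ is handled trivially. I would also check that transitivity in the poset causes no extra constraints: each needed morphism $c\le c'$ from a node is witnessed directly by a child, so no composition argument is required. Once the reduction is established, \cref{prop:reduc:NEXP} gives $\BisimFD_{\F} \in \NEXP$.
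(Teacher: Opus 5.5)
Your proposal is correct and follows essentially the same route as the paper: Dubut's nondeterministic exponential-time reduction $\BisimFD_{\F} \leNEXP \ETIM_{\F}$, justified by an exponential-size bisimulation bound, combined with $\ETIM_{\F} \in \NP$ and \cref{prop:reduc:NEXP}. The only difference is cosmetic. The paper gets the size bound by pruning a given bisimulation to a sub-bisimulation $S' \subseteq S$, built in rounds from minimal triples. You instead unfold it into a tree of depth and branching at most $|C|+|D|$, with a fresh matrix variable per node (the same tree shape the paper uses in its $\PSPACE$ argument for finite fields), and you handle conditions 3--4 and reflexive morphisms explicitly.
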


\begin{mainthm}[\cref{cor:posFF:NP}]
$\posFF_{\F} \in \NP$ for all fields $\F$.
\end{mainthm}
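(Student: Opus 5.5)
The plan is to combine Dubut's nondeterministic reduction from $\posFF_{\F}$ to $\ETIM_{\F}$ with the bound $\ETIM_{\F} \in \NP$, and then apply \cref{prop:reduc:NP}. The whole argument rests on two ingredients: the reduction must have polynomially bounded certificates, and $\ETIM_{\F}$ must lie in $\NP$ for every field.

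\textbf{Step 1: $\ETIM_{\F} \in \NP$ for all $\F$.} For infinite $\F$, \cref{thm:genETIM:RP} gives $\genETIM_{\F} \in \RP \subseteq \NP$, and $\ETIM_{\F}$ is a special case. For finite $\F$, I guess the invertible matrices directly. Each entry has constant size, so the certificate is polynomial. The verifier then checks the linear equations $A X_\kappa = X_\mu B$ and checks invertibility, e.g.\ by Gaussian elimination over $\F$.

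\textbf{Step 2: a nondeterministic polynomial-time reduction $\posFF_{\F} \leNP \ETIM_{\F}$.} Take an instance $(F,c,S)$ and unfold the positive formula $S$ along its syntax tree. The certificate $y$ is a map assigning an object of $C$ to every occurrence of a subformula:
\begin{itemize}
\item for each occurrence $[n]P$ evaluated at an object $c_0$, it records $c_0$, and we verify $F(c_0)=n$;
\item for each $\langle M\rangle P$ evaluated at $c_0$, it records a successor $c_1 \ge c_0$.
\end{itemize}
Because there is no negation, conjunction only requires each conjunct to hold under the same isomorphism, and $?$ only switches back to object level.

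I introduce one matrix variable $X_u$ for each morphism-level node $u$ that starts a fresh existential isomorphism. These are the nodes directly below an $[n]$, and the targets $f'$ of $\langle M\rangle$. Conjuncts share their parent's variable, and $?S$ passes to an object formula that introduces a new variable. Each $\langle M\rangle$ node contributes one constraint $M X_u = X_{u'} F(c_0\le c_1)$, which has exactly the form (\ref{eq:etim}). The size of the instance is linear in $|S|$, and matrix dimensions are bounded by the input.

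Correctness follows by induction on $S$. Satisfaction is equivalent to the existence of a choice of objects together with invertible matrices satisfying all constraints. The key point is that without negation every quantifier is existential, so all the existentials can be pulled to the front and the nested semantics becomes one flat conjunction. The guessed objects are then exactly $y$, and the isomorphisms are exactly the $\ETIM$ variables.

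\textbf{Step 3: conclude.} \cref{prop:reduc:NP} finishes the argument.

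The main subtlety is Step 2's bookkeeping. One must make sure that the isomorphism $f$ in $F,f,c\models\langle M\rangle P$ is the same variable across conjuncts, and that no universal quantification sneaks in. I would check this in the induction by stating the invariant explicitly: a node $u$ holds under $f$ at $c_0$ iff the subtree's constraints are satisfiable with $X_u=f$.
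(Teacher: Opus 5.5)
Your proposal is correct and follows the paper's argument: the paper also builds the nondeterministic reduction $\posFF_{\F} \leNP \ETIM_{\F}$ by recursing on the formula, guessing the successor $c'$ for each $\langle M\rangle$, and adding a fresh matrix variable with constraint $M X = Y\,F(c\le c')$. It then combines this with $\ETIM_{\F}\in\NP$ (via $\RP$ for infinite fields and direct guessing for finite ones) and \cref{prop:reduc:NP}. The one detail you leave implicit, which the paper states, is the check that $M$ has $F(c)$ columns and $F(c')$ rows before the constraint is emitted, with rejection otherwise.
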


The upper bound for $\posFF$ is optimal, the one for $\BisimFD$ can be further improved for finite fields.

\subsection{Complexity of bisimilarity checking for finitary diagrams}

\begin{mainthm}[\cref{alt:bisim:PSPACE}]
    $\BisimFD_{\F} \in \PSPACE$ when $\F$ is a finite field.
\end{mainthm}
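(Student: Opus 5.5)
We want $\BisimFD_{\F} \in \PSPACE$ for a finite field $\F$. The plan is to replace the exponential guess of a bisimulation relation by a \emph{greatest fixed point computation} over a polynomial-size description. The key point is that, over a finite field, everything in a bisimulation is discrete: for fixed $c \in C$, $d \in D$ with $F(c) = G(d) = n$, the candidate isomorphisms $f$ range over $\GL_n(\F)$. This set is exponentially large but each element has polynomial size. So a triple $(c,f,d)$ can be written down in polynomial space.

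We work on the finite set $T$ of all triples $(c,f,d)$ with $f \in \GL_{F(c)}(\F)$ and $F(c) = G(d)$. Call a subset $R \subseteq T$ \emph{closed} if it satisfies back-and-forth conditions 1 and 2 of the bisimulation definition. Unions of closed sets are closed, so there is a largest closed set $R^*$. It is the greatest fixed point of the monotone operator $\Phi$ that keeps a triple $(c,f,d)$ iff:
\begin{itemize}
\item for every $c' \ge c$ there are $d' \ge d$ and $g$ with $g\,F(c\le c') = G(d\le d')\,f$ and $(c',g,d')$ still present; and
\item the symmetric condition holds for every $d' \ge d$.
\end{itemize}
The diagrams are bisimilar iff $R^*$ covers every $c \in C$ and every $d \in D$ (conditions 3 and 4).

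Computing $R^*$ by naive iteration would need exponential space to store the current set. Instead, we use the following characterisation: $(c,f,d) \notin R^*$ iff the spoiler has a winning strategy in the associated bisimulation game starting from $(c,f,d)$. In each round, the spoiler picks a side and a successor $c'$ (or $d'$). The duplicator answers with a successor on the other side and an isomorphism $g$ making the square commute. If no such answer exists, the spoiler wins.

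Because the posets are finite and the game always moves upward ($c \le c'$, $d \le d'$), there is a useful simplification. After a strict move on one side, that coordinate strictly increases. Moves with $c' = c$ and $d' = d$ can be handled separately: any $g$ with $g = G(d\le d')f$ is forced, so trivial moves carry no information. The subtle case is a strict move on one side answered by a non-strict move $d' = d$ on the other. Even then, every round changes at least one coordinate, and neither coordinate ever decreases. Hence the game ends after at most $|C| + |D|$ rounds. It is therefore a polynomial-depth alternating game with polynomial-size positions, and the winner can be evaluated by a recursive depth-first procedure. That procedure enumerates moves $c'$, $d'$ and matrices $g \in \GL_n(\F)$ one at a time, and checks the commutation equation directly, so it uses polynomial space. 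Equivalently, this is an APTIME computation, and $\mathrm{APTIME} = \PSPACE$.

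It then remains to verify that the game characterisation is correct. Two directions are needed:
\begin{itemize}
\item \emph{Duplicator wins implies inclusion in $R^*$.} The set of positions from which the duplicator has a winning strategy is closed.
\item \emph{Inclusion in $R^*$ implies duplicator wins.} The duplicator can follow $R^*$ itself.
\end{itemize}
Finally, bisimilarity is decided by checking, for each $c$, whether there exist $d$ and $f$ such that the duplicator wins from $(c,f,d)$, and symmetrically for each $d$. This is again a polynomial-space enumeration.

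The main obstacle I expect is the termination argument for the game. The duplicator's answer can stay put on one side while the spoiler moves strictly on the other. I must check that such stalling moves cannot repeat indefinitely and that the game length is bounded polynomially. The argument that each round strictly increases $(c,d)$ in the product order should settle this. I will verify it carefully, including the degenerate trivial moves where $c' = c$.
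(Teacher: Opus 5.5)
Your proposal is correct and is essentially the paper's argument. The paper writes your spoiler/duplicator game directly as a polynomial-size alternating formula $\Phi$: universal strict moves on either side, existential answers by a weakly larger element of the other poset together with a commuting invertible matrix. It bounds the depth by $|C|+|D|$ via chain length in $C \times D$ and evaluates the formula by brute force in polynomial space. Your detour through the greatest fixed point $R^*$ goes slightly beyond the paper's direct tree-model argument. So does your observation that the spoiler may be restricted to strict moves, since a trivial move is always answered by staying put with $g=f$. Both only make the two correctness directions a bit more transparent.
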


\begin{proof}[Proof overview]
The above reduction approach to $\ETIM$ produces a system of equations of exponential size, therefore, we have to use a different approach. We set up a quantified formula that is true iff the given diagrams are bisimilar. This formula quantifies over Boolean variables and matrices over $\F$. Therefore, we can brute-force over all possibilities in $\PSPACE$. 
\end{proof}

\subsection{Complexity of model checking for finitary formulae}
We already showed that $\posFF$ is in $\NP$. Now we establish its $\NP$-completeness.
\begin{mainthm}[\cref{posFF-hardness}]
    $\posFF_{\F}$ is $\NP$-hard for all fields $\F$.
\end{mainthm}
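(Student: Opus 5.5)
We reduce from an $\NP$-complete combinatorial problem (for instance 3-SAT, or an exact-cover/matching-type problem) through an intermediate \emph{constrained layered poset} problem. The diagram $F$ lives on a layered poset $C$: a root $c_0$ at layer $0$, then layers $L_1,\dots,L_k$, with each element of layer $i$ lying below some elements of layer $i+1$. The dimensions are chosen so that the dimension sequence alone cannot identify the chain. A positive formula
\[
S=[n_0]\langle M_1\rangle\langle M_2\rangle\cdots\langle M_k\rangle\top
\]
(or a conjunction of such branches, using $?S'$ to restart at intermediate objects) forces the existence of a chain $c_0\le c_1\le\dots\le c_k$ and invertible $f_0,\dots,f_k$ with $M_{i}f_{i-1}=f_iF(c_{i-1}\le c_i)$.

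Unrolling the semantics gives exactly the hypothesis of \cref{rank-invariant}, with $A_i=F(c_{i-1}\le c_i)$ and $B_i=M_i$ (up to orientation). So a chain is accepted if and only if, for all $i\le i'$,
\[
\rk F(c_{i-1}\le c_{i'})=\rk M_{[i,i']}.
\]
Satisfaction of $S$ is therefore purely combinatorial: it asks for a path in the layered poset whose ``rank profile'' matches a prescribed profile. Call this the constrained layered poset problem. The first step is to show that this problem is $\NP$-hard. The second is to show that any instance of it can be realized by a finitary diagram together with a positive formula over any field $\F$.

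For realization we use only $0/1$ matrices, coordinate projections and inclusions. Their products have ranks equal to the sizes of intersections of coordinate sets. This keeps the construction field-independent and makes functoriality checkable. We must respect $F(c\le c'')=F(c'\le c'')F(c\le c')$, so the poset should be built so that comparable pairs in different layers have uniquely determined composites. One way is to use tree-like layers, or to assign to each element $c$ a set $S_c$ of basis vectors and let each map be the coordinate map projecting $S_c$ onto $S_c\cap S_{c'}$. This composes correctly provided the sets shrink along chains or intersections behave transitively.

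For $\NP$-hardness, the layers encode variables $x_1,\dots,x_k$, with two elements per layer for true and false. Each element carries a set of ``clause coordinates'': a literal's node keeps the coordinate of clause $j$ if the literal does not satisfy $j$. Along a chain the surviving coordinates are the intersection, so the rank of the full composite counts the unsatisfied clauses. The formula then demands rank $0$ at the end via $M$'s, while the intermediate ranks must also match prescribed values. Here is the subtlety: the rank-invariant criterion constrains all intervals $[i,i']$, not just the total. Prescribing exact intermediate ranks via fixed $M_i$ is too rigid, since different satisfying chains have different intermediate ranks.

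I expect this mismatch to be the main obstacle. To handle it, I would first try to use conjunctions and the $?$ operator to branch into several short formulas, each checking one clause's coordinate. Alternatively, I would pad each clause with dedicated coordinates so that every consistent chain has a fixed rank profile, except for the final rank, which witnesses satisfiability. If intermediate profiles still depend on the assignment, I would pick a problem where the target profile is uniform by design, such as an exact-cover variant where each path must use every coordinate exactly once, so that all interval ranks are determined. Combined with \cref{cor:posFF:NP}, this gives $\NP$-completeness.
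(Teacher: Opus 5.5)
Your framework agrees with the paper's. You use a layered poset and $0/1$ diagonal matrices, so that ranks are sizes of index sets and the construction is field-independent. You then use a formula $[q]\langle M_1\rangle\cdots\langle M_k\rangle$ whose truth \cref{rank-invariant} turns into matching interval ranks along a chain. But the proposal stops where the proof has to happen: you never commit to a source problem or a gadget. The concrete SAT encoding you sketch also cannot work.

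In a poset, $F(c\le c'')$ is a single matrix attached to the pair $(c,c'')$. Functoriality forces $F(c'\le c'')F(c\le c')=F(c\le c'')$ for \emph{every} intermediate $c'$. Hence every composite along a chain depends only on its two endpoints. With $c_0$ fixed, $\rk F(c_0\le c_k)$ depends only on $c_k$, so it cannot count the clauses left unsatisfied by the whole assignment. For your maps $\id_{S_c\cap S_{c'}}$, this shows up as the forced condition $S_c\cap S_{c''}\subseteq S_{c'}$ for all $c<c'<c''$. Under that condition the intersection along a chain collapses to $S_{c_0}\cap S_{c_k}$.

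Your fallbacks do not rescue this. Padding fails for the same reason, since the ``final rank'' is again an endpoint-only quantity. Branching does not help either: in $P_1\wedge P_2$ each conjunct chooses its own continuation, and $?S$ discards the current isomorphism, so separate branches are not tied to one chain.

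What the rank-invariant criterion actually gives you is a conjunction of \emph{binary} constraints, one per pair of layers, each depending only on the two elements chosen there. The paper therefore reduces from \textsc{Clique}, adding a universal vertex $n+1$ so the start can be fixed at $c_{1,n+1}$. The missing ingredient is a label function on the complete layered poset with $|L(c_{i,j},c_{i',j'})|=f(i,i',k,m)-\mathbbm{1}_{(j,j')\notin E}$ that still satisfies $L(a,b)\cap L(b,c)=L(a,c)$ for all $a\le b\le c$.

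This is the $Q,R,S,T$ construction of \cref{CLP-construct}:
\begin{itemize}
\item $R_{i,i',j}$ remembers the start vertex;
\item $S_{i,i',j'}$ remembers the end vertex;
\item $R_{i,i',j}\cap S_{i',i'',j''}=T_{i,i'',j,j''}$ records the edge indicator, independently of the middle element.
\end{itemize}
Taking the $M_i$ from the same construction on the complete graph with loops makes every target interval rank exactly the edge value $f(i,i',k,m)$. This settles your ``uniform profile'' concern: a chain matches iff all its pairs are edges, i.e., iff it picks a clique. Singularity of the $M_i$ forces the chain to be strict. Your exact-cover idea could only fit in the form ``pick pairwise disjoint sets''. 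That is again a clique-type binary condition and needs the same kind of gadget.
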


\begin{proof}[Proof overview]
    We reduce the classic \textsf{CLIQUE} problem to $\posFF$. Given an undirected graph $G$ and a parameter $k$, we have to construct a finitary diagram and a finitary formula over the diagram such that the  formula is satisfiable if and only if $G$ has a clique of size $k$. A finitary diagram $F$ is a functor from a poset to the category of finite-dimensional $\F$-vector spaces and linear maps. Posets can be viewed as transitive and reflexive directed acyclic graphs, which is already a very restricted class of graphs. Further, whenever there is a chain $a \leq b \leq c$ in the poset, the diagram must satisfy $F(b \leq c)\cdot F(a\leq b) = F(a \leq c)$. All these restrictions make the reduction extremely tricky. Therefore, we construct intricate gadgets called constrained layered posets, which help us build the necessary finitary diagram. Then we construct a finitary object formula of the form $[q]\langle M_1\rangle\langle M_2\rangle \cdots \langle M_k\rangle$. The advantage of this special form is that we can use the rank invariant criterion in \cref{rank-invariant} to characterize the satisfiability of such formulas. Finally, we can ensure that the rank-invariance conditions are satisfied if and only if $G$ has a $k$-clique.
\end{proof}

\subsection{Existential theory of special linear matrices}

Instead of taking arbitrary isomorphisms for identifying the elements of the diagram, we could also consider special linear maps, that is, matrices of determinant one. This would put stronger geometric conditions on the similarity, for instance, volumes being preserved. It is natural to explore the complexity of the corresponding problems. We prove that the existential theory of special linear matrices is $\existsR$-complete, in contrast to $\genETIM_{\Rset}$. 
\begin{mainthm}[\cref{cor:genETSM}]
    $\genETSM_{\Rset}$ is $\existsR$-complete.
\end{mainthm}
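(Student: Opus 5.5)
Membership in $\existsR$ is immediate: a $\genETSM_{\Rset}$ instance becomes an $\ETR$ formula by introducing one real variable per matrix entry. Each affine linear constraint is kept as a polynomial equation. Each quantifier over $X_\kappa$ contributes the equation $\det X_\kappa - 1 = 0$, with the determinant expanded symbolically. Since a symbolic determinant of an $n\times n$ matrix may have exponentially many monomials, I would not expand it. Instead I would express $\det X_\kappa = 1$ via an LU-style certificate with polynomially many auxiliary variables. One option is to write $X_\kappa = P L U$ with guessed permutation structure, but that is not existential-first-order friendly. A cleaner route is to use the standard fact that $\ETR$ allows straight-line/circuit-defined polynomials: introduce variables for the intermediate values of a polynomial-size division-free determinant circuit (e.g.\ Berkowitz's algorithm), one equation per gate. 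This gives a polynomial-size $\ETR$ instance.

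For hardness I would reduce from a convenient $\existsR$-complete normal form. The standard choice is feasibility of a system where every constraint is $x=1$, $x+y=z$, or $x\cdot y = z$ (Schaefer's $\ETR_{\mathrm{INV}}$-type normal forms; a variant with $x\cdot y=1$ and additions is also $\existsR$-complete, and the paper's macro $\ETRinv$ suggests this is intended). Linear constraints ($x=1$, $x+y=z$) are directly allowed as affine equations on matrix entries. So the only task is to simulate the nonlinear constraint with determinant-one matrices plus affine constraints. For the inversion form $x\cdot y=1$, take a $2\times 2$ matrix variable $X=\begin{pmatrix} a & b\\ c& d\end{pmatrix}$ in $\mathrm{SL}_2$ and impose the affine constraints $b=0$, $c=0$. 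Then $\det X = ad = 1$, so identifying $a$ with $x$ and $d$ with $y$ forces $xy=1$. For multiplication $xy=z$, I would use a $3\times 3$ or $2\times 2$ matrix with entries $\begin{pmatrix} x & z\\ 1 & y\end{pmatrix}$, giving $xy - z = 1$, and then shift $z$ by an affine variable $z' = z+1$. This realizes $xy = z'$ with all links affine.

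Each original real variable is represented by a designated entry of some matrix, and different occurrences are tied together by affine equalities between entries. The instance is polynomial in size, and any real solution of the $\ETR$ system extends to matrices of determinant one, and conversely. The only mildly delicate step is the membership direction, i.e.\ avoiding exponential determinant expansion. I would handle it with the gate-by-gate encoding of a polynomial-size determinant circuit, which needs only polynomially many existential variables and quadratic equations. The hardness direction relies on choosing the multiplication normal form for which the $2\times 2$ gadget works; if only strict-inequality forms were available, I would additionally encode $x>0$ via $x\cdot w^2 = 1$ using the same gadget twice.
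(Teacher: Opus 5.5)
Your proposal is correct, but your hardness argument takes a different and simpler route than the paper's.

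\textbf{Your route.} You reduce directly from the normal form with constraints $x=1$, $x+y=z$, $x\cdot y=z$, which the paper also cites as $\existsR$-complete. You impose the linear constraints directly as affine equations on matrix entries. Each multiplication becomes a single matrix $\begin{pmatrix} x & z-1\\ 1 & y\end{pmatrix}$, whose determinant-one condition is exactly $xy=z$. This gadget works for all real values, including zero.

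\textbf{The paper's route.} The paper reduces instead from $\ETRinv$. It has to establish that problem first, via the squaring identity and the ball theorem, because it stores variables on the diagonal of $\mathrm{SL}_2$ matrices, which forces them to be nonzero. It then simulates addition with the elaborate $4\times 4$ gadget built from $H$ and $T=H^T$.

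\textbf{What each buys.} The paper's detour means its instances use only matrix equations of the forms $XA=BY$ and $XA=YB$. It therefore proves a stronger statement, closer to the open question about $\ETSM_{\Rset}$. Your gadget needs general affine constraints, such as fixing an entry to $1$ and linking entries of different matrices with a constant shift. That is fine for $\genETSM_{\Rset}$ as stated, but it says nothing about the restricted fragment.

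\textbf{Membership.} Here you are more careful than the paper, which leaves membership implicit. Encoding $\det X_\kappa=1$ through the gate variables of a polynomial-size division-free determinant circuit (e.g.\ Berkowitz) correctly avoids the $n!$-term expansion.

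\textbf{One detail to make explicit.} A variable that occurs only in linear constraints must sit in an entry on which the determinant condition imposes nothing. For example, use the off-diagonal entry of $\begin{pmatrix} 1 & v\\ 0 & 1\end{pmatrix}$, with the other three entries fixed affinely. Putting it in a diagonal gadget would wrongly force it to be nonzero.
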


\begin{proof}[Proof overview]
The reduction is gadget-based. We start from a special case of $\ETR$, called $\ETRinv$, where we are only allowed to use equations of the form  $x = 1$, $x + y = z$, and $xy = 1$, cf.\ \cite{DBLP:journals/jacm/AbrahamsenAM22}. For every variable, we set up a $2 \times 2$-matrix and use linear equations such that the matrices have the form $\left( \begin{smallmatrix}
  x & 0 \\ 0 & x'  
\end{smallmatrix} \right)$. Together with the fact that we quantify over special linear matrices, this enforces $x x' = 1$, that is, $x' = x^{-1}$. This automatically also implements equations of the form $xy = 1$. The tricky part is to implement the additions. This requires a series of cleverly chosen linear equations. In each step, we have to ensure that we do not constrain the matrices too much, since we always have to ensure that there is still a solution in which the determinant of the matrices in the equations is one.
\end{proof}


\section{An efficient algorithm for the existential theory of invertible matrices}
\label{sec:ETIM:RP}

In this section, we present our first main result, an efficient algorithm for the generalized existential theory of invertible matrices. The main insight for designing our algorithm is that we can reduce the existential theory of invertible matrices to polynomial identity testing (PIT). Then we will use the famous \emph{Schwartz-Zippel} Lemma, see e.g.\ \cite{DBLP:journals/fttcs/ShpilkaY10}, to get an efficient randomized algorithm. 

\begin{lemma}[Schwartz-Zippel]
Let $P \in R[x_1,x_2,\dots,x_n]$ be a non-zero polynomial of total degree $d>0$ over an integral domain $R$. Let $S$ be a finite subset of $R$ and let $r_1,r_2, \dots, r_n$ be selected independently and uniformly at random from $S$. Then:
\[
  \Pr[P(r_1,r_2,\dots,r_n) = 0] \leq \frac{d}{|S|}.
\]  
\end{lemma}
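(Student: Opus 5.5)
The plan is to prove the Schwartz--Zippel lemma by induction on the number of variables $n$, following the classical argument.

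\emph{Base case $n=1$.} Here $P$ is a nonzero univariate polynomial of degree $d$ over the integral domain $R$. Such a polynomial has at most $d$ roots in $R$; this uses the absence of zero divisors, via repeated division by monic linear factors $x-a$. Hence at most $d$ elements of $S$ are roots, and the probability is at most $d/|S|$.

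\emph{Inductive step.} Let $n\ge 2$ and write $P$ as a polynomial in $x_1$ with coefficients in $R[x_2,\dots,x_n]$:
\[
P=\sum_{i=0}^{k} x_1^{i}\,Q_i(x_2,\dots,x_n),
\]
where $k$ is the largest exponent with $Q_k\neq 0$. Then $Q_k$ is nonzero of total degree at most $d-k$. Let $\mathcal{E}$ be the event $Q_k(r_2,\dots,r_n)=0$. There are two cases for the degree of $Q_k$:
\begin{itemize}
\item If $Q_k$ is a nonzero constant, then $\Pr[\mathcal{E}]=0$.
\item Otherwise, the induction hypothesis gives $\Pr[\mathcal{E}]\le (d-k)/|S|$.
\end{itemize}

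Now condition on $\neg\mathcal{E}$, i.e.\ on any fixed values of $r_2,\dots,r_n$ with $Q_k(r_2,\dots,r_n)\neq0$. The specialised polynomial $P(x_1,r_2,\dots,r_n)$ is then a nonzero univariate polynomial of degree exactly $k$. By independence, $r_1$ is still uniform on $S$. So the base-case argument bounds the conditional probability that $P$ vanishes by $k/|S|$. If $k=0$, this polynomial is a nonzero constant and the bound $0$ holds trivially.

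Combining the two estimates,
\[
\Pr[P=0]\le\Pr[\mathcal{E}]+\Pr[P=0\mid\neg\mathcal{E}]\le\frac{d-k}{|S|}+\frac{k}{|S|}=\frac{d}{|S|}.
\]

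The only delicate points are bookkeeping ones. First, the degree bound on the leading coefficient must be total degree at most $d-k$; this holds because $x_1^{k}Q_k$ is a part of $P$. Second, the induction hypothesis is stated for $d>0$, so the case where $Q_k$ is constant has to be treated separately, as done above. Third, the root-counting fact must be valid over an arbitrary integral domain rather than just a field. If needed, this can be handled by embedding $R$ into its fraction field, where the usual root bound applies. None of these is a real obstacle, and the whole proof is routine.
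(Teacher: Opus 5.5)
Your proof is correct. It is the classical textbook argument: induction on the number of variables, the leading coefficient in $x_1$ having total degree at most $d-k$, and the univariate root bound via the fraction field. You also handle the edge cases $k=0$ and constant $Q_k$ properly.

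The paper does not prove this lemma; it only quotes it as a standard fact from the literature. The one cosmetic change I would suggest is to bound $\Pr[P=0\wedge\neg\mathcal{E}]$ directly rather than the conditional probability. That way you never condition on a possibly null event.
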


Let 
\begin{equation}
   \exists_{n_1} X_1 \exists_{n_2} X_2 \dots \exists_{n_k} X_k: \bigwedge_{i = 1}^m P_i(X_1,\dots,X_k). \label{eq:genETIM} 
\end{equation}
be the given $\genETIM$-instance. Recall that we quantify over invertible matrices and that $P_1,\dots,P_m$ are affine linear equations in the entries $x_{i,j}^{(h)}$ of the matrices $X_h$, $1 \le i,j \le n_h$, $1 \le h \le k$. The following algorithm decides whether the instance is true:

\begin{algorithm}
\caption{$\genETIM$ by identity testing} \label{alg:genETIM}
\hspace*{\algorithmicindent} \textbf{Input:} A $\genETIM$-instance like in \cref{eq:genETIM}\\
\hspace*{\algorithmicindent} \textbf{Output:} Whether the instance is satisfiable
\begin{algorithmic}[1]
    \State Check whether the affine system $P_1,\dots,P_m$ has a solution. If not, return 0.
    \State Compute equations of the solution space of the form $y_i = L_i(\bar Y)$, $1 \le i \le r$, where $r$ is the codimension of the solution space and $y_1,\dots,y_r$ are entries of the matrices $X_1,\dots,X_k$ and $L_1(\bar Y),\dots,L_(\bar Y)$  are affine linear forms in the remaining variables. 
    \State Let $X_h'$, $1 \le h \le k$ be the matrices obtained by replacing each $y_i$ by the corresponding right hand side $L_i(\bar Y)$.
    \State Check whether $\det(X_h')$ is not the zero polynomial, $1 \le h \le k$, using the Schwartz-Zippel lemma. If all polynomials are nonzero, return 1. Otherwise return 0. 
\end{algorithmic}
\end{algorithm}

\begin{theorem} \label{thm:genETIM:RP}
    $\genETIM_{\F} \in \RP$ for infinite fields $\F$.
\end{theorem}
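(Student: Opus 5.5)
We prove \cref{thm:genETIM:RP} by showing that \cref{alg:genETIM} is correct and runs in randomized polynomial time with one-sided error. The central claim is that the instance~(\ref{eq:genETIM}) is satisfiable if and only if the affine system $P_1,\dots,P_m$ is consistent and each polynomial $\det(X_h')$ is nonzero in $\F[\bar Y]$.

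\textbf{Steps 1--3 (linear algebra).} These are deterministic and run in polynomial time. We perform Gaussian elimination over the field generated by the input entries, e.g.\ $\mathbb{Q}$ for rational inputs. Since $\mathbb{Q}$ is the standard case, bit sizes stay polynomial by the usual Edmonds-type argument. The result is a parametrization $y_i=L_i(\bar Y)$ of the affine solution space. The key property is that the map $\bar Y\mapsto(\bar Y, L(\bar Y))$ is a bijection from $\F^{N-r}$ onto the set of solutions of the linear constraints, where $N=\sum_h n_h^2$. Hence the instance is true if and only if there exists $\bar a\in\F^{N-r}$ with $\det(X_h'(\bar a))\neq 0$ for all $h$.

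\textbf{Characterization.} Suppose some $\det(X_h')$ is the zero polynomial. Then it vanishes at every point, so no solution makes $X_h$ invertible, and the algorithm correctly rejects. Conversely, suppose every $\det(X_h')$ is nonzero. Then the product $Q=\prod_h \det(X_h')$ is nonzero, since $\F[\bar Y]$ is an integral domain. A nonzero polynomial over an infinite field is not identically zero as a function, which follows for instance from Schwartz--Zippel with $|S|>\deg Q$. So some $\bar a$ satisfies $Q(\bar a)\neq 0$, which gives a solution. Infiniteness of $\F$ is used exactly here, and again for sampling.

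\textbf{Step 4 (randomized test).} We pick $S\subseteq\F$ with $|S|\ge 2\sum_h n_h$; over characteristic $0$, $S=\{1,\dots,2\sum_h n_h\}$ works. We sample $\bar a\in S^{N-r}$ and evaluate each $X_h'(\bar a)$, a concrete matrix, by Gaussian elimination in polynomial time. We never expand the determinant symbolically. We accept if and only if all determinants are nonzero.

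\textbf{Error analysis.} Each $\det(X_h')$ has degree at most $n_h$. By Schwartz--Zippel applied to $Q$, a true instance is accepted with probability at least $1-\sum_h n_h/|S|\ge 1/2$. A false instance is never accepted: either step 1 rejects, or some determinant is identically zero and therefore evaluates to $0$ at every point. This one-sided error is exactly the $\RP$ requirement.

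\textbf{Main obstacle.} The step needing the most care is the bit-complexity bookkeeping in steps 2 and 4. We must confirm that the $L_i$ have polynomial-size coefficients, which holds by the standard Cramer's-rule bound for elimination over $\mathbb{Q}$. We must also confirm that evaluating determinants at small integer points stays polynomial. The remaining parts are routine applications of the parametrization bijection and the Schwartz--Zippel lemma.
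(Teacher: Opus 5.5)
Your proposal is correct and takes the same route as the paper: you prove Algorithm~\ref{alg:genETIM} correct by parametrizing the affine solution space and testing the substituted determinants with the Schwartz--Zippel lemma, and the one-sided error comes from the fact that an identically zero determinant never evaluates to a nonzero value. The only difference is a small but welcome refinement: you test the product $\prod_h \det(X_h')$ at a single random point instead of union-bounding $k$ separate tests, which makes explicit that individually nonzero determinants in the shared variables $\bar Y$ have a \emph{common} non-root (because $\F[\bar Y]$ is an integral domain and $\F$ is infinite), a point the paper's phrase ``it has found an assignment'' glosses over.
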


\begin{proof}
We need to prove the correctness of Algorithm~\ref{alg:genETIM}. If the algorithm returns 1, then by construction it has found an assignment to $X_1,\dots,X_k$ such that the linear constraints are satisfied and each determinant is nonzero, that is, the matrix is invertible. If on the other hand the algorithm returns $0$, then either the linear system has no solution, or one of the determinants is identically zero, or one of the identity tests erroneously failed. In the first case, there is indeed no solution, since already the linear system without any invertibility constraints is not satisfiable. In the second case, there is no solution, too, since we computed the solution space of the affine system of linear constraints and one of the determinants vanishes on this space. In the third case, by choosing the set $S$ in the Schwartz-Zippel lemma large enough, we can ensure that the error probability of one test failing is $\le 1/(2k)$. This means that by the union bound, the error probability in the yes-case is bounded by $1/2$, thus we satisfy the acceptance condition of $\RP$.

The algorithm can be implemented in randomized polynomial time, since we only solve systems of 
linear equations and evaluate determinants. This proves the theorem.
\end{proof}

\begin{corollary}\label{etim-rp}
    $\ETIM_{\F} \in \RP$ for infinite fields $\F$.
\end{corollary}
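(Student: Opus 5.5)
The plan is to show that $\ETIM_{\F}$ is a syntactic special case of $\genETIM_{\F}$ and then apply \cref{thm:genETIM:RP}. Take an $\ETIM_{\F}$-instance
\[
\exists_{n_1} X_1 \dots \exists_{n_k} X_k: \bigwedge_{i=1}^m P_i(X_1,\dots,X_k),
\]
where every $P_i$ has the form $A X_\kappa = X_\mu B$. We expand such a matrix equation entrywise. For every row index $s$ and column index $t$ we obtain the scalar equation
\[
\sum_{\ell} a_{s\ell}\, x^{(\kappa)}_{\ell t} - \sum_{\ell} x^{(\mu)}_{s\ell}\, b_{\ell t} = 0 .
\]
This is a homogeneous linear equation in the entries of $X_\kappa$ and $X_\mu$, and in particular it is affine linear. (The case $\kappa=\mu$ is covered as well, since the two sums then simply combine.)

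Replacing each $P_i$ by the conjunction of these scalar equations gives a $\genETIM_{\F}$-instance with the same quantified matrices. Because a matrix equation holds exactly when all of its entries agree, the new instance has exactly the same set of satisfying tuples of invertible matrices as the original one. The new instance has at most $\sum_i (\text{rows of } A)\cdot(\text{columns of } B)$ scalar equations, each with polynomially many coefficients taken directly from $A$ and $B$. So the translation runs in deterministic polynomial time and increases the input size only polynomially.

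Since $\RP$ is closed under deterministic polynomial-time many-one reductions, \cref{thm:genETIM:RP} then gives $\ETIM_{\F} \in \RP$ for every infinite field $\F$. No step here is a real obstacle. The only thing to check is the size of the encoding, and it is clearly polynomial because the number of entries of each product $AX_\kappa$ is bounded by the sizes of $A$ and $X_\kappa$, which are part of the input.
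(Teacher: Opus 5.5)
Your proposal is correct and matches the paper's (implicit) argument: $\ETIM_{\F}$ is the special case of $\genETIM_{\F}$ obtained by expanding each constraint $AX_\kappa = X_\mu B$ entrywise into linear equations, so \cref{thm:genETIM:RP} applies directly. The paper states the corollary without proof, and your encoding-size check only makes explicit the polynomial blow-up that the paper leaves unstated.
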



Since $\RP \subseteq \NP$, we have $\genETIM_{\F} \in \NP$ for all infinite fields $\F$. On the other hand, for finite fields $\F$, we trivially have $\genETIM_{\F} \in \NP$, since one can guess the solution to the $\genETIM$-instance non-deterministically. The same is true for $\ETIM$. Therefore, we have the following corollary.

\begin{corollary}\label{etim-np}
    $\genETIM_{\F} \in \NP$ and $\ETIM_{\F} \in \NP$ for all fields $\F$.
\end{corollary}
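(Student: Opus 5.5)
The statement has two parts, and I would prove them separately, splitting by whether $\F$ is infinite or finite. First I would observe that $\ETIM_{\F}$ is a special case of $\genETIM_{\F}$. A predicate $A X_\kappa = X_\mu B$ unfolds entrywise into a family of homogeneous linear equations in the entries of $X_\kappa$ and $X_\mu$, one for each entry of the product matrices, and the number of these equations is polynomial in the input size. So it suffices to show $\genETIM_{\F} \in \NP$; the claim for $\ETIM_{\F}$ then follows by this polynomial-time translation.

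\textbf{Infinite fields.} Here I invoke \cref{thm:genETIM:RP}, which gives $\genETIM_{\F} \in \RP$. Together with the standard inclusion $\RP \subseteq \NP$, this yields membership in $\NP$. The certificate is simply a random string that makes Algorithm~\ref{alg:genETIM} accept. Such a string exists exactly on yes-instances, because the algorithm has one-sided error and never accepts a no-instance.

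\textbf{Finite fields.} Here I guess the entries of all matrices $X_1,\dots,X_k$ directly. Each entry is an element of the fixed finite field $\F$, so it has constant description size. The total number of entries is $\sum_h n_h^2$, and this is polynomial in the input size provided the sizes $n_h$ are part of the explicit encoding (for instance, through the sizes of the coefficient matrices in the constraints, or under unary encoding of the $n_h$). The verifier then does two things in polynomial time: it evaluates every affine constraint over $\F$, and it checks invertibility of each guessed matrix by computing its determinant with Gaussian elimination.

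The only point needing care is the encoding-size issue in the finite-field case, namely that the certificate must be polynomially bounded. This is exactly where I expect the main obstacle, although it should be mild. If a matrix $X_h$ appears in no constraint, it is unconstrained and can be taken to be the identity, so its entries need not be guessed. Every remaining matrix $X_h$ appears in some constraint, which forces the $n_h^2$ entries to occur explicitly in the input, either in the coefficient matrices $A,B$ of that constraint or in the affine equations written over its entries. Hence the certificate length is bounded by a polynomial in the input length.
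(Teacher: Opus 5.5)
Your proof is correct and is essentially the paper's argument. For infinite fields you combine \cref{thm:genETIM:RP} with $\RP \subseteq \NP$, for finite fields you guess the matrices and verify the constraints and invertibility, and $\ETIM_{\F}$ follows as a special case of $\genETIM_{\F}$. Your certificate-size discussion makes explicit the paper's tacit assumption that matrix sizes are polynomial in the input length, but your claim that a constrained $X_h$ forces its entries into the input holds only when constraints are encoded densely, which is the same implicit assumption.
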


Next, we show that our upper bound for $\genETIM$ is optimal in the sense that derandomizing it over rational/real fields would have dramatic consequences in complexity theory. The \emph{symbolic determinant identity problem} ($\SDIT_{\F}$) is the following problem: Given a square matrix $A(\mathbf{y})$ whose entries are affine linear polynomials from $\F[y_1,\cdots,y_m]$, decide whether $\det A = 0$.

\begin{theorem} \label{thm:genETIM:SDIT}
    $\genETIM_{\F}$ is deterministically polynomial time equivalent to the complement of $\SDIT_{\F}$ for all fields $\F$.
\end{theorem}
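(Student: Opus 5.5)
The plan is to give two deterministic polynomial-time many-one reductions, one in each direction. The direction from $\genETIM_{\F}$ to the complement of $\SDIT_{\F}$ reuses Steps 1–3 of Algorithm~\ref{alg:genETIM}. The converse uses the linear constraints of $\genETIM_{\F}$ to encode the affine entries of the symbolic matrix, with "dummy" invertible matrices serving as carriers of free parameters.

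\emph{From $\genETIM_{\F}$ to $\overline{\SDIT_{\F}}$.} Given an instance as in \cref{eq:genETIM}, I first solve the affine system $P_1,\dots,P_m$ by Gaussian elimination. If it is infeasible, I output a fixed instance with $\det = 0$, e.g.\ the $1\times 1$ matrix $(0)$. Otherwise I compute the parametrization $y_i = L_i(\bar Y)$ as in Step~2 and form the matrices $X_h'$ with affine linear entries in the free variables $\bar Y$. I then output the block-diagonal matrix
\[
A(\bar Y) = \operatorname{diag}(X_1',\dots,X_k'),
\]
whose entries are again affine, and whose determinant is $\prod_h \det X_h'$. Over an infinite field, the proof of \cref{thm:genETIM:RP} shows the instance is true iff every $\det X_h'$ is a nonzero polynomial. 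Since $\F[\bar Y]$ is an integral domain, this holds iff the product is nonzero, i.e.\ iff $A \notin \SDIT_{\F}$. All steps are linear algebra over $\F$ plus writing down the block matrix, so the reduction is polynomial time.

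\emph{From $\overline{\SDIT_{\F}}$ to $\genETIM_{\F}$.} Given an $n\times n$ matrix $A(y_1,\dots,y_m)$ with affine entries, I introduce, for each $y_\ell$, a $2\times 2$ matrix variable $Y_\ell$. I impose the linear constraints
\[
(Y_\ell)_{11}=1,\qquad (Y_\ell)_{21}=0,\qquad (Y_\ell)_{22}=1,
\]
so that $Y_\ell=\left(\begin{smallmatrix}1 & t_\ell\\ 0 & 1\end{smallmatrix}\right)$. These matrices are always invertible, and their upper-right entries $t_\ell$ range freely over $\F$. I add one $n\times n$ matrix variable $X$ with constraints $x_{ij} = A_{ij}(t_1,\dots,t_m)$, which are affine in the entries of the $Y_\ell$. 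The resulting sentence is true iff some point $t\in\F^m$ has $\det A(t)\neq 0$. Over an infinite field, this is exactly the condition that $\det A$ is not the zero polynomial.

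The main obstacle I expect is the semantics over finite fields. There, "$\det A = 0$" as a polynomial identity differs from "$\det A$ vanishes on all of $\F^m$" (e.g.\ $y^q-y$ over $\F_q$). The second reduction works verbatim under the pointwise reading. The first reduction needs care: under the pointwise reading, a product of functions that are each somewhere nonzero may still vanish everywhere, so the block-diagonal trick only directly works for polynomial identity, i.e.\ for infinite $\F$. For finite $\F$, I would first try to make the correspondence precise by adopting the pointwise semantics for $\SDIT_{\F}$. I would then replace the block-diagonal product by a construction that requires simultaneous nonvanishing at a common point. If no such construction is found, I would restrict the claim for finite fields to the direction that does go through and check which semantics the paper intends there.
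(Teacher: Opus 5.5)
Your first reduction is the paper's own (Steps 1--3 of Algorithm~\ref{alg:genETIM}). You also make explicit two things the paper leaves implicit: merging the $k$ determinant tests into one $\SDIT_{\F}$ instance via the block-diagonal matrix, and what to output when the affine system is infeasible.

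Your second reduction takes a genuinely different route.
\begin{itemize}
\item The paper first normalizes the $\SDIT_{\F}$ instance with the Schur-complement gadget $\left(\begin{smallmatrix} A_0 & U\\ V & I_N\end{smallmatrix}\right)$, so that every entry is a single variable or a constant. It then uses one matrix $X$ with only constraints $x_{i,j}=c$ and $x_{i,j}=x_{i',j'}$.
\item You instead introduce carriers $Y_\ell=\left(\begin{smallmatrix}1&t_\ell\\0&1\end{smallmatrix}\right)$, which are automatically invertible, and impose the affine entries directly through affine constraints linking $X$ to the $t_\ell$.
\end{itemize}
Your route is shorter and needs no determinant identity, but it uses the full generality of cross-matrix affine constraints in $\genETIM_{\F}$. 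The paper's route shows that the equivalence already holds for a very restricted fragment: one matrix variable, with only constant and equality constraints on its entries.

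On finite fields, you are right that the statement needs the pointwise reading of $\SDIT_{\F}$. The paper's remark that $\SDIT$ is $\coNP$-complete over finite fields only makes sense under that reading. Under the identity reading, the paper's second reduction fails just as yours would, e.g.\ for $\det A = y^2-y$ over $\F_2$.

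However, your worry about the first direction is unfounded. You justified it through the infinite-field criterion ``each $\det X_h'$ is a nonzero polynomial'', which is simply the wrong criterion over finite fields. Argue directly instead:
\begin{itemize}
\item The affine solution set is exactly $\{(X_1'(\bar Y),\dots,X_k'(\bar Y))\}$.
\item So the sentence is true iff there is a single point $\bar Y$ at which all $X_h'(\bar Y)$ are invertible.
\item That holds iff $\det A(\bar Y)=\prod_h \det X_h'(\bar Y)\neq 0$ for some $\bar Y$.
\end{itemize}
This is precisely non-membership in pointwise $\SDIT_{\F}$, over every field. Your feared case, factors each nonzero somewhere whose product vanishes everywhere, is exactly a false $\genETIM_{\F}$ instance, and the block-diagonal matrix maps it correctly. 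So no new construction and no weakening of the claim is needed.
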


\begin{proof}
    The reduction from $\genETIM_{\F}$ to the complement of $\SDIT_{\F}$ is exactly the construction in Algorithm~\ref{alg:genETIM}. Since we stop just before invoking the Schwartz-Zippel lemma, this part of the construction holds over all fields $\F$.

    For the other direction, first we can use a standard reduction from $\SDIT$ for matrices with affine linear entries to $\SDIT$ for matrices with entries which are only variables or constants. Let $A(\mathbf{y}) = A_0 + \sum_{k=1}^m A_ky_k$ be the $n\times n$ input matrix for which we have to decide whether $\det A(\mathbf{y}) = 0$. Here, $A_k$ is a constant matrix for all $k$. Let $N = n^2m$. We define two matrices $U \in \F^{n \times N}$ and $V \in \F[\mathbf{y}]^{N \times n}$, where the columns of $U$ and the rows of $V$ are indexed by triples $(i,j,k) \in [n]^2 \times [m]$. For all $l \in [n]$ and $t = (i,j,k) \in [n]^2\times [m]$, define 
    $$U_{l,t} = \begin{cases}
        -(A_k)_{ij} & \text{if } i=l,\\
        0 & \text{ otherwise},
    \end{cases} \quad\text{ and }\quad  V_{t,l} = \begin{cases}
        y_k & \text{if } j=l,\\
        0 & \text{ otherwise}.
        \end{cases}\ .$$
    Form the matrix $B(\mathbf{y}) = \begin{bmatrix}
        A_0 & U\\
        V & I_N
    \end{bmatrix}$. Taking the Schur complement with respect to the bottom-right block, we get 
    $$\det B = \det(I_N)\det(A_0 - UI_N^{-1}V) = \det(A_0 - UV) = \det A\ .$$
    Since the entries of $B$ are only variables or constants, we have reduced the general $\SDIT$ to $\SDIT$ for these special kinds of matrices.
    
    Now, given an $\SDIT$-instance $A = (a_{i,j})$ of size $n \times n$ with entries that are variables or constants, we create a $\genETIM$-instance as follows: We quantify over one matrix $X = (x_{i,j})$ of size $n \times n$. If $a_{i,j}$ is a constant, then we add the equation $x_{i,j} = a_{i,j}$ to the instance. For each variable $y_\ell$ that appears in $A$, we let $(i_1,j_1),\dots,(i_k,j_k)$ be the entries of $A$ in which it occurs. Then we add the equations $x_{i_s,j_s} = x_{i_{s+1},j_{s+1}}$, $1 \le s < k$, to the instance. By construction $\det A \not= 0$ iff there is an invertible matrix that satisfies the constructed equations.
\end{proof}

Over finite fields, $\SDIT$ is $\coNP$-complete. Over rational/real fields, derandomizing $\SDIT$ (as well as PIT in general)   is a major open problem in computational complexity, in particular, it implies strong circuit lower bounds (\cite{DBLP:journals/cc/KabanetsI04}). 


\section{An improved algorithm for bisimilarity}

In this section, we present an improved algorithm for bisimilarity testing of finitary diagrams. \cite[Theorem 8]{DBLP:conf/RelMiCS/Dubut20} shows that this problem is in $\EXPSPACE$. We get an improvement by using our new result for $\genETIM$ (\cref{thm:genETIM:RP}).
\cite[Section 7]{DBLP:conf/RelMiCS/Dubut20} gives an algorithm, which implicitly constructs a nondeterministic reduction from testing bisimilarity of finitary diagrams to $\ETIM$:

\begin{proposition}[implicit in \cite{DBLP:conf/RelMiCS/Dubut20}] $\BisimFD_{\F} \leNEXP \ETIM_{\F}$ for all fields $\F$.
\end{proposition}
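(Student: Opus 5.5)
The plan is to make explicit the nondeterministic exponential-time reduction hidden in Dubut's algorithm. Given finitary diagrams $F$ on $(C,\le)$ and $G$ on $(D,\le)$, the certificate $y$ will be a \emph{skeleton} of a bisimulation. This is a finite set of tagged triples $(c,t,d)$ in which the matrix $f$ is replaced by an abstract placeholder index $t$. Together with the skeleton, $y$ contains, for each tagged triple and each $c\le c'$ (resp.\ $d\le d'$), a designated witness triple $(c',t',d')$ (resp.\ $(c',t',d')$) to be used in clause~1 (resp.\ clause~2). From $(x,y)$ the function $f$ will output an $\ETIM_{\F}$ instance. It has one matrix variable $X_t$ of size $F(c)\times F(c)$ per placeholder, which requires $F(c)=G(d)$; this is checked first, and if it fails we output a fixed false instance. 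For every designated witness it has the constraint $X_{t'}\,F(c\le c') = G(d\le d')\,X_t$, which is exactly of the form $AX_\kappa = X_\mu B$. Clauses 3 and 4 are purely combinatorial conditions on the skeleton and are checked directly.

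Correctness in the easy direction: if the output instance is true, instantiating each placeholder with the witnessing invertible matrix gives a set of triples satisfying all four clauses, hence a bisimulation. For the converse I would take any bisimulation $R$ and need to extract a skeleton of size $2^{\poly(|x|)}$. Pairs $(c,d)$ number only $|C||D|$, but a bisimulation may contain many different isomorphisms for the same pair, and clause~1 must be answered with fresh isomorphisms depending on $f$.

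The key step is bounding the skeleton size. I would restrict to the part of $R$ reachable from one chosen triple per object of $C$ and $D$ by the witness relation. Since the posets are finite, every chain of strict steps $c<c'<\dots$ has length at most $|C|+|D|$; a step $c\le c$ with $d<d'$ still moves strictly in one coordinate. The step $c\le c$ paired with $d\le d$ can be answered by the same triple, since $g=f$ works. The reachable part is then a forest of depth at most $|C|+|D|$ with branching at most $|C|+|D|$ per node, so it has at most $(|C|+|D|)^{|C|+|D|}\cdot(|C|+|D|) = 2^{\poly(|x|)}$ nodes. The pruned set of triples is still a bisimulation, because every triple kept has its witnesses kept. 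Unfolding the witnesses into distinct placeholders gives a skeleton of exponential size for which the $\ETIM$ instance is satisfied by the isomorphisms of $R$.

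I expect this size bound, and in particular making the unfolding well founded, to be the main obstacle. One subtlety is that witnesses of the form $c\le c$, $d<d'$ must be counted as progress. If a witness uses $j=\mathrm{id}$ for a strict $c<c'$, the $C$-coordinate still strictly increases, so the depth argument covers it. Given the size bound, the output instance has exponentially many variables and constraints, each with polynomially sized matrices, and it is computable in exponential time. This establishes $\BisimFD_{\F}\leNEXP\ETIM_{\F}$ uniformly in the field, since the matrix entries are just copied from the input.
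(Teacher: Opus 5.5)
Your proposal is correct and is essentially the paper's argument. You guess an exponential-size bisimulation with placeholder matrices, emit one constraint $G(d\le d')\,X_t = X_{t'}\,F(c\le c')$ per witness step, and justify the exponential bound by the same witness-tree construction the paper uses for its sub-bisimulation lemma: depth and branching at most $|C|+|D|$, since every non-trivial step strictly increases the pair in $C\times D$.

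The differences are cosmetic. You seed with one triple per object of $C$ and $D$, which gives clauses 3 and 4 directly, where the paper seeds with one minimal triple per pair $(a,b)$. You also unfold into a forest with fresh placeholders, where the paper takes a subset $S'\subseteq R$.
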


In his reduction, Dubut essentially guesses a bisimulation, that is, triples of the form $(a,X,b)$, implicitly using the fact that if there is a bisimulation, then there is always one of exponential size. The reduction works as follows: In the triple $(a,X,b)$, $a$ is from the poset $C$ of the first diagram $F: C \to A$ and $b$ is from the poset $D$ of the second diagram $G: D \to A$. $X$ is a ``placeholder'' for the isomorphism between $F(a)$ and $G(b)$. We then list all the $\ETIM$-equations that need to be satisfied according to \cref{def:bisim} and use an $\ETIM$-solver to check whether the system is feasible. 

This implicitly uses the following lemma. Let $S$ be a bisimulation between two finitary diagrams $F: C \to A$ and $G: D \to A$. We define a partial order on the set of triples in $S$ as follows:
\[
  (a, f, b) \le (a',f',b') \iff \text{$a \le a'$ and $b \le b'$ and $f' \cdot F(a \le a')  = G(b \le b') \cdot f$}.
\]

\begin{lemma}
Any bisimulation $S$ between two finitary diagrams $F: C \to A$ and $G: D \to A$ contains a subset $S' \subset S$ such that $S'$ is a bisimulation and $|S'| = 2^{\poly(n)}$, where $|C|, |D| \le n$.     
\end{lemma}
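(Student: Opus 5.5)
We extract $S'$ by a top-down pruning procedure that follows the structure of the bisimulation conditions. The key observation is that every chain in $C$ and in $D$ has length at most $n$, and that each triple only needs finitely many ``witnesses'' to satisfy conditions 1 and 2 of the bisimulation definition.

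\textbf{Construction.} We build $S'$ in rounds.

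In round $0$ we handle conditions 3 and 4. For every $c\in C$ we pick one triple $(c,f,d)\in S$, and for every $d\in D$ we pick one triple $(c,f,d)\in S$. This gives at most $2n$ triples.

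Now suppose a triple $t=(c,f,d)$ has been added. For every $i:c\to c'$ in $C$ (at most $n$ choices, since $C$ is a poset) we fix one witness $(c',g,d')\in S$ given by condition 1. For every $j:d\to d'$ in $D$ we fix one witness given by condition 2. We add these at most $2n$ triples to $S'$ and process them in the same way.

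Note that the identity morphisms $c\le c$ also demand witnesses. For these we can choose the witness to be $t$ itself: with $j=\mathrm{id}_d$ and $g=f$, the equation $g\circ F(\mathrm{id})=G(\mathrm{id})\circ f$ holds, and $t\in S$. So no new triple is needed for identities.

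\textbf{Termination and size bound.} Every non-identity witness of $t=(c,f,d)$ has the form $(c',g,d')$ with $c< c'$ or $d< d'$ strictly. Here we must be careful. Condition 1 with $i$ strict may yield $j=\mathrm{id}_d$, and then $d'=d$, but $c'>c$ still holds. Symmetrically for condition 2. Hence the quantity $\mathrm{height}(c)+\mathrm{height}(d)$ strictly increases along the construction. This quantity is at most $2n$, so the witness forest has depth at most $2n$ and branching at most $2n$. The total number of triples is therefore at most
\[
2n\cdot (2n)^{2n}=2^{O(n\log n)}=2^{\poly(n)}.
\]

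\textbf{Verification.} By construction, every triple in $S'$ has its witnesses for all non-identity morphisms in $S'$, and its identity witness is itself. So conditions 1 and 2 hold within $S'$. Conditions 3 and 4 hold by round $0$. Thus $S'\subseteq S$ is a bisimulation of the required size.

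The only delicate point is the strict-increase argument for the depth bound, namely handling witnesses where one side uses an identity morphism. The height-sum measure settles this. The partial order defined just before the lemma is not needed, except as intuition that witnesses always move upward in that order.
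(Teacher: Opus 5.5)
Your proof is correct and follows essentially the same witness-tree construction as the paper, with branching bounded by $2n$ and depth bounded by the length of ascending chains in $C\times D$. The differences are cosmetic. The paper seeds the construction with at most $n^2$ minimal elements of the order on $S$, whereas you seed directly with $2n$ representatives for conditions 3 and 4. You also make explicit the height-sum argument (including the case where one side uses an identity) and the treatment of identity morphisms, both of which the paper leaves implicit.
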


\begin{proof}
Consider $S$ with the order $\le$ defined above. Among all minimal elements in $S$, choose at most one tuple of the form $(a,f,b)$ for each pair $a \in C$ and $b \in D$. The number of tuples is bounded by $n^2$. For every tuple $(a,f,b)$ that was chosen in the first round and every $a < a'$, we choose a tuple $(a',f',b') \in S$ such that $b \le b'$ and $f' \cdot F(a \le a')  = G(b \le b') \cdot f$. Such tuples exist by the definition of bisimulation. We do the same for every $b < b'$. In this way, we add $2n$ new tuples to $S'$ for each tuple added in the first round, so we add $\le n^2 \cdot (2n)$
tuples in total in the second round. Now we go on inductively: For each tuple $(a,f,b)$ added in the previous round and each $a \le a'$, we choose one tuple $(a',f',b') \in S$ such that $f' \cdot F(a \le a')  = G(b \le b') \cdot f$ and add it to $S'$. We do the same for every $b \le b'$. This process comes to an end after $2n$ rounds. Therefore, the total size of $S'$ is bounded by $|S'| \le n^2 \cdot \sum_{i=0}^{2n} (2n)^i = 2^{\poly(n)}$. By construction, $S'$ satisfies the conditions in the definition of bisimulation.
\end{proof}

Together with our improved upper bound for $\ETIM$ (\cref{etim-np}) and \cref{prop:reduc:NEXP}, we get an improved upper bound for testing bisimilarity of finitary diagrams.

\begin{corollary} \label{cor:Bisim:NEXP}
    $\BisimFD_{\F} \in \NEXP$ for all fields $\F$.
\end{corollary}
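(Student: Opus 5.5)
The plan is to combine the nondeterministic exponential-time reduction $\BisimFD_{\F} \leNEXP \ETIM_{\F}$ (the proposition implicit in \cite{DBLP:conf/RelMiCS/Dubut20}) with \cref{etim-np}, which gives $\ETIM_{\F} \in \NP$ for every field $\F$, and then apply \cref{prop:reduc:NEXP}. Most of the work is making sure the reduction really has the required form. Given diagrams $F: C \to A$ and $G: D \to A$ with $|C|,|D| \le n$, the certificate $y$ is a set $S'$ of triples $(a, X_{(a,b,t)}, b)$, where $X_{(a,b,t)}$ is a placeholder matrix variable of size $F(a) \times F(a)$. The index $t$ distinguishes repeated pairs $(a,b)$. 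The certificate also contains, for every triple $\tau \in S'$ and every $a \le a'$ (respectively $b \le b'$), a pointer to a witness triple in $S'$. By the preceding lemma, if the diagrams are bisimilar there is such an $S'$ of size $2^{\poly(n)}$, so $|y| \le 2^{\poly(|x|)}$.

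Given $(x,y)$, the function $f$ first checks deterministically the purely combinatorial conditions. These are conditions 3 and 4 of the definition (every $c$ and every $d$ occurs in some triple), that the dimensions match ($F(a) = G(b)$ for each triple, since otherwise no isomorphism exists), and that each pointer goes to a triple $(a',\cdot,b')$ with $a \le a'$ and $b \le b'$. If a check fails, $f$ outputs a fixed false $\ETIM_{\F}$ instance, for example one requiring a $1\times 1$ matrix to satisfy both $X = X\cdot 2$ and an identity. Otherwise, for each pointer from $(a,X,b)$ to $(a',X',b')$ it emits the equation $X' \cdot F(a \le a') = G(b \le b') \cdot X$. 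This equation has exactly the form $A X_\kappa = X_\mu B$ allowed in \cref{eq:etim}, after renaming the sides. Both the number of equations and the total bit size are $2^{\poly(n)}$ times the input size, so $f$ is exponential-time computable.

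For correctness, first suppose the instance is true. Then the invertible solutions $X$ turn $S'$ into a genuine bisimulation, because every required morphism has a witnessing triple whose commutativity equation holds. Conversely, if a bisimulation exists, the lemma gives a $2^{\poly(n)}$-size sub-bisimulation. Guessing its shape together with the witness pointers yields an instance satisfied by the actual isomorphisms. Hence $x \in \BisimFD_{\F}$ iff there is a $y$ with $f(x,y) \in \ETIM_{\F}$.

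The main subtlety is \cref{prop:reduc:NEXP} itself. The instance $f(x,y)$ has size $N = 2^{\poly(|x|)}$, and the $\NP$ verifier for $\ETIM_{\F}$ runs in time $\poly(N) = 2^{\poly(|x|)}$. So a $\NEXP$ machine can guess $y$ and then the $\NP$ witness, and verify both in exponential time. For infinite fields this requires the $\NP$ witness of \cref{etim-np} to have polynomial size in the instance. It does: via $\RP \subseteq \NP$, the witness is the random string, whose values come from a set $S$ of polynomially bounded bit size. For finite fields the witness is simply the matrices, which are guessed directly.
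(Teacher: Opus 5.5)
Your proposal is correct and follows the paper's argument: Dubut's nondeterministic exponential-time reduction $\BisimFD_{\F} \leNEXP \ETIM_{\F}$, justified by the lemma on $2^{\poly(n)}$-size sub-bisimulations, combined with \cref{etim-np} and \cref{prop:reduc:NEXP}. Your explicit witness pointers in the certificate make precise the paper's phrase ``list all the $\ETIM$-equations that need to be satisfied'': they fix which triple witnesses each required move, so that only a conjunction of equations of the form $A X_\kappa = X_\mu B$ is emitted.
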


\section{A PSPACE upper bound for bisimilarity over finite fields}

Next, we further improve the upper bound for $\BisimFD_{\F}$ to $\PSPACE$, when $\F$ is a finite field. Since the system of equations that is generated in the above reduction is of exponential size, the reduction approach to $\ETIM$ will not work unless we could prove that there is a smaller system. Instead, we will (deterministically) polynomial-time reduce this problem to $\TQBF$, the set of all true quantified Boolean formulas, which is a classical $\PSPACE$-complete problem. 

Let $F: C \to A$ and $G: D \to A$ be two finitary diagrams. We construct a quantified formula $\Phi$, which is true iff $F$ and $G$ are bisimilar:
\begin{equation} \label{eq:Phi}
  \begin{array}{ll}
    \forall a_1\quad  \exists b_1 & \exists_{D(a_1)} X_1:  \match(a_1,b_1) \, \wedge \\
    \forall a_2:\ \sabo(a_2,a_1,b_1) \implies  \\
    \exists b_2:\ \abo(b_2,a_1,b_1)\ \wedge\ \match(a_2,b_2)\  \wedge & \exists_{D(a_2)} X_2:  \comm(a_1,b_1,X_1,a_2,b_2,X_2) \\
    \forall a_3:\ \sabo(a_3,a_2,b_2) \implies  \\ 
    \exists b_3:\ \abo(b_3,a_2,b_2)\ \wedge\ \match(a_3,b_3) \ \wedge & \exists_{D(a_3)} X_3:  \comm(a_2,b_2,X_2,a_3,b_3,X_3) \\
    \vdots \\
    \forall a_m:\  \sabo(a_m,a_{m-1},b_{m-1}) \implies \\
    \exists b_m:\ \abo(b_m,a_{m-1},b_{m-1})\ \wedge\  \match(a_m,b_m) \ \wedge & \exists_{D(a_m)} X_m: \\& \enspace \comm(a_{m-1},b_{m-1},X_{m-1},a_m,b_m,X_m)
  \end{array}
\end{equation}
In the formula: 
\begin{enumerate}
    \item $a_i$ and $b_i$ quantify over $C \cup D$.
    \item $\match(a_i,b_i)$ is true if $a_i$ and $b_i$ are in different posets and the dimensions of the associated vector spaces match.
    \item $\sabo(a_i,a_{i-1},b_{i-1})$ is true iff $a_{i-1},b_{i-1}$ are from different posets and $a_i$ is strictly greater than the element $a_{i-1}$ or $b_{i-1}$ of the matching poset. (The ``s'' stands for ``strictly''.)
    \item $\abo(b_i,a_{i-1},b_{i-1})$ is defined in the same way, but we only require that $b_i$ is greater than the element of the matching poset.
    \item $\comm(a_{i-1},b_{i-1},X_{i-1},a_i,b_i,X_i)$ is true if $a_{i-1}$ and $b_{i-1}$ are in different posets as well as $a_i$ and $b_i$. Furthermore, the matrices $X_{i-1}$ and $X_i$ have to be chosen such that the diagram induced by the four elements commutes. Here we assume that the matrices map from $F$ to $G$. 
    \item $D(a_i)$ is the dimension of the associated vector space, which is either $F(a_i)$ or $G(a_i)$.
    \item $\exists_D X$ quantifies over $\F$-matrices of size $D \times D$.
    \item $m = |C| + |D|$.
\end{enumerate}

A model of the formula $\Phi$ in (\ref{eq:Phi}) can be thought of as a tree. The root has a child for every $a_1 \in C \cup D$. The node corresponding to each $a_1$ is labeled with a triple $(a_1,X_1,b_1)$  such that $b_1$ is from the other poset with a matching dimension and $X_1 \in \F^{D(a_1) \times D(a_1)}$ is an invertible matrix. For each element $a_2$ that is strictly above the matching element $a_1$ or $b_1$, we select an element $b_2$ that is above the matching element $a_1$ or $b_1$ such that $a_2$ and $b_2$ are in different posets and the dimensions of the associated vector spaces match. Finally, we select an invertible matrix $X_2$ such that the diagram induced by the poset elements $a_1, a_2, b_1, b_2$ as well as the two isomorphisms $X_1$ to $X_2$ commutes. $(a_2,X_2,b_2)$ becomes the label of the new nodes. We go on recursively like this until both poset elements of $(a_i,X_i,b_i)$ are maximal. This happens at the latest when $i = m = |C| + |D|$. Note that when we have reached maximal elements, the predicate $\sabo(\dots)$ is always false, therefore, the implication is always true in this case. Thus it does not matter when we reach a maximal pair of poset elements earlier than after $m$ steps.

\begin{lemma}
    When the formula $\Phi$ has a model, then the diagrams $F$ and $G$ are bisimilar.
\end{lemma}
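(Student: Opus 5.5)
The plan is to read a bisimulation directly off the tree-shaped model of $\Phi$ described above. Fix a model, i.e.\ Skolem choices for all existential quantifiers: for every prefix of universal choices $a_1,\dots,a_i$, the model supplies $b_i$ and an invertible $X_i$. Every node of the tree is labeled by a triple $(a_i,X_i,b_i)$. We normalize each triple so that it goes from $F$ to $G$: if $a_i\in C$ we take $(a_i,X_i,b_i)$, and if $a_i\in D$ we take $(b_i,X_i,a_i)$. This orientation is exactly the one assumed by $\comm$, where the matrices always map from $F$ to $G$. Let $R$ be the set of all normalized triples over all nodes of the tree. By $\match$, the two elements of each triple lie in different posets and have equal dimension, and $X_i$ is an invertible matrix of that dimension. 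So every element of $R$ is a valid triple.

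We then verify the four conditions of a bisimulation. Conditions 3 and 4 come from the first line of $\Phi$: $a_1$ ranges over all of $C\cup D$, so every $c\in C$ and every $d\in D$ appears in some level-1 triple. For conditions 1 and 2, take a triple in $R$ coming from a node $(a_i,X_i,b_i)$, and a morphism $c\le c'$ on the $F$-side (the $G$-side is symmetric).

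If $c'=c$, we use the same triple together with the identity morphism on the other side; the square commutes trivially.

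If $c<c'$, then $\sabo(c',a_i,b_i)$ holds. We want to choose $a_{i+1}=c'$ and use the child node supplied by the model. This gives $b_{i+1}$ with $\abo$, meaning $b_{i+1}$ lies above the $G$-element of the current pair, and an $X_{i+1}$ satisfying $\comm$. Thus $X_{i+1}F(c\le c')=G(d\le b_{i+1})X_i$, which is exactly the required commutation, and the child's triple lies in $R$.

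The main obstacle is depth: the child node must exist, i.e.\ we need $i<m$. I would show that along any branch, the strict-ascent steps force a chain that grows strictly in one of the posets. Whenever $a_{i+1}$ is strictly above a component of the current pair, the new pair is componentwise $\ge$ the old one and strictly greater in one component. Hence the sum of the heights of the two components increases by at least one at each level. This sum is bounded by $|C|+|D|=m$, so a node at depth $m$ cannot arise from a pair that still has a strict successor. More carefully, the pairs along a branch form a strictly increasing chain in $C\times D$ under the componentwise order, and such a chain has length at most $|C|+|D|-1<m$. Therefore, whenever some $c'>c$ exists at node $i$, we have $i<m$ and the quantifier $\forall a_{i+1}$ is present in $\Phi$. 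If the bound turns out to be off by one, I would adjust $m$ (the argument only needs $m$ at least the longest strict chain length in $C\times D$), noting that the paper's remark handles the case of reaching maximal pairs early.

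Finally, I would check that using the normalized triple at an arbitrary node is consistent. The conditions of $\Phi$ are local, depending only on the parent label, so any occurrence of a triple in the tree certifies its successors. This completes the verification that $R$ is a bisimulation, and by \cref{def:bisim} the diagrams $F$ and $G$ are bisimilar.
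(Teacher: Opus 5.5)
Your proposal is correct and follows the paper's proof. You take the labels of the model tree as the bisimulation, get the two totality conditions from the first level, get the forth/back conditions from the children supplied by the $\exists b_{i+1}$ part, and justify that depth $m$ suffices because the pairs along a branch form a strictly increasing chain in $C\times D$; you are also more explicit than the paper about the $F$-to-$G$ orientation of the triples, the identity-morphism case, and the exact chain bound $|C|+|D|-1<m$.
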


\begin{proof}
 We claim that the set $S$ of labels of the tree $T$ constructed above forms a bisimulation. For every $c \in C$ there exists $d \in D$ with $F(c) = G(d)$ and an $M \in \F^{F(c) \times F(c)}$ such that $(c,M,d)$ is a label. This is already ensured by the first layer of nodes in $T$. The same is true for the symmetric statement with the roles of $c$ and $d$ swapped. Finally, if there is a tuple $(c,M,d) \in S$, then for any $c < c'$, there must be a $d' \ge d$ such that $F(c') = G(d')$ and an invertible matrix $M' \in \F^{F(c') \times F(c')}$ such that $M'F(c \le c') = G(d \le d')M$. This is ensured by the $\exists b_i \dots$ part of the formula. The same is true for the symmetric statement for any $d < d'$. Since $c < c'$ or $d < d'$, we can stop at depth $m$, since any ascending chain in $C \times D$ with the order $(c,d) \le (c',d')$ iff $c \le c' \wedge d \le d'$ has length at most $|C| + |D| = m$. 
\end{proof}

We can also prove the converse.

\begin{lemma}
    If $F$ and $G$ are bisimilar, then $\Phi$ has a model. 
\end{lemma}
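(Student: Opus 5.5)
Given a bisimulation $R$ between $F$ and $G$, I will build strategies for the existential players of $\Phi$ directly from $R$. At every level $i$, the chosen triple $(a_i,X_i,b_i)$, read as a triple $(c,f,d)\in R$ with $c\in C$, $d\in D$, will lie in $R$. Here $(a_i,b_i)$ may be $(c,d)$ or $(d,c)$, and $X_i$ is always the matrix of $f$ mapping from $F$ to $G$, as the convention in $\comm$ requires. Maintaining this invariant is enough, because the universal choices are restricted by $\sabo$ and the existential choices only need to satisfy $\abo$, $\match$ and $\comm$.

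\emph{Level 1.} The universal player picks $a_1\in C\cup D$. If $a_1=c\in C$, condition 3 of the definition of bisimulation gives some $(c,f,d)\in R$, and I set $b_1=d$, $X_1=f$. If $a_1=d\in D$, condition 4 gives some $(c,f,d)\in R$, and I set $b_1=c$, $X_1=f$. In both cases $f$ is an isomorphism $\F^{F(c)}\to\F^{G(d)}$. So $F(c)=G(d)$, which makes $\match(a_1,b_1)$ true, and $X_1$ is an invertible matrix of size $D(a_1)\times D(a_1)$.

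\emph{Inductive step.} Assume $(c,f,d)\in R$ is the triple at level $i-1$, and let $a_i$ satisfy $\sabo(a_i,a_{i-1},b_{i-1})$. Then either $a_i=c'\in C$ with $c<c'$, or $a_i=d'\in D$ with $d<d'$. In the first case, condition 1 applied to the morphism $c\le c'$ gives $d'\ge d$ and an isomorphism $g$ with $g\cdot F(c\le c')=G(d\le d')\cdot f$ and $(c',g,d')\in R$. I set $b_i=d'$ and $X_i=g$. Then $\abo$ holds because $d\le d'$ is non-strict, $\match$ holds because $g$ is an isomorphism, and $\comm$ holds by the commutation identity. The case $a_i\in D$ is symmetric and uses condition 2. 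The invariant is preserved, so by induction every existential quantifier up to level $m$ can be witnessed. Once $\sabo$ fails, for example at maximal pairs, the implication holds vacuously and nothing more has to be chosen.

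\emph{Remaining check.} The step needing the most care is matching the orientation conventions. When $a_i\in D$, the pair is swapped, so I must confirm that $\comm$, $\abo$ and $D(a_i)$ treat $X_i$ as a map from $F$ to $G$ no matter which poset $a_i$ comes from. This is how the predicates were defined, so it only requires reading the definitions carefully. One more point: the strategies may depend on the whole history of universal choices, but they depend only on the current triple, so the required Skolem functions exist. Hence $\Phi$ has a model.
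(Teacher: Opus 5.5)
Your proposal is correct and takes the same approach as the paper. The paper also builds the witness tree from the bisimulation, using conditions 3 and 4 at the root and conditions 1 and 2 to extend each labeled node. Your invariant that every chosen triple lies in $R$, together with your checks of $\abo$, $\match$ and $\comm$, just spells out what the paper calls ``obvious''.
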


\begin{proof}
    Let $S$ be a bisimulation. For each $c \in C$, we choose a tuple $(c,M,d) \in S$ and label one child of the root with it. Such tuples exist by the definition of bisimulation. In the same way for every $d \in D$, we choose a tuple $(c,M,d)$ and add a child to the root. For every child with label $(c,M,d)$ such that either $c$ or $d$ are not maximal, we add for each $c' > c$ a child with tuple $(c',M',d')$ for some $d' \ge d$ and $M'$ such that the diagram induced by $c,c',d,d'$ commutes. Such a triple exists by the definition of bisimulation. We do the same for each $d' > d$. This obviously creates a model of $\Phi$.
\end{proof}

Now we immediately get the following theorem.

\begin{theorem} \label{alt:bisim:PSPACE}
    $\BisimFD_{\F} \in \PSPACE$ when $\F$ is a finite field.
\end{theorem}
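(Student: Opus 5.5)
By the two preceding lemmas, $F$ and $G$ are bisimilar if and only if the formula $\Phi$ in (\ref{eq:Phi}) is true. It therefore suffices to show two things: $\Phi$ can be constructed from $(F,G)$ in deterministic polynomial time, and its truth can be decided in polynomial space. We will do this by translating $\Phi$ into a quantified Boolean formula of polynomial size and invoking $\TQBF \in \PSPACE$. Alternatively, we can evaluate $\Phi$ directly by recursive brute force.

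\textbf{Encoding the variables.} Each poset variable $a_i, b_i$ ranges over $C \cup D$. We encode it with $\lceil \log_2(|C|+|D|)\rceil$ Boolean variables; out-of-range codes are excluded by conjoining a validity predicate under $\exists$ and adding it as a premise under $\forall$. Each matrix variable $X_i$ has at most $N^2$ entries, where $N$ is the maximum dimension occurring in $F$ or $G$ and is bounded by the input size. Each entry lies in the fixed finite field $\F$, so it is encoded by $\lceil\log_2|\F|\rceil = O(1)$ bits. Since $D(a_i)$ depends on $a_i$, we quantify over an $N\times N$ block and let the predicates use only the top-left $D(a_i)\times D(a_i)$ part. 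There are $m=|C|+|D|$ levels, so the total number of Boolean variables is polynomial.

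\textbf{Encoding the predicates.} We case-split over the finitely many (polynomially many) values of the poset variables. This turns $\match$, $\sabo$, $\abo$ and the dimension selection into table lookups, each a polynomial-size Boolean circuit. Invertibility of $X_i$ and the commutativity condition $X_i \cdot F(c\le c') = G(d\le d')\cdot X_{i-1}$ are arithmetic statements over $\F$. Field addition and multiplication on $O(1)$-bit encodings are constant-size circuits, so matrix products have polynomial-size circuits. For invertibility, we existentially quantify an inverse matrix $Y_i$ alongside $X_i$ and require $X_iY_i = I$; this avoids computing determinants inside the formula. The commutativity check for a fixed pair of poset values is then a conjunction of polynomially many equalities of entries. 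Finally, Tseitin-style auxiliary existential variables, placed at the innermost level or converted using standard techniques, turn these circuits into a prenex QBF of polynomial size.

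\textbf{Main obstacle.} The step needing the most care is the alternation structure: the variables $b_i, X_i, Y_i$ are existential inside the scope of $\forall a_i$. This is unproblematic because $\TQBF$ permits arbitrary alternation and the quantifier depth is $O(m)$. We only need to check that the implication $\sabo(\cdots)\implies\ldots$ and the conjunction with the nested formula are carried over verbatim. Alternatively, we avoid QBF entirely: a recursive evaluator that iterates over all choices at each of the $m$ levels uses polynomial space per level for the current $(a_i,b_i,X_i)$, giving $O(m\cdot \poly)$ space overall. Either route yields $\BisimFD_{\F}\in\PSPACE$ for finite $\F$.
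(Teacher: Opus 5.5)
Your proposal is correct and follows the paper's route: the two preceding lemmas reduce bisimilarity to the truth of $\Phi$, and $\Phi$ is then decided in polynomial space, either by brute-force recursive evaluation or through a $\TQBF$ encoding. Your additional details fill in steps the paper leaves implicit, in particular that the $X_i$ must be invertible, which the paper's $\exists_D X$ does not state. These details are the Boolean encodings of poset elements and $O(1)$-bit field entries, the padding to $N\times N$ blocks, and the existentially quantified inverse $Y_i$ that certifies invertibility.
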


\begin{proof}
The last two lemmas show that $F$ and $G$ are bisimilar if and only if the formula $\Phi$ in \cref{eq:Phi} has a model. Since $\Phi$ is a QBF, we can brute-force over all assignments to the quantified variables in polynomial space and therefore decide its satisfiability in $\PSPACE$.

\end{proof}


\section{Complexity of model checking for finitary diagrams}

There is a nondeterministic polynomial time reduction from $\posFF$ to $\ETIM$. Since $\ETIM \in \NP$, this proves that $\posFF \in \NP$, improving on the $\PSPACE$ upper bound by \cite{DBLP:conf/RelMiCS/Dubut20}. 

\begin{theorem}
    $\posFF_{\F} \leNP \ETIM_{\F}$ for all fields $\F$.
\end{theorem}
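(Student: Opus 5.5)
We construct the reduction so that the certificate $y$ records every combinatorial choice made while evaluating the positive formula $S$ on $(F,c)$. The reduction $f(x,y)$ then emits an $\ETIM_{\F}$-instance whose variables are placeholders for the isomorphisms that the semantics existentially quantifies. Because $S$ has no negations, every clause of the semantics is existential, possibly branching over conjunctions, so a polynomial-size witness tree describes one successful evaluation.

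\textbf{Witness tree.} Parse $S$ and look at its syntax tree. Each object subformula $[n]P$ is evaluated at a poset element $e$, and each $\langle M\rangle P$ is evaluated at some $e$ and moves to a successor $e' \ge e$. The certificate $y$ assigns to every occurrence of such a subformula a poset element: the root $[n]P$ gets $c$, and every $\langle M \rangle$-node gets its target $e'$. Conjunctions evaluate both children at the same element and under the same isomorphism. The object formula reached through $?S'$ is evaluated at the current element but with a fresh isomorphism. This labelling has size linear in $|S|$ times $\log|C|$, so it is a polynomial certificate.

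\textbf{Constructed instance.} From the labelling, $f$ first checks deterministically the purely combinatorial conditions: labels respect the poset order along $\langle M\rangle$ steps, $F(e)=n$ at every $[n]$-node, and all matrix sizes match. If any check fails, $f$ outputs a fixed false $\ETIM$-instance, for example $\exists_1 X: 0\cdot X = X\cdot 1$, which forces $X=0$ and so has no invertible solution. Otherwise $f$ introduces one matrix variable $X_v$ of size $F(e_v)\times F(e_v)$ for every node $v$ where a new isomorphism is chosen: each $[n]$-node, and each $\langle M\rangle$-node for its resulting $f'$. Each $\langle M\rangle$-node $v$ with parent isomorphism variable $X_u$ contributes the predicate $M X_u = X_v\, F(e_u \le e_v)$, which is exactly of the form \eqref{eq:etim} with $A=M$ and $B=F(e_u\le e_v)$. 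Conjunctions, $\top$ and $?$ add no constraints beyond sharing or refreshing variables as described.

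\textbf{Correctness.} We argue by structural induction on positive morphism and object formulae. The claim is that $F,f,e\models P$ holds iff there exist a labelling of the subtree of $P$ rooted at $e$ and invertible matrices satisfying the generated equations, with $X_{\text{root}}=f$; the object-formula version is analogous with $X_{\text{root}}$ free. The forward direction reads the labels and isomorphisms off the witnesses given by the semantics. The backward direction assembles the semantics clause by clause. The key point is that the isomorphisms chosen in different branches of a conjunction are independent variables and interact only through the shared parent variable, which matches the semantics exactly.

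The main obstacle is this correspondence for conjunction and $?$, since a single isomorphism is used by several subformulae. One has to check that the sharing pattern is correct: shared under $\wedge$, fresh under $?$ and $\langle M\rangle$. A further subtlety is that the semantics states $Mf = f'F(c\le c')$, so it multiplies on opposite sides, and $M$ must be $F(c')\times F(c)$-compatible. Size mismatches are treated as falsity and rejected by $f$.
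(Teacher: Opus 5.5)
Your proposal is correct and matches the paper's proof: the paper nondeterministically guesses the successor $c'$ at each $\langle M\rangle$ node (your certificate labels), introduces a fresh $\exists_n X$ at each $[n]$ node and a fresh $Y$ with $M X = Y\, F(c\le c')$ at each $\langle M\rangle$ node, shares $X$ across conjunctions, and gets a fresh variable after $?$. Your explicit witness tree, the fixed false instance in place of ``reject'', and the structural-induction argument formalize what the paper leaves as ``by construction'', and they also cover a small looseness there: the paper's rule of accepting at $\top$ would end processing before the second branch of a conjunction is handled.
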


\begin{proof}
In the model checking problem, there are two kinds of choices that need to be made: In a formula of the form $[n]P$, we need to choose an isomorphism. This will be done by $\ETIM$. The second kind of choice is in formulas of the form $\langle M \rangle P$. Here we can choose the next element $c'$ of the diagram. This choice will be modeled by the nondeterminism of the reduction. This reduction is implicit in the work of \cite{DBLP:conf/RelMiCS/Dubut20}, when he proves the $\PSPACE$ upper bound.    

Our nondeterministic reduction will add quantifiers of the form $\exists_{n} X$ one after another to the output formula $O$ as well as linear equations. In the beginning, our input is a diagram $F$, and an object $c$ and an object formula $P$. The reduction proceeds recursively along the structure of $P$. When $P$ is a morphism formula, then besides $F$ and $c$, we will also have a variable matrix $X$ as an input. Our reduction simulates the semantic
rules of \cref{subsec:FF} as follows:
\begin{itemize}
    \item if $P = [n] S$, then if $n \not= F(c)$, we reject. Otherwise, we add the quantifier $\exists_n X$ for some fresh variable $X$ to $O$ and we go on with $F,c,X,S$.
    \item if $P = ?S$, then we go on with $F,c,S$.
    \item if $P = \top$, then we accept and output $O$.
    \item if $P = P_1 \wedge P_2$, then we first go on with $F,c,X,P_1$ and then with $F,c,X,P_2$.
    \item If $P = \langle M \rangle P'$ with $M$ being an $n_2 \times n_1$-matrix, then we first check whether $n_1 = F(c)$. If not, then we reject. Otherwise, we guess a $c'$ with $c \le c'$ such that $F(c') = n_2$. If no such $c'$ exists, then we reject. Otherwise, we add $\exists_{n_2} Y$ to $O$ for some fresh variable $Y$ as well as the equations $M \cdot X = Y \cdot F(c \le c')$. Go on with $F,c',Y,P'$.
\end{itemize}
By construction $(F,c) \in \posFF$ iff there is an accepting path on which we output a satisfiable $\ETIM$-instance.
\end{proof}

Now, using \cref{etim-np} immediately gives us the following corollary.

\begin{corollary} \label{cor:posFF:NP}
    $\posFF_{\F} \in \NP$ for all fields $\F$.
\end{corollary}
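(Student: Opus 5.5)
The corollary follows by chaining the two results just established. By the preceding theorem, $\posFF_{\F} \leNP \ETIM_{\F}$ for every field $\F$. By \cref{etim-np}, $\ETIM_{\F} \in \NP$ for every field $\F$. Applying \cref{prop:reduc:NP} with $A = \posFF_{\F}$ and $B = \ETIM_{\F}$ then gives $\posFF_{\F} \in \NP$.

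The one step that needs a little care is checking that the reduction really fits the definition of $\leNP$. There are two points.

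First, the certificate $y$ must have polynomial size. It encodes the sequence of guessed successor elements $c'$, one for each occurrence of a modality $\langle M\rangle$ in the formula, since the recursion visits each subformula once. So $|y|$ is linear in the size of the formula times $\log|C|$.

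Second, the output instance must be computable deterministically in polynomial time from $(x,y)$. The reduction produces one fresh matrix variable per $[n]$ or $\langle M\rangle$ occurrence, and its dimension is already written in the input: it is $n$, or the row count of $M$. It also produces one matrix equation $M X = Y F(c\le c')$ per modality, whose entries are copied directly from the input. Hence the produced $\ETIM$ instance has size polynomial in $|x|$. Rejecting branches can be mapped to a fixed unsatisfiable $\ETIM$ instance, for example $\exists_1 X : 0\cdot X = X\cdot 1$, which forces $X=0$ and so has no invertible solution.

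For the membership step itself, \cref{etim-np} covers both cases. For infinite fields it relies on $\RP \subseteq \NP$ via \cref{thm:genETIM:RP}. For finite fields, one can directly guess the matrices and check them. With these two points verified, \cref{prop:reduc:NP} applies directly, and the argument has no real obstacle.
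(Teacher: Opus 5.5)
Your proposal is correct and follows the paper's own argument: the paper derives the corollary by combining the nondeterministic reduction $\posFF_{\F} \leNP \ETIM_{\F}$ with \cref{etim-np}, implicitly via \cref{prop:reduc:NP}. Your extra checks fill in details the paper leaves implicit, namely the polynomial size of the certificate, the polynomial size of the output instance, and sending rejecting branches to the unsatisfiable instance $\exists_1 X : 0\cdot X = X\cdot 1$ so that the reduction is a total function.
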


Our next goal will be to show $\NP$-hardness for $\posFF$. We will achieve this by reducing \textsc{Clique} to $\posFF$. For this reduction, we invent a gadget called constrained layered poset, which we outline below.

\begin{definition}[Constrained Layered Poset (CLP)]
    A poset $C$ is said to be a constrained layered poset (CLP) if it satisfies the following properties:
    \begin{itemize}
        \item \textbf{Layered structure.} For some $k,n \in \mathbb{N}$, we have $C = \{c_{i,j} \ |\ i \in [k], j \in [n]\}$. The elements are  ordered by $c_{i,j}\le c_{i',j'}$ iff $i<i'$ or $(i,j)=(i',j')$. Thus, the poset has $k$ layers where the $i$-th layer is an antichain $\{c_{i,1},\cdots,c_{i,n}\}$, and every element in layer $i$ is smaller than every element in layer $i'>i$.
        \item \textbf{Set-labels.} Every pair $a \leq b$ in the poset is labeled by a set $L(a,b) \subseteq U$ for some universe $U$. We will call $L: C \times C \rightarrow 2^U$ the label-function of the CLP.
        \item \textbf{Triplet criterion.} For any triplet $a\leq b\leq c$ in $C$, we have $L(a,b) \cap L(b,c) = L(a,c)$.
    \end{itemize} 
\end{definition}

Notice that a finitary diagram $F: C \rightarrow A$ can be visualized as the poset $C$ having every pair $c \leq c'$ labeled by a matrix $F(c \leq c')$. However, in CLPs defined above, the pairs $c \leq c'$ are labeled by sets $L(c, c')$ instead. The idea behind this is to only focus on finitary diagrams where each matrix $F(c \leq c')$ is a diagonal matrix with $0-1$ diagonal entries and its support being the index-set $L(c, c')$. Then the triplet criterion basically captures the fact that $F(a \leq c) = F(b \leq c) \cdot F(a \leq b)$ whenever $a \leq b \leq c$ in $C$.

In the following lemma, we take the first step in our reduction. We show that given a graph, we can construct a CLP in polynomial time satisfying certain properties capturing the edge-relations in the graph.

\begin{lemma}\label{CLP-construct}
    Given an undirected graph $G = ([n],E)$ and two integers $k,m$ with $m \geq n$, we can construct in polynomial time a CLP $C = \{c_{i,j} \ |\ i \in [k], j \in [n]\}$ with label-function $L$ such that for all $1 \leq i < i' \leq k$ and $j,j' \in [n]$, 
    $$|L(c_{i,j},c_{i',j'})| = \begin{cases}
        f(i,i',k,m) & \text{if } (j,j') \in E,\\
        f(i,i',k,m)-1 & \text{otherwise},
    \end{cases}$$
    for some appropriately chosen function $f:\Z^4 \rightarrow \Z$.
\end{lemma}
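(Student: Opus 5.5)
We build the labels from a universe whose elements are tagged by a pair of layers $p<q$, read as the interval of layers over which the element ``survives''. For each such pair we place a copy of a fixed token set $T$. An element survives along $c_{i,j}\le c_{i',j'}$ if its interval covers $[i,i']$. If the interval starts exactly at $i$, it must also satisfy a start condition depending on $j$; if it ends exactly at $i'$, it must satisfy an end condition depending on $j'$. We then check the triplet criterion, and finally choose $T$ and the conditions so that only the case ``interval exactly $[i,i']$'' sees the edge relation.

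\textbf{Construction.} Let $U=\{(p,q,\tau): 1\le p<q\le k,\ \tau\in T\}$. For each $j\in[n]$ fix sets $A_j,B_j\subseteq T$. For $i<i'$ define
\[
L(c_{i,j},c_{i',j'})=\{(p,q,\tau)\in U:\ p\le i,\ i'\le q,\ (p=i\Rightarrow \tau\in A_j),\ (q=i'\Rightarrow \tau\in B_{j'})\}.
\]
For the reflexive pairs set $L(a,a)=U$. The pairs $a\le b$ with $a\ne b$ in a CLP all have strictly increasing layer index, so this defines every label.

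\textbf{Triplet criterion.} Take $a=c_{i,j}$, $b=c_{i',j'}$, $c=c_{i'',j''}$ with $i<i'<i''$. An element lies in $L(a,b)\cap L(b,c)$ iff $p\le i$, $q\ge i''$, and $p\le i'\le q$; the last condition is implied by the first two. Since $p\le i<i'<i''\le q$, neither the end condition of $L(a,b)$ (which needs $q=i'$) nor the start condition of $L(b,c)$ (which needs $p=i'$) can be triggered. What remains is the start condition at $a$ and the end condition at $c$, which is exactly the definition of $L(a,c)$. Triplets containing a repeated element are trivial because $L(a,a)=U$.

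\textbf{Counting.} We need $|A_j|=\alpha$ and $|B_{j}|=\beta$ independent of $j$, and $|A_j\cap B_{j'}|=\gamma+[\{j,j'\}\in E]$. Take $T$ to consist of a token $e_{s,t}$ for every ordered pair with $\{s,t\}\in E$, together with private padding tokens: $m-\deg(j)$ tokens $\pi^A_{j,\ell}$ and $m-\deg(j)$ tokens $\pi^B_{j,\ell}$ for each $j$. Here $m\ge n>\deg(j)$ keeps these counts nonnegative. Set $A_j=\{e_{j,t}\}\cup\{\pi^A_{j,\ell}\}$ and $B_{j'}=\{e_{s,j'}\}\cup\{\pi^B_{j',\ell}\}$, so $|A_j|=|B_{j'}|=m$. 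The intersection $A_j\cap B_{j'}$ is $\{e_{j,j'}\}$ when $\{j,j'\}\in E$ and empty otherwise, since padding tokens are never shared and $G$ has no loops. Hence $|T|=\sum_j(\deg j+2(m-\deg j))=2nm$.

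Splitting $|L(c_{i,j},c_{i',j'})|$ by the cases $p<i$ or $p=i$, and $q>i'$ or $q=i'$, gives
\[
(i-1)(k-i')\cdot 2nm+(k-i')m+(i-1)m+[\{j,j'\}\in E].
\]
So we take $f(i,i',k,m)=(i-1)(k-i')2nm+(k-i'+i-1)m+1$. Strictly $f$ also depends on $n$, but $n$ is a fixed parameter of the instance; alternatively we add unconditional padding so that only $k$ and $m$ appear. The universe has size $O(k^2nm)$, so everything is computable in polynomial time.

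\textbf{Main obstacle.} The difficulty is that the membership conditions must factor into a start part and an end part for the triplet criterion to survive, while the count must still detect the non-product relation $E$. The design handles this in two ways. Intermediate elements $b$ can never impose a condition, because every interval in $L(a,b)\cap L(b,c)$ strictly straddles the layer of $b$. And $E$ is encoded as the intersection sizes of two families of constant-size sets, which is exactly what the edge tokens combined with padding achieve.
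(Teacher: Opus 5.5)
Your interval design is the paper's own construction in abstract form. The paper's $Q_{i,i'}$, $R_{i,i',j}$, $S_{i,i',j'}$, $T_{i,i',j,j'}$ are exactly your four cases ($p<i,q>i'$; $p=i,q>i'$; $p<i,q=i'$; $p=i,q=i'$), with $A_j=\{e_{j,b}:b\in[m]\}$ and $B_{j'}=\{\eta(b,j'):b\in[m]\}$, and your triplet argument is its cross-intersection check. The gap is in your count of the token set. You have $\sum_j(\deg j+2(m-\deg j))=2nm-\sum_j\deg j$, not $2nm$, because each edge token $e_{s,t}$ lies in both $A_s$ and $B_t$. So your displayed $f$ is false whenever $(i-1)(k-i')>0$ and $E\neq\emptyset$. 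For example, $n=m=2$, $E=\{\{1,2\}\}$, $k=4$, $(i,i')=(2,3)$, $(j,j')=(1,2)$ gives $|L|=6+4+1=11$, while your $f=13$. The value your construction actually produces is $(i-1)(k-i')|T|+(k-i'+i-1)m+\mathbbm{1}_{\{j,j'\}\in E}$. This depends on the edge set, not only on $(i,i',k,m)$, nor only on $n$ as your closing remark suggests.

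This matters. The point of $f$ depending only on $(i,i',k,m)$ is that, in the proof of \cref{posFF-hardness}, the lemma is applied with the same parameters to $G'$ and to the complete graph with loops. The two rank functions must then differ exactly by $\mathbbm{1}_{(j,j')\notin E'}$. With your tokens they would differ additionally by $(i-1)(k-i')(|T_{G'}|-|T_{K}|)$, which is nonzero in general.

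The repair is one line. Add $2m^2-|T|$ dummy tokens lying in no $A_j$ and no $B_j$ (note $|T|\le 2nm\le 2m^2$). These only enter the case $p<i$, $q>i'$, so you recover exactly the paper's $f(i,i',k,m)=2(i-1)(k-i')m^2+(k-i')m+(i-1)m+1$. The paper's choice of tokens avoids the issue automatically: $A_j$ uses $e_{j,b}$ for every $b\in[m]$, and the $\bar e$-tokens pad $B_{j'}$, so the token set always has size $2m^2$ regardless of $E$.

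Also drop the justification ``since $G$ has no loops''. The lemma is later applied to a graph with loops. Your intersection argument still works there: $e_{j,j}\in A_j\cap B_j$ exactly when $j$ carries a loop, and $m-\deg(j)\ge 0$ still holds since $\deg(j)\le n\le m$.
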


\begin{proof}
We are given an undirected graph $G=([n],E)$ and two integers $k, m$ with $m \geq n$. We have to construct a CLP satisfying the given properties.

\noindent\textbf{The poset:}
Following the definition of CLP, we can define the comparabilities in our poset $C = \{c_{i,j}\ |\ i\in[k],\ j\in[n]\}$ as follows:
$$c_{i,j}\le c_{i',j'} \text{ iff } i<i' \text{ or }(i,j)=(i',j').$$ 

\noindent\textbf{The universe:}
Let $X := \{e_{u,v},\ \be_{u,v} \ |\ (u,v)\in[m]^2 \}$ be a set of symbols and set the universe $U := [k]^2 \times X$.
We also define a map $\eta :[m]^2 \rightarrow X$ given by
$$\eta(u,v)=
\begin{cases}
e_{u,v} & \text{if } (u,v)\in E,\\
\be_{u,v} & \text{if } (u,v)\notin E,
\end{cases}
$$
which we are going to use later while constructing the labels. Note that in particular, $\eta(u,v) = \be_{u,v}$ for all $(u,v) \in [m]^2 \setminus [n]^2$ since $E \subseteq [n]^2$.

\noindent\textbf{Labels:}
We have to assign $L(c,c')\subseteq U$ for all comparable $c \leq c'$ in $C$. First of all, we set
$L(c,c):=U \text{ for all }c\in C$. Now we handle the strictly comparable pairs. For $i<i'$ and $j,j'\in[n]$, we define
$$L(c_{i,j},c_{i',j'}) := Q_{i,i'} \ \sqcup\ R_{i,i',j} \ \sqcup\ S_{i,i',j'} \ \sqcup\ T_{i,i',j,j'}$$
where 
\begin{align*}
Q_{i,i'} &= \{(a,a',x)\in [k]^2\times X \ |\ a<i<i'<a'\},\\
R_{i,i',j} &= \{(i,a',e_{j,b})\in [k]^2\times X \ |\ a'>i',\ b\in[m]\},\\
S_{i,i',j'} &= \{(a,i',\eta(b,j'))\in [k]^2\times X \ |\ a<i,\ b\in[m]\},\\
T_{i,i',j,j'} &=
\begin{cases}
\{(i,i',e_{j,j'})\} & \text{if } (j,j')\in E,\\
\emptyset & \text{otherwise.}
\end{cases}
\end{align*}
The four sets are pairwise disjoint because they have mutually exclusive constraints on the first two coordinates.

For every $i<i'$ and every $j,j'\in[n]$, we have
$$|Q_{i,i'}| = (i-1)(k-i')\cdot 2m^2,\quad |R_{i,i',j}|=(k-i')m,\quad |S_{i,i',j'}|=(i-1)m,\quad |T_{i,i',j,j'}|=\mathbbm{1}_{(j,j')\in E}\ .$$
Hence,
$$|L(c_{i,j},c_{i',j'})|
=
2(i-1)(k-i')m^2 + (k-i')m + (i-1)m + \mathbbm{1}_{(j,j')\in E} \ .$$
Now define the function $f:\Z^4 \rightarrow \Z$ as 
$$f(i,i',k,m) = 2(i-1)(k-i')m^2 + (k-i')m + (i-1)m + 1\ .$$
Then for all $j,j'$,
\begin{equation*}\label{label-size}
|L(c_{i,j},c_{i',j'})|
=
\begin{cases}
f(i,i',k,m) & \text{if }(j,j')\in E,\\
f(i,i',k,m)-1 & \text{otherwise.}
\end{cases}
\end{equation*}

\noindent\textbf{Satisfaction of the triplet criterion:}
It remains to show that for all $a\le b\le c$ in $C$,
\[
L(a,b)\cap L(b,c) = L(a,c).
\]
Assume $a=c_{i,j}$, $b=c_{i',j'}$, $c=c_{i'',j''}$ with $i<i'<i''$.
The sets $L(c_{i,j},c_{i',j'})$ and $L(c_{i',j'},c_{i'',j''})$ can be written as unions of their $Q,R,S,T$ parts and hence their intersection is the union of the cross-intersections between these parts. The only nonempty cross-intersections are:
\begin{align*}
Q_{i,i'}\cap Q_{i',i''} &= Q_{i,i''},
&&R_{i,i',j}\cap Q_{i',i''} = R_{i,i'',j},\\
S_{i',i'',j''} \cap Q_{i,i'} &= S_{i,i'',j''},
&&R_{i,i',j}\cap S_{i',i'',j''} = T_{i,i'',j,j''}.
\end{align*}
The first three equalities are easy to see. For the last equality, observe that $R_{i,i',j}$ contains tuples with first coordinate $i$ and third coordinate $e_{j,b}$ for $b \in [m]$ while $S_{i',i'',j''}$ contains tuples with second coordinate $i''$ and third coordinate $\eta(b,j'')$ for $b \in [m]$. Therefore, their intersection can contain at most one element $(i,i'',e_{j,j''})$ and it contains this element only when $\eta(j,j'') = e_{j,j''}$, i.e., when $(j,j'')\in E$. It follows that the intersection equals to $T_{i,i'',j,j''}$.
Therefore,
\[
L(c_{i,j},c_{i',j'})\cap L(c_{i',j'},c_{i'',j''})
=
Q_{i,i''}\sqcup R_{i,i'',j}\sqcup S_{i,i'',j''}\sqcup T_{i,i'',j,j''}
=
L(c_{i,j},c_{i'',j''}).
\]
\end{proof}

As we described earlier, the idea behind using the set-labels $L(c,c')$ for partial orders $c \leq c'$ in CLPs is to define a finitary diagram $F: C \rightarrow A$, where each $F(c \leq c')$ is a diagonal matrix with $0-1$ diagonal entries and its support being the index-set $L(c \leq c')$. In the following lemma, we make this idea explicit in order to lift the CLP-gadget of \cref{CLP-construct} to a finitary-diagram-gadget.
 
\begin{lemma}\label{diagram-construct}
    Given a field $\F$, an undirected graph $G = ([n],E)$ and two integers $k,m$ with $m \geq n$, we can construct in polynomial time a finitary diagram $F$ from a CLP $C = \{c_{i,j} \ |\ i \in [k], j \in [n]\}$ to a category $A$ composed of a single $\F$-vector space and linear maps, such that for all $1 \leq i < i' \leq k$ and $j,j' \in [n]$, 
    $$\rk F(c_{i,j}\leq c_{i',j'}) = \begin{cases}
        f(i,i',k,m) & \text{if } (j,j') \in E,\\
        f(i,i',k,m)-1 & \text{otherwise},
    \end{cases}$$
    for some appropriately chosen function $f:\Z^4 \rightarrow \Z$.
\end{lemma}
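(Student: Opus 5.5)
We build $F$ directly from the CLP of \cref{CLP-construct}, taking the same function $f$. First run that construction on $G,k,m$ to obtain $C=\{c_{i,j}\}$ and the label function $L$ over the universe $U=[k]^2\times X$. The universe has polynomial size $|U| = k^2\cdot 2m^2$, so we can fix an arbitrary bijection $\sigma: U \to [N]$ with $N=|U|$.

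Every object is sent to the same vector space, $F(c) := N$ for all $c\in C$. For $c\le c'$, define $F(c\le c')$ to be the $N\times N$ diagonal $0$-$1$ matrix $D_{L(c,c')}$ whose $(\sigma(u),\sigma(u))$ entry is $1$ iff $u\in L(c,c')$. This assignment is computable in polynomial time since all sets involved have polynomial size.

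Next we verify the functor axioms.

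\emph{Identity.} Since $L(c,c)=U$, we get $F(c\le c)=D_U=I_N$.

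\emph{Composition.} For $a\le b\le c$ we use the elementary fact $D_T D_S = D_{S\cap T}$ for diagonal $0$-$1$ matrices. This gives $F(b\le c)F(a\le b)=D_{L(a,b)\cap L(b,c)}$. The triplet criterion, established in \cref{CLP-construct}, says $L(a,b)\cap L(b,c)=L(a,c)$, so the product equals $D_{L(a,c)}=F(a\le c)$.

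One subtlety must be handled: the triplet criterion is applied to triples that include equalities. If $a=b$, the intersection is $U\cap L(b,c)=L(b,c)=L(a,c)$, and the case $b=c$ is symmetric. The proof of \cref{CLP-construct} only treated strict chains $i<i'<i''$, and these degenerate cases are immediate.

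\emph{Rank.} The rank of a diagonal $0$-$1$ matrix is the number of ones on its diagonal, so $\rk F(c_{i,j}\le c_{i',j'})=|L(c_{i,j},c_{i',j'})|$. The claimed case distinction then follows directly from \cref{CLP-construct}, with the same $f$.

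The field $\F$ plays no role beyond the fact that $0$ and $1$ lie in it and ranks of diagonal matrices are field-independent. There is no real obstacle here; the only points that need care are the degenerate chains with equalities in the composition check and confirming that $N$ stays polynomial.
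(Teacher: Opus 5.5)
Your proposal is correct and follows the paper's proof essentially verbatim. Like the paper, you take the CLP from \cref{CLP-construct}, send every object to $\F^{|U|}$, and realize each label $L(c,c')$ as the diagonal $0$-$1$ matrix supported on it, so that functoriality follows from $D_TD_S=D_{S\cap T}$ together with the triplet criterion and the rank claim from $\rk D_S=|S|$; your explicit handling of the degenerate chains with $a=b$ or $b=c$, which the paper's verification of the triplet criterion silently skips, is a welcome bit of extra care.
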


\begin{proof}
     Given the undirected graph $G = ([n],E)$ and the integers $k,m$ with $m \geq n$, first use \cref{CLP-construct} to construct a CLP $C = \{c_{i,j} \ |\ i \in [k], j \in [n]\}$ with label-function $L$ such that for all $1 \leq i < i' \leq k$ and $j,j' \in [n]$, 
    $$|L(c_{i,j},c_{i',j'})| = \begin{cases}
        f(i,i',k,m) & \text{if } (j,j') \in E,\\
        f(i,i',k,m)-1 & \text{otherwise},
    \end{cases}$$
    for some function $f:\Z^4 \rightarrow \Z$. Now we will define the finitary diagram $F: C \rightarrow A$.

    Let $U$ be the universe used in the label-function $L$, i.e., $U = \bigcup\limits_{\substack{a,b \in C\\a \leq b}} L(a,b),$
    and let $q := |U|$.
    We will define the range of the diagram to be the single-object category $A := \{q\}$ where the object $q$ represents the vector space $\F^q$. Therefore, $F(c) = q$ and $F(c \leq c) = \mathbf{I}_q$ for all $c \in C$.

    Choose any bijection $\phi: [q] \rightarrow U$. Now, given a subset $S \subseteq U$, define $\id_S$ to be the diagonal $q\times q$ matrix whose $i$-th diagonal entry is $1$ if $\phi(i) \in S$ and $0$ otherwise. We have 
    $$\rk \id_S = |S| \quad\text{ and }\quad \id_S \cdot \id_T = \id_{S \cap T} $$
    for all $S,T \subseteq U$. Now for all $a \leq b$ in $C$, define $$F(a \leq b) = \id_{L(a,b)}.$$ Then for all $a \leq b \leq c$ in $C$,
    $$F(b\leq c) \cdot F(a\leq b) = \id_{L(a,b)}\cdot \id_{L(b,c)} = \id_{L(a,b)\cap L(b,c)} = \id_{L(a,c)} = F(a\leq c)\ ,$$
    as desired in a diagram, and for all $1 \leq i < i' \leq k$ and $j,j' \in [n]$,
    $$\rk F(c_{i,j}\leq c_{i',j'}) = |L(c_{i,j},c_{i',j'})| = \begin{cases}
        f(i,i',k,m) & \text{if } (j,j') \in E,\\
        f(i,i',k,m)-1 & \text{otherwise}.
    \end{cases}$$
\end{proof}

 Having built the necessary gadget, we can proceed towards the main proof now. The following lemma gives a necessary and sufficient condition for a special kind of finitary formula being satisfiable. Focusing on these special kind of formulas will be sufficient for us to prove $\NP$-hardness of the general problem.

\begin{lemma}\label{model-satisfying-conditions}
    Let $F$ be a finitary diagram from a poset $C$ to a  category $A = \{q\}$, where the object $q$ represents the vector-space $\F^q$ for some field $\F$. Let $S = [q]\langle M_1\rangle\langle M_2\rangle\cdots\langle M_{k}\rangle$ be an object formula for singular matrices $M_1,\cdots,M_{k}$ over $\F$ of dimension $q \times q$. Then given $c_1 \in C$, we have $F,c_1 \models S$ if and only if there exists a chain $c_1 < c_2 < \cdots < c_{k+1}$ in $C$ satisfying 
    $$\rk(M_{i'-1} \cdots M_{i+1}M_i) = \rk F(c_i \leq c_{i'})$$
    for all $1 \leq i < i' \leq k+1$.
\end{lemma}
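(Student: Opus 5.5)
We unfold the semantics of \cref{subsec:FF} and then apply the rank invariant criterion (\cref{rank-invariant}).

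\textbf{Unfolding the semantics.} Since $F(c_1)=q$ always, the clause $[q]$ is automatically satisfied. By induction on the nested diamonds, $F,c_1\models S$ holds iff there exist
\begin{itemize}
\item a chain $c_1\le c_2\le\cdots\le c_{k+1}$ in $C$, and
\item invertible $q\times q$ matrices $X_1,\dots,X_{k+1}$,
\end{itemize}
such that $M_i X_i = X_{i+1} F(c_i\le c_{i+1})$ for all $i\in[k]$. The final $\top$ (implicit after the last diamond) imposes nothing more.

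\textbf{Matching \cref{rank-invariant}.} Set $A_i := F(c_i\le c_{i+1})$ and $B_i := M_i$, with all $d_i=q$. The equations above read $X_{i+1}A_i = B_iX_i$, exactly the form in \cref{rank-invariant}. Functoriality gives $A_{[i,i']} = F(c_{i'}\le c_{i'+1})\cdots F(c_i\le c_{i+1}) = F(c_i\le c_{i'+1})$, and $B_{[i,i']} = M_{i'}\cdots M_i$. Reindexing $i'\mapsto i'-1$, the rank condition of \cref{rank-invariant} becomes $\rk(M_{i'-1}\cdots M_i)=\rk F(c_i\le c_{i'})$ for $1\le i<i'\le k+1$. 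Hence for a fixed chain, invertible $X_i$ exist iff these rank equalities hold. Quantifying existentially over chains gives both directions of the lemma, except that the semantics allows a weak chain while the lemma asks for a strict one.

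\textbf{Strictness.} This is the main subtle point, and it is where singularity of the $M_i$ is used. Suppose $c_i=c_{i+1}$ for some $i$. Then $F(c_i\le c_{i+1})=\id_q$ has rank $q$, so the rank condition for the pair $(i,i+1)$ would force $\rk M_i=q$. Since $M_i$ is singular, this is impossible, so $c_i<c_{i+1}$. Therefore any chain satisfying the rank equalities is automatically strict. The same fact is immediate without the criterion: $M_iX_i=X_{i+1}\id_q$ would make $M_i$ invertible.

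For the converse direction, a strict chain is in particular a valid weak chain for the semantics, so the rank equalities along it yield the $X_i$ via \cref{rank-invariant}.
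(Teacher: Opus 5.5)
Your proof is correct and follows essentially the same route as the paper: you unfold the semantics into a weak chain with invertible $X_1,\dots,X_{k+1}$ satisfying $M_iX_i = X_{i+1}F(c_i\le c_{i+1})$, apply the rank invariant criterion (Theorem~\ref{rank-invariant}) with $A_i = F(c_i\le c_{i+1})$ and $B_i = M_i$, and use the singularity of the $M_i$ together with $F(c\le c)=\id_q$ to force the chain to be strict. Compared with the paper, you only make the index bookkeeping and the consecutive-pair strictness argument more explicit.
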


\begin{proof}
We have

$F,c_1 \models [q]\langle M_1\rangle\langle M_2\rangle\cdots\langle M_k\rangle$ if and only if

$ \exists$ a chain $c_1 \leq c_2 \leq \cdots \leq c_{k+1}$ in $C$ and matrices $X_1,\cdots,X_{k+1} \in \GL_q(\F)\ :\ \forall i\in [k],\\ M_i \cdot X_i = X_{i+1}\cdot F(c_i \leq c_{i+1})$.

By \cref{rank-invariant}, such matrices $X_1,\cdots,X_{k+1}$ can exist if and only if for all $1 \leq i < i' \leq k+1$,
$$\rk(M_{i'-1} \cdots M_{i+1}M_i) = \rk F(c_i \leq c_{i'})\ .$$

Since each $M_i$ is singular and $F(c\leq c) = \id_q$ for all $c \in C$, we conclude that $c_1,\cdots,c_{k+1}$ are distinct elements if the above condition is to be satisfied. Hence, we must have $c_1 < c_2 < \cdots < c_{k+1}$, as desired.
\end{proof}

 Now we are ready to prove $\NP$-hardness for $\posFF$.
 
\begin{theorem}\label{posFF-hardness}
    $\posFF_{\F}$ is $\NP$-hard for all fields $\F$.
\end{theorem}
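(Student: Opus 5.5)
We reduce from \textsc{Clique}. Let $(G,k)$ be an instance with $G=([n],E)$ and $E$ irreflexive and symmetric. First we force the start of the chain. We form $G'=([n+1],E')$, where $E'$ is $E$ together with all pairs $(n+1,v)$ and $(v,n+1)$ for $v\in[n]$; vertex $n+1$ is universal and has no loop. We then apply \cref{diagram-construct} to $G'$ with $k+1$ layers and $m:=n+1$. This gives a diagram $F$ on the CLP $C=\{c_{i,j}\mid i\in[k+1],j\in[n+1]\}$ with single object $q=|U|$ and $U=[k+1]^2\times X$. It satisfies $\rk F(c_{i,j}\le c_{i',j'})=f(i,i',k+1,m)$ if $(j,j')\in E'$, and $f(i,i',k+1,m)-1$ otherwise. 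The start object is $c_{1,n+1}$.

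Next we build matrices $M_1,\dots,M_k$ whose product ranks are exactly the ``edge'' values $f(i,i',k+1,m)$ for all $i<i'$. We apply the same construction (\cref{CLP-construct} and \cref{diagram-construct}) to an auxiliary graph $H=([n+1],E_H)$ with $(1,1)\in E_H$, using the same $k+1$ and $m$. This yields a diagram $F_H$ with the same universe $U$, hence the same $q$, since $U=[k+1]^2\times X$ does not depend on the graph. We set $M_i:=F_H(c_{i,1}\le c_{i+1,1})$. By functoriality,
\[
M_{i'-1}\cdots M_i=F_H(c_{i,1}\le c_{i',1}),
\]
which has rank $f(i,i',k+1,m)$. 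Each $M_i$ is singular: the $Q,R,S,T$ parts avoid, for instance, the element $(i+1,i,x)$, so $|L|<q$. All these matrices are $0/1$ diagonal, so the construction is field independent and of polynomial size. The output instance is $(F,c_{1,n+1},[q]\langle M_1\rangle\cdots\langle M_k\rangle)$.

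For correctness we invoke \cref{model-satisfying-conditions}. We have $F,c_{1,n+1}\models S$ iff there is a chain $c_{1,n+1}=d_1<d_2<\dots<d_{k+1}$ with $\rk F(d_i\le d_{i'})=f(i,i',k+1,m)$ for all $i<i'$. In the CLP, strict comparability only goes up layers, and there are $k+1$ layers. Hence a strict chain of length $k+1$ starting in layer $1$ must have $d_i=c_{i,j_i}$ for every $i$. The rank condition then says exactly that $(j_i,j_{i'})\in E'$ for all $i<i'$. Because $E'$ has no loops, the $j_i$ are pairwise distinct. Since $j_1=n+1$ and $E'$ is symmetric, $\{j_2,\dots,j_{k+1}\}\subseteq[n]$ is a $k$-clique of $G$. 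Conversely, a $k$-clique $v_1,\dots,v_k$ of $G$ gives the chain $c_{1,n+1},c_{2,v_1},\dots,c_{k+1,v_k}$, which meets all the rank conditions.

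The main point needing care is the second step: all $\binom{k+1}{2}$ product ranks must match simultaneously, not just the ranks of the individual $M_i$. Reusing the CLP machinery on a graph containing the edge $(1,1)$ handles this. We must also check that the universe size $q$ is independent of the graph, so that $F$ and the $M_i$ live on the same $\F^q$. The rest follows directly from the layered structure together with \cref{model-satisfying-conditions} and \cref{rank-invariant}.
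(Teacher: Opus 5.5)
Your proposal is correct and follows the paper's proof essentially step for step. Both arguments use the universal vertex $n+1$ to force the start $c_{1,n+1}$, apply \cref{diagram-construct} to $G'$ with $k+1$ layers and $m=n+1$, define the $M_i$ as consecutive maps of a second diagram built from a graph containing $(1,1)$ (the paper uses the complete graph with loops), and conclude via \cref{model-satisfying-conditions}. Your proof also fills in details the paper leaves implicit or states loosely: that $q$ depends only on $k$ and $m$, that each $M_i$ is singular, and that $n+1$ has no loop, so the $j_i$ are distinct.
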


\begin{proof}
We will give a polynomial-time reduction from \textsc{Clique} to $\posFF_{\F}$. Given a simple undirected graph $G$ on the vertex set $[n]$ and a parameter $k$, we have to decide whether $G$ has a $k$-clique. Consider the (loopless) undirected graph $G' = ([n+1], E')$ where 
    $$E' = \{(i,j) \in [n+1]^2\ |\ i=n+1 \text{ or } j=n+1 \text { or there is an edge between vertex $i$ and $j$ in $G$} \}\ .$$ Clearly, $G$ has a $k$-clique if and only if $G'$ has a $(k+1)$-clique containing the vertex $(n+1)$.
    
 Now, apply \cref{diagram-construct} on the graph $G'$ with the parameters $k+1$ and $n+1$. Then we can construct in polynomial time a finitary diagram $F$ from a CLP $C = \{c_{i,j} \ |\ i \in [k+1], j \in [n+1]\}$ to a single-object category $A$ such that for all $1 \leq i < i' \leq k+1$ and $j,j' \in [n+1]$, 
    $$\rk F(c_{i,j}\leq c_{i',j'}) = 
        f(i,i',k+1,n+1) - \mathbbm{1}_{(j,j')\not\in E'},$$
    for some function $f:\Z^4 \rightarrow \Z$.

Notice that if we applied \cref{diagram-construct} on the complete graph (with loops) on $(n+1)$ vertices instead with the same parameters, then we would obtain another finitary diagram $F'$ from the same CLP $C$ to some single-object category $A'$ such that for all $1 \leq i < i' \leq k+1$ and $j,j' \in [n+1]$, 
    $$\rk F'(c_{i,j}\leq c_{i',j'}) = 
        f(i,i',k+1,n+1) ,$$
because $(j,j')$ would always be an edge. 
We use the second diagram $F'$ to define the finitary formula in the $\posFF$-instance we create, while the first diagram $F$ will be the diagram in it. 
Let us set $M_i := F'(c_{i,1} \leq c_{i+1,1})$ for $1 \leq i \leq k$. Then for all $1\leq i < i' \leq k+1$, 
$$\rk (M_{i'-1} \cdots M_{i+1}M_i) = \rk F'(c_{i,1} \leq c_{i',1}) = f(i,i',k+1,n+1)\ .$$

Note that the single-object categories $A$ and $A'$ generated in the above two applications of \cref{diagram-construct} may be different. However, with very slight modification in the proof of the lemma, we can ensure that $A$ and $A'$ are equal to the same category $\{q\}$. Hence, each $M_i$ is a $q \times q$ matrix.

Now define the object formula $S := [q]\langle M_1\rangle\langle M_2\rangle \cdots \langle M_k\rangle$. We claim that $F,c_{1,n+1} \models S$ iff $G'$ has a $(k+1)$-sized clique containing the vertex n+1. 

Using \cref{model-satisfying-conditions}, we have $F,c_{1,n+1} \models S$ iff there exists a chain  $c_{1,b_1} = c_{1,n+1} < c_{2,b_2} < \cdots < c_{k+1,b_{k+1}}$ in $C$ (note that any chain of $k+1$ increasing elements in $C$ must pick exactly one element from each layer) such that for all $1 \leq i < i' \leq k+1$,
\begin{align*}
    &\rk(M_{i'-1} \cdots M_{i+1}M_i) = \rk F(c_{i,b_i} \leq c_{i',b_{i'}}),\\
    &\text{which means } f(i,i',k+1,n+1) = f(i,i',k+1,n+1) - \mathbbm{1}_{(b_i,b_{i'})\not\in E'},\\
    &\text{which means } (b_i,b_{i'}) \in E'.
\end{align*}

Therefore, $F,c_{1,n+1} \models S$ \\
iff there is a $(k+1)$-clique in $G'$ formed by the vertices $n+1,b_2,b_3,\cdots,b_{k+1}$\\
iff there is a $k$-clique in $G$ formed by the vertices $b_2,\cdots,b_{k+1}$.

Hence, the problem \textsc{Clique} reduces to $\posFF_{\F}$, making the latter $\NP$-hard.
\end{proof}

\section{Existential theory of special linear matrices}

The (generalized) existential theory of special linear matrices is defined in the same way as the (generalized) existential theory of invertible matrices. The only difference is that we quantify over matrices with determinant equal to $1$. Our efficient algorithm from \cref{sec:ETIM:RP} does not work in this situation, since the Schwartz-Zippel lemma can only test whether a polynomial is non-zero, but not whether there is an input at which it evaluates to $1$. In fact, we will show that the generalized existential theory of special linear matrices $\genETSM_{\Rset}$ is $\existsR$-complete.

First, we define the $\existsR$-complete problem $\ETRinv$, which we will reduce to $\genETSM_{\Rset}$.

\begin{definition}
    In $\ETRinv$, we are given formulae of the form $\exists x_1 \dots \exists x_n:\, c_1 \wedge c_2 \wedge c_3\wedge \dots \wedge c_m$, where all $c_i$ are of one of the following forms:
    \[ x=1, \enspace  x +y = z, \enspace x \cdot y = 1\]
    Here, $x,y$ and $z$ are arbitrary variables. The question is whether there exists an assignment of real nonzero values to the variables such that the given formula evaluates to true.
\end{definition}

There are many variants of $\ETRinv$ known. To the best of our knowledge, the first one was defined by \cite{DBLP:journals/jacm/AbrahamsenAM22}, see also \cite[Problem L5--L7]{DBLP:journals/corr/abs-2407-18006}. In contrast to our definition, \cite{DBLP:journals/jacm/AbrahamsenAM22} also requires that the domain of the variables is $[\frac{1}{2},2]$.

The following problem is well-known to be $\existsR$-complete: decide satisfiability of a conjunction of polynomial equations only of the form  
\begin{equation} \label{eq:tseitin}
  x = 1, \enspace x + y = z, \enspace x \cdot y = z.
\end{equation}
This follows essentially from the $\existsR$-completeness of the feasibility problem together with Tseitin's trick to decompose an arbitrary system of polynomial equations into a system of equations of the form (\ref{eq:tseitin}). Furthermore, by shifting the variables and using the famous ball theorem (see \cite{DBLP:journals/jsc/GrigorevVV88}), we can construct an equivalent system of polynomial equations of the same form such that the former system has a solution if and only if the latter has one with the absolute value of all coordinates being greater than 1.

Therefore, the only step left is to remove multiplications of the form $x \cdot y = z$ and replace them with multiplications of the form $x \cdot x' = 1$. First, we build squares:
\begin{align*}
    x \cdot x' &= 1 &&\implies x' = x^{-1} = \frac{1}{x}\\
    x_{-1} + 1 &= x &&\implies x_{-1} = x-1\\
    x_{-1} \cdot x_{-1}' &= 1 &&\implies x_{-1}' = (x-1)^{-1} = \frac{1}{x-1}\\
    t + x' &= x_{-1}' &&\implies t = \frac{1}{x-1}-\frac{1}{x}\\
    t \cdot t' &=1 &&\implies t'= \left( \frac{1}{x-1}-\frac{1}{x} \right)^{-1}\\
    t' + x &= X &&\implies X = \left( \frac{1}{x-1}-\frac{1}{x} \right)^{-1} + x = \left( \frac{1}{x^2 - x} \right)^{-1} + x = x^2.
\end{align*}
If the original ETR instance has a solution, then it has one with all variables having an absolute value greater than $1$. Therefore, all internal variables in the above reduction gadget will never assume the value $0$, hence they form a valid solution for $\ETRinv$.  

And then we can construct multiplication similarly:
    \[ \frac{(x+y)^2 - x^2 -y^2}{2} = x \cdot y\ .\]
So, we have shown that multiplication can be expressed in $\ETRinv$ with a constant amount of additional space, which completes the reduction. Thus, we obtain

\begin{proposition}
    $\ETRinv$ is $\existsR$-complete.
\end{proposition}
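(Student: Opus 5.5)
The proposition claims that $\ETRinv$ is $\existsR$-complete. Membership in $\existsR$ is immediate. An $\ETRinv$ instance is a conjunction of polynomial equations, and the requirement that all variables are nonzero can be written as the single polynomial inequality $\prod_i x_i \neq 0$. Equivalently, we can add for each $x_i$ a fresh $y_i$ with $x_i y_i = 1$. Either way we obtain an $\ETR$ formula of the form (\ref{eq:etr}).

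For hardness we reduce from feasibility of systems built from constraints of the form (\ref{eq:tseitin}). This problem is $\existsR$-hard by Tseitin's decomposition of an arbitrary polynomial system. The plan has three steps.

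\textbf{Step 1: guarantee a solution with large coordinates.} By the ball theorem \cite{DBLP:journals/jsc/GrigorevVV88}, if the system is feasible, it has a solution in a ball of radius $2^{2^{\poly}}$. We shift each original variable, substituting $x = \tilde x - s$ for a huge constant $s$. The constant $s$ is built by repeated squaring from $1$ using only constraints of type (\ref{eq:tseitin}), which takes polynomially many variables. After the shift, every feasible system has a solution in which all original variables exceed $1$ in absolute value. I expect this to be the main obstacle. Each product $x y = z$ must be re-expanded after the shift into $\tilde x\tilde y - s\tilde x - s\tilde y + s^2 = \tilde z - s$ using auxiliary variables. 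Those auxiliaries must also avoid the values $0$ and $1$ at the witnessing solution, which matters for Step 2. I would try to achieve this by choosing $s$ (or a second, even larger shift) so that every intermediate term is dominated by a single huge monomial.

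\textbf{Step 2: build squaring from inverses.} We implement $X = x^2$ by the chain listed above:
\[
x', \quad x_{-1}=x-1, \quad x_{-1}', \quad t=x_{-1}'-x', \quad t', \quad X=t'+x.
\]
The identity $\frac1{x-1}-\frac1x=\frac1{x^2-x}$ shows that $X=x^2$ whenever all the inverses exist. We must check that every auxiliary variable is nonzero at the witnessing solution. This holds if $|x|>1$: then $x\neq 0$, $x-1\neq 0$, and $t = \frac{1}{x^2-x}\neq 0$. So the chain is a sound and complete encoding. Note that $x=1$ needs its own constant: we create a variable $o$ with $o=1$ and write $x_{-1}+o=x$.

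\textbf{Step 3: build multiplication from squaring.} We express $z = xy$ via $2z + x^2 + y^2 = (x+y)^2$, using additions and three squarings. The squaring gadget is applied to $x+y$ as well, which requires $|x+y|>1$ at the solution. Step 1 must therefore also ensure that $|x+y|>1$ and that the sums used are nonzero. To get this, I would make all shifted values positive and large, by choosing the shift so that every original variable lies in $(2,\infty)$. Then $x+y>4$, $2z$, $x^2$ and $y^2$ are all nonzero, and every auxiliary variable is nonzero.

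Finally, a solution of the original system yields a nonzero solution of the $\ETRinv$ instance, and conversely every $\ETRinv$ solution satisfies the original equations. This completes the reduction.
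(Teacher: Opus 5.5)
Your proposal follows the paper's route: membership is direct; for hardness you use Tseitin decomposition into constraints $x=1$, $x+y=z$, $x\cdot y=z$, a ball-theorem shift so that some witness has large coordinates, the inverse chain based on $\frac{1}{x-1}-\frac{1}{x}=\frac{1}{x^2-x}$ for squaring, and $2xy=(x+y)^2-x^2-y^2$ for products. Your version is more careful than the paper's, which only guarantees $|x|>1$ and so does not exclude $x+y\in\{0,1\}$ at the input of a squaring gadget. In Step~1, however, the ``single dominating monomial'' idea does not work on the mixed-sign expansion you wrote: for example, the partial sum $\tilde x\tilde y-s\tilde x=\tilde x\,y$ vanishes whenever the original $y$ is $0$. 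What does work is positivity: write each shifted product as $\tilde x\tilde y+s^2+s=\tilde z+s\tilde x+s\tilde y$ and build each side using additions only. Then every partial sum exceeds $2$, and every input to a product or squaring gadget is larger than $2$, exactly as you already arrange in Step~3.
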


Now we are ready to prove $\existsR$-hardness of $\genETSM_{\Rset}$. In fact, our proof will show that this is true even if we restrict ourselves to two kinds of linear equations in $\genETSM_{\Rset}$, namely matrix equations of the form $XA = BY$ and $XA = YB$. It remains an interesting open question whether only the first type is sufficient to prove $\existsR$-hardness, that is, whether $\ETSM_{\Rset}$ is already $\existsR$-complete.

\subsection{Representing variables} \label{subsec:var}
We store variables using $2 \times 2$-matrices $\left(\begin{smallmatrix} x & r_1 \\ r_2 & x' \end{smallmatrix}\right)$. By adding the constraint 
\begin{align*}
    \begin{pmatrix}
        x & r_1\\
        r_2 & x'
    \end{pmatrix}
    \begin{pmatrix}
        1 & 0\\
        0 & 0
    \end{pmatrix}
    &=
    \begin{pmatrix}
        1 & 0\\
        0 & 0
    \end{pmatrix}
    \begin{pmatrix}
        x & r_1\\
        r_2 & x'
    \end{pmatrix},
\end{align*}
which is equivalent to 
\begin{align*}
    \begin{pmatrix}
        x & 0\\
        r_2 & 0
    \end{pmatrix}
    &=
    \begin{pmatrix}
        x & r_1\\
        0 & 0
    \end{pmatrix},
\end{align*}
we ensure that $r_1 = r_2 = 0$. $x$ and $x'$ are not constrained by this equation. However, since we quantify over matrices of determinant $1$, we get the additional constraint $x x' = 1$. This ensures that $x$ can never be zero and also forces that $x'$ is the inverse of $x$. 

\subsection{Setting variables to $1$} \label{subsec:one}

We first show how to simulate equations of the type $x = 1$. For this, we have a special $1 \times 1$-matrix $E$. Since we quantify only over matrices of determinant $1$, simply quantifying over $E$ ensures $E = (1)$. Now we take the $2 \times 2$-matrix $\left(\begin{smallmatrix} x & 0 \\ 0 & x'\end{smallmatrix}\right)$
that stores $x$ and add the constraint:
\begin{align*}
        E
    \begin{pmatrix}
        1 & 0
    \end{pmatrix}
    &=
    \begin{pmatrix}
        1 & 0
    \end{pmatrix}
    \begin{pmatrix}
        x & 0\\
        0 & x'
    \end{pmatrix}\\
  \iff  \begin{pmatrix}
        1 & 0
    \end{pmatrix}
    &=
    \begin{pmatrix}
        x & 0
    \end{pmatrix}
\end{align*}
This forces $x = 1$, and henceforth $x' = 1$.

\subsection{Equations of the form $xy = 1$} \label{subsec:prod}

Given the two representations of $x$ and $y$, $\left(\begin{smallmatrix} x & 0 \\ 0 & x'\end{smallmatrix}\right)$ and $\left(\begin{smallmatrix} y & 0 \\ 0 & y'\end{smallmatrix}\right)$, we need to ensure that $x = y'$, since we already know that $xx' = 1$ (and $yy' = 1$). This is achieved by the following constraint:
\begin{align*}
    \begin{pmatrix}
        x & 0\\
        0 & x'
    \end{pmatrix}
    \begin{pmatrix}
        0 & 1\\
        1 & 0
    \end{pmatrix}
    &=
    \begin{pmatrix}
        0 & 1\\
        1 & 0
    \end{pmatrix}
    \begin{pmatrix}
        y & 0\\
        0 & y'
    \end{pmatrix}\\
    \iff \begin{pmatrix}
        0 & x\\
        x' & 0
    \end{pmatrix}
    &=
    \begin{pmatrix}
        0 & y'\\
        y & 0
    \end{pmatrix}\ .
\end{align*}
Note that the two equations $xx' = 1$ and $yy' = 1$ then also enforce $y = x'$, so the second constraint is automatically fulfilled.

\subsection{Equations of the form $x + y = z$} \label{subsec:add}

The tricky part is to simulate the additions. For each addition, we have an extra variable $r$, for which we set up a $2 \times 2$-matrix as above. Second, for each addition, we have a $4 \times 4$-matrix $H = (h_{i,j})$ over which we will quantify. The difficult part is to cope with the constraint that the determinant of $H$ has to be $1$.

First we set up the equation
\begin{align*}
    \begin{pmatrix}
        h_{1,1} & h_{1,2} & h_{1,3} & h_{1,4} \\
        h_{2,1} & h_{2,2} & h_{2,3} & h_{2,4} \\
        h_{3,1} & h_{3,2} & h_{3,3} & h_{3,4} \\
        h_{4,1} & h_{4,2} & h_{4,3} & h_{4,4} 
    \end{pmatrix}
    \begin{pmatrix}
        1 & 0\\
        0 & 0\\
        0 & 0\\
        0 & 0
    \end{pmatrix}
    &=
    \begin{pmatrix}
        1 & 0\\
        0 & 0\\
        0 & 0\\
        0 & 0
    \end{pmatrix}
    \begin{pmatrix}
        r & 0 \\
        0 & r'
    \end{pmatrix}\\
   \iff \begin{pmatrix}
        h_{1,1} & 0\\
        h_{2,1} & 0\\
        h_{3,1} & 0\\
        h_{4,1} & 0
    \end{pmatrix}
    &=
    \begin{pmatrix}
        r & 0\\
        0 & 0\\
        0 & 0\\
        0 & 0 
    \end{pmatrix}.
\end{align*}
This constrains the first column of the matrix $H$ and leaves all other columns unconstrained. In a similar way, we constrain the second column:
\begin{align*}
    \begin{pmatrix}
        h_{1,1} & h_{1,2} & h_{1,3} & h_{1,4} \\
        h_{1,2} & h_{2,2} & h_{2,3} & h_{2,4} \\
        h_{3,1} & h_{3,2} & h_{3,3} & h_{3,4} \\
        h_{4,1} & h_{4,2} & h_{4,3} & h_{4,4} 
    \end{pmatrix}
    \begin{pmatrix}
        0 & 0\\
        1 & 0\\
        0 & 0\\
        0 & 0
    \end{pmatrix}
    &=
    \begin{pmatrix}
        1 & 0\\
        1 & 0\\
        0 & 0\\
        0 & 0
    \end{pmatrix}
    \begin{pmatrix}
        x & 0 \\
        0 & x'
    \end{pmatrix}\\
 \iff   \begin{pmatrix}
        h_{1,2} & 0\\
        h_{2,2} & 0\\
        h_{3,2} & 0\\
        h_{4,2} & 0
    \end{pmatrix}
    &=
    \begin{pmatrix}
        x & 0\\
        x & 0\\
        0 & 0\\
        0 & 0 
    \end{pmatrix}.
\end{align*}
By adding two similar constraints for the third and the fourth column involving $y$ and $z$, respectively, we can achieve that $H$ can only take the form
\begin{align} \label{eq:H}
    H & = \begin{pmatrix}
        r & x & y & z \\
        0 & x & 0 & 0 \\
        0 & 0 & y & 0 \\
        0 & 0 & 0 & z 
    \end{pmatrix}
\end{align}
The reason for this rather involved set of equations is that it ensures that $\det H = 1$ can be achieved.
\begin{observation}
\begin{enumerate}
    \item $\det H = rxyz$.
    \item In any solution to the equations constructed so far, $x \not = 0$, $y \not= 0$, and $z \not= 0$.
    \item Since $r$ does not appear anywhere else, we can set $r = 1/(xyz)$ to achieve $\det H = 1$. 
\end{enumerate}
\end{observation}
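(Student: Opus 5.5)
The statement is the Observation about the matrix $H$ in \cref{eq:H}, and I would prove its three items in order, each by a direct computation or by reading off earlier constraints.

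For item 1, I would note that $H$ is upper triangular. Below the diagonal, column 1 has zeros in rows 2--4, column 2 has zeros in rows 3--4, and column 3 has a zero in row 4. So the determinant is the product of the diagonal entries, $r\cdot x\cdot y\cdot z$. The only thing to check is that the four column constraints pin $H$ to exactly the form \cref{eq:H}. The constraints for columns one and two are written out explicitly, and the analogous ones for columns three and four use right-hand factors with a $1$ in rows $1,3$ and rows $1,4$ respectively. Hence every entry of $H$ is an affine function of $r,x,y,z$ as displayed, and no freedom is left.

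For item 2, I would use the variable representation of \cref{subsec:var}. Each of $x,y,z$ is stored in a $2\times 2$ matrix $\left(\begin{smallmatrix} x & 0\\ 0 & x'\end{smallmatrix}\right)$ over which we quantify with determinant $1$. This forces $xx'=1$, and similarly $yy'=1$ and $zz'=1$. Therefore $x,y,z\neq 0$ in every solution, whatever other constraints are present.

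Item 3 is the point that needs care, and I expect it to be the main (mild) obstacle. We must make sure that choosing $r$ freely does not conflict with any other constraint. The variable $r$ is fresh for this addition gadget. It occurs only in its own $2\times 2$ storage matrix $\left(\begin{smallmatrix} r & 0\\ 0 & r'\end{smallmatrix}\right)$ and in the first-column constraint of $H$. So I would start from any assignment satisfying all the other constraints. By item 2, $xyz\neq 0$, so I can set $r=1/(xyz)$ and $r'=xyz$. This gives the storage matrix of $r$ determinant $rr'=1$, and by item 1 it gives $\det H = rxyz = 1$. All linear constraints involving $r$ and $H$ hold by construction, because $H$ is defined by \cref{eq:H}. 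Hence the gadget never rules out a solution of the remaining system.
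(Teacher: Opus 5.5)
Your proposal is correct and is essentially the paper's own (implicit) argument: the column constraints force $H$ into upper triangular form, so $\det H = rxyz$; the determinant-one diagonal storage matrices give $xx' = yy' = zz' = 1$; and since $r$ is fresh you may choose $r = 1/(xyz)$ and $r' = xyz$. The one point worth adding is that the later matrix $T$ reuses the same $r$, which is harmless because $\det T = \det H$ and $r$ is multiplied by $0$ in the addition equation.
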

We set up a second matrix $T$ similar to $H$. We can take the same variable $r$ and also take the same set of equations, but we transpose each equation on both sides. Since the $2 \times 2$-matrices representing the variables are diagonal, they equal their transpose. Therefore, in any solution to the equations, $T$ will be the transpose of $H$, i.e.,
\begin{align*}
    T = H^T = \begin{pmatrix}
        r & 0 & 0 & 0 \\
        x & x & 0 & 0 \\
        y & 0 & y & 0 \\
        z & 0 & 0 & z 
    \end{pmatrix}.
\end{align*}
In particular, $\det T = \det H = 1$. Finally, we set up the equation that simulates the addition using the matrices $H$ and $T$. It is the only equation that is of the form $XA = YB$.
\begin{align} \label{eq:add}
    \begin{pmatrix}
        r & x & y & z \\
        0 & x & 0 & 0 \\
        0 & 0 & y & 0 \\
        0 & 0 & 0 & z 
    \end{pmatrix}
    \begin{pmatrix}
        0\\1 \\1\\-1 
    \end{pmatrix}
    &=
    \begin{pmatrix}
        r & 0 & 0 & 0 \\
        x & x & 0 & 0 \\
        y & 0 & y & 0 \\
        z & 0 & 0 & z
    \end{pmatrix}
        \begin{pmatrix}
        0\\1 \\1\\-1 
    \end{pmatrix}\\
\iff    \begin{pmatrix}
        x+y-z\\x \\y \\-z
    \end{pmatrix}
    &=
    \begin{pmatrix}
        0\\x \\y \\-z 
    \end{pmatrix}\ . \notag
\end{align}
The equality of the first entries of the resulting vectors simulates the addition. Note that the equations of the other three entries are trivially satisfied, so they do not impose any new constraints on our variables.

\begin{theorem} \label{thm:genETSM}
    $\ETRinv \leP \genETSM_{\Rset}$.
\end{theorem}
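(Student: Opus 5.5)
The plan is to assemble the gadgets of \cref{subsec:var}--\cref{subsec:add} into a polynomial-time translation and then verify both directions of correctness. Given an $\ETRinv$ instance $\exists x_1\dots\exists x_n: c_1\wedge\dots\wedge c_m$, the output $\genETSM_{\Rset}$ instance will contain the following:
\begin{itemize}
\item one $2\times 2$ matrix $V_\ell$ per variable $x_\ell$, with the diagonalizing constraint of \cref{subsec:var};
\item one $1\times 1$ matrix $E$;
\item for each constraint $x=1$, the equation of \cref{subsec:one};
\item for each $xy=1$, the equation of \cref{subsec:prod};
\item for each $x+y=z$, a fresh auxiliary variable matrix $R$ (itself diagonalized as in \cref{subsec:var}) together with fresh $4\times 4$ matrices $H$ and $T$. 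The matrix $H$ receives the four column constraints producing the shape (\ref{eq:H}), $T$ receives the transposed constraints, and the two are linked by equation (\ref{eq:add}).
\end{itemize}
Every constraint is a matrix equation whose entries are linear in the quantified entries, with constant-size coefficient matrices. Hence the instance is a $\genETSM_{\Rset}$ instance of size $O(n+m)$, computable in polynomial time.

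For soundness, take any solution of the constructed instance. By \cref{subsec:var} each $V_\ell$ is $\mathrm{diag}(v_\ell,v_\ell')$ with $v_\ell v_\ell'=1$, so in particular $v_\ell\neq 0$. The equations of \cref{subsec:one} and \cref{subsec:prod} enforce $v_\ell=1$ and $v_\ell v_{\ell'}=1$ respectively. For an addition gadget, the column constraints force $H$ into the form (\ref{eq:H}) and $T=H^T$. The first coordinate of (\ref{eq:add}) then yields $x+y-z=0$. Thus $(v_1,\dots,v_n)$ is a nonzero real assignment satisfying all $c_i$.

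For completeness, take a satisfying assignment $x_1,\dots,x_n$ of nonzero reals. Set $V_\ell=\mathrm{diag}(x_\ell,x_\ell^{-1})$, which has determinant $1$, and $E=(1)$. The constraints of the types $x=1$ and $xy=1$ then hold by direct computation. For each addition gadget with variables $x,y,z$ (all nonzero), set $r=1/(xyz)$ and $R=\mathrm{diag}(r,r^{-1})$. Define $H$ by (\ref{eq:H}) and $T=H^T$, so that $\det H=\det T=rxyz=1$ by the Observation. Every column constraint is satisfied by construction, and the first entry of (\ref{eq:add}) reads $x+y-z=0$, which holds. The remaining three entries of (\ref{eq:add}) are identities.

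I expect completeness to be the main obstacle, specifically checking that the gadgets do not over-constrain the system. Each addition must get its own fresh $r$, so that $r$ appears in no other constraint and can be chosen freely to fix $\det H=1$. Variables shared between gadgets only appear through their diagonal matrices $V_\ell$, whose determinant is automatically $1$. The transposed constraints for $T$ must also be consistent with the same $V_\ell$; this holds because those matrices are diagonal and therefore symmetric. Once these independence points are verified, the two directions combine to give $\ETRinv\leP\genETSM_{\Rset}$.
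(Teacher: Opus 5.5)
Your proposal is correct and follows the paper's proof essentially step by step. It uses the same variable, constant, inverse and addition gadgets; completeness comes from $V_\ell=\mathrm{diag}(\xi_\ell,\xi_\ell^{-1})$ and a fresh $r=1/(xyz)$ per addition, giving $\det H=\det T=1$, and soundness is read off from the $(1,1)$-entries, with your explicit choice of $r$ and consistency remark for $T$ simply spelling out what the paper states briefly.
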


\begin{proof}
Assume that $\exists x_1 \dots \exists x_k\,: c_1 \wedge \dots \wedge c_m$ is a yes-instance of $\ETRinv$. Let $\xi_1,\dots,\xi_k$ be a satisfying assignment. For each variable $x_i$, we set up a $2 \times 2$-matrix as in \cref{subsec:var}. If we set the diagonal entries to $\xi_i$ and $\xi_i^{-1}$ and the off-diagonal entries to $0$, then the determinant is $1$ and the equations in \cref{subsec:var} are satisfied. If we have an equation $x_i = 1$, then $\xi_i = 1$ and the $2 \times 2$-matrix corresponding to $x_i$ satisfies the equations of \cref{subsec:one} by construction. If we have an equation $c_\mu$ of the form $x_i x_j = 1$, then $\xi_i = 1/\xi_j$ and the equations of \cref{subsec:prod} are again satisfied by construction. Finally, if we have an equation $c_\mu$ of the form $x_i + x_j = x_h$, then we first set the values of $H$ as depicted in (\ref{eq:H}) (substituting $\xi_i, \xi_j, \xi_h$ for $x,y,z$). Since $\xi_1,\dots,\xi_k$ is a feasible solution, all of them are nonzero. Therefore, we can choose $r$ in such a way that the determinant of $H$ is $1$. $r$ is only used for this addition. The entries of the $2 \times 2$-matrix for $r$ are set as above. In the same way, we can choose the values for $T$. Since $\xi_i + \xi_j = \xi_h$ also the equations $(\ref{eq:add})$ is satisfied. Thus we have a yes-instance for $\genETSM_{\Rset}$.

Conversely, assume that the $\genETSM_{\Rset}$-instance constructed above is a yes-instance. We claim that the values of the $(1,1)$-entries $\xi_i$ of the $2 \times 2$-matrices corresponding to $x_1,\dots,x_k$ are a satisfying solution to the $\ETRinv$-instance. All $\xi_i$ are nonzero, since the determinants are $1$. The equations $c_1,\dots,c_m$ are all satisfied, since for each $c_\mu$ we set up a gadget that ensures this.

The reduction is obviously polynomial-time computable.
\end{proof}

\begin{corollary} \label{cor:genETSM}
    $\genETSM_{\Rset}$ is $\existsR$-complete.
\end{corollary}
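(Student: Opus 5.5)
The corollary asserts two things: membership of $\genETSM_{\Rset}$ in $\existsR$, and $\existsR$-hardness. For hardness, I would compose reductions that are already in place. By the proposition above, $\ETRinv$ is $\existsR$-complete, and by \cref{thm:genETSM} we have $\ETRinv \leP \genETSM_{\Rset}$. Since polynomial-time many-one reductions compose, every problem in $\existsR$ reduces to $\genETSM_{\Rset}$.

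For membership, I would give a direct polynomial-time translation of a $\genETSM_{\Rset}$-instance into an $\ETR$ formula of the form (\ref{eq:etr}). For each quantified matrix $X_h$ of size $n_h \times n_h$, I introduce $n_h^2$ real variables for its entries. The affine linear constraints $P_1,\dots,P_m$ carry over verbatim as polynomial equations, and each equation has degree one with rational coefficients. The constraint $\det X_h = 1$ has to be added as a polynomial equation, and this is the step needing some care.

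The naive Leibniz expansion of $\det X_h$ has $n_h!$ monomials, so writing it out is not polynomial size. My first option is to express the determinant by a polynomial-size arithmetic circuit, for instance a division-free algorithm such as Berkowitz's or Csanky-style methods over $\mathbb{Q}$. I would then apply Tseitin's trick from the discussion of (\ref{eq:tseitin}): introduce one fresh variable per gate and one equation of the form $u = v + w$, $u = v\cdot w$, or $u = c$ per gate. Finally I add the equation that the output variable equals $1$.

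A simpler alternative avoids determinant circuits entirely. We have $\det X_h = 1$ iff $X_h$ is an LU-type product of unit-triangular factors times $\operatorname{diag}(d_1,\dots,d_{n_h})$ with $\prod_i d_i = 1$, but pivoting makes that route awkward. The cleaner route is Gaussian elimination with existentially quantified row operations. Either way, the resulting formula has polynomial size, and it is satisfiable iff the instance is true. This gives $\genETSM_{\Rset} \in \existsR$.

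The only potential obstacle is this polynomial-size encoding of the determinant equation. The circuit-plus-Tseitin approach settles it, since polynomial-size division-free determinant circuits over $\mathbb{Q}$ are standard.
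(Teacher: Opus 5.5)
Your proposal is correct and follows the paper: the paper obtains the corollary by composing $\ETRinv \leP \genETSM_{\Rset}$ (\cref{thm:genETSM}) with the $\existsR$-completeness of $\ETRinv$, which is exactly your hardness argument. The paper treats membership in $\existsR$ as immediate, as it does when calling $\ETIM_{\Rset}$ a fragment of $\ETR$; your encoding of $\det X_h = 1$ by a polynomial-size division-free determinant circuit, with one Tseitin variable per gate, is a correct way to make that step rigorous, and the LU aside is not needed.
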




\section{Conclusions}

In the present work, we settled the complexity of the model checking problem for finitary formulae ($\NP$-complete) and significantly improved the complexity of deciding bisimilarity in finitary diagrams ($\NEXP$ in general and $\PSPACE$ for finite fields).
We gave an efficient randomized algorithm for the (generalized) existential theory of invertible matrices over infinite fields, in particular over the reals. It is an interesting, but very difficult question whether we can derandomize this algorithm for $\genETIM_{\Rset}$, since it is equivalent to the complement of symbolic determinant identity testing. Is the subproblem $\ETIM_{\Rset}$ equivalent to the complement of some identity testing problem?  Or is there an efficient deterministic algorithm? In contrast to $\genETIM_{\Rset}$, we proved that the generalized existential theory of special linear matrices is $\existsR$-complete. Is the existential theory of special linear matrices $\existsR$-hard, too?

\bibliography{reference}

\begin{appendix}

\section{Representation of quivers and Gabriel's theorem}\label{gabriel}

We give a brief introduction to Gabriel's theorem and its relation to 
Theorem~\ref{rank-invariant}. 
For more details on representations of quivers, we refer to \cite{schiffler2014quiver,derksen2017introduction}.

\begin{definition}[Quiver]
A \emph{quiver} is a quadruple $Q = (Q_0, Q_1, s, t)$, where
\begin{itemize}
    \item $Q_0$ is the set of vertices,
    \item $Q_1$ is the set of arrows,
    \item $s : Q_1 \to Q_0$ assigns to each arrow its \emph{source},
    \item $t : Q_1 \to Q_0$ assigns to each arrow its \emph{target}.
\end{itemize}
For an arrow $a \in Q_1$ with $s(a) = i$ and $t(a) = j$, we write $a : i \to j$.
\end{definition}

A quiver is nothing but a directed, potentially infinite multigraph. The notation above is the notation typically used in representation theory.

\begin{definition}[Representation of a quiver]
Let $k$ be a field.  
A \emph{representation} of a quiver $Q$ over $k$ consists of:
\begin{itemize}
    \item for each vertex $i \in Q_0$, a finite-dimensional $k$-vector space $V(i)$,
    \item for each arrow $a : i \to j$ in $Q_1$, a $k$-linear map $V(a) : V(i) \to V(j)$.    
\end{itemize}
\end{definition}

If the quiver is a simple directed acyclic graph, then a representation of a quiver can be viewed as a finitary diagram. If there are two directed path from $i$ to $j$ in a quiver, then the product of the morphisms along these path need not be the same. This is however required in a finitary diagram. But when this is not the case, that is, the quiver is a tree after disregarding directions, then a representation is indeed a finitary diagram. To prove Theorem~\ref{rank-invariant}, we will look at path quivers, which satisfy this property. 

\begin{definition}[Morphisms of representations]
Let $V$ and $W$ be two representations of $Q$.  
A \emph{morphism of representations}
\[
\varphi : V \to W
\]
is a collection of linear maps
\[
\varphi_i : V(i) \to W(i), \quad \text{for all } i \in Q_0,
\]
such that for every arrow $a : i \to j$ the following diagram commutes:
\[
\begin{CD}
V(i) @>{V(a)}>> V(j) \\
@V{\varphi_i}VV         @VV{\varphi_j}V \\
W(i) @>{W(a)}>> W(j)
\end{CD}
\]
i.e., $W(a) \circ \varphi_i = \varphi_j \circ V(a)$.
\end{definition}

\begin{theorem}[Gabriel]
Let $k$ be an algebraically closed field, and let $Q$ be a finite quiver 
without oriented cycles. Then the following are equivalent:
\begin{enumerate}
    \item The quiver $Q$ has 
    finitely many indecomposable representations up to isomorphism.
    \item The underlying undirected graph of $Q$ is a Dynkin diagram of type 
    $A_n$, $D_n$, $E_6$, $E_7$, or $E_8$.
\end{enumerate}
\end{theorem}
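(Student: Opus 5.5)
The plan is the classical two-directional argument: Tits' geometric argument for (1)$\Rightarrow$(2), and the Bernstein--Gelfand--Ponomarev reflection functors for (2)$\Rightarrow$(1). The common tool is the \emph{Tits form} of $Q$, the integral quadratic form on $\Z^{Q_0}$
\[
q_Q(d) = \sum_{i \in Q_0} d_i^2 - \sum_{a \in Q_1} d_{s(a)} d_{t(a)} .
\]
It depends only on the underlying undirected graph. The first step is the purely combinatorial fact that a connected graph has positive definite form exactly when it is a simply laced Dynkin diagram $A_n$, $D_n$, $E_6$, $E_7$, $E_8$. For the forward direction we argue that any other connected graph contains, as a subgraph, an extended Dynkin diagram $\tilde A_n$ (including a loop or a double edge), $\tilde D_n$, $\tilde E_6$, $\tilde E_7$ or $\tilde E_8$. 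Each extended diagram has an explicit null root $\delta > 0$ with $q(\delta) = 0$. Extending $\delta$ by zero gives a nonzero $d \ge 0$ on $Q$ with $q_Q(d) \le 0$, since extra edges only decrease $q$. We may assume $Q$ is connected, as both conditions are componentwise.

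For (1)$\Rightarrow$(2), we fix a dimension vector $d$ and look at the affine space $\mathrm{Rep}(Q,d) = \prod_{a} \mathrm{Hom}(k^{d_{s(a)}}, k^{d_{t(a)}})$. The group $G_d = \prod_i \GL_{d_i}(k)$ acts on it by change of basis, and its orbits are exactly the isomorphism classes of representations of dimension $d$. Suppose there are only finitely many indecomposables. By Krull--Schmidt, each representation of dimension $d$ is a direct sum of indecomposables whose dimension vectors sum to $d$, so there are only finitely many orbits. Since $k$ is algebraically closed, the irreducible variety $\mathrm{Rep}(Q,d)$ is then a finite union of orbits, so one orbit is dense and has full dimension. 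The scalar subgroup $k^\times \cdot \id$ acts trivially, so every orbit has dimension at most $\dim G_d - 1$. This gives $\sum_a d_{s(a)} d_{t(a)} \le \sum_i d_i^2 - 1$, that is, $q_Q(d) \ge 1$ for every nonzero $d \ge 0$. This contradicts the null root constructed above, so the graph must be Dynkin.

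For (2)$\Rightarrow$(1), we use reflection functors. If $i$ is a sink (respectively a source) of $Q$, there is a functor $S_i^{\pm}$ from representations of $Q$ to representations of $\sigma_i Q$, the quiver with all arrows at $i$ reversed. It sends every indecomposable $V \ne S(i)$ to an indecomposable with dimension vector $s_i(\dim V)$, where $s_i$ is the simple reflection for $q_Q$, and it kills only the simple representation $S(i)$. Choosing an admissible ordering of the vertices gives a Coxeter functor $C^+$ acting on dimension vectors by a Coxeter element $c$. Since $q_Q$ is positive definite, the Weyl group is finite and $c$ has finite order without fixed nonzero vectors. Hence iterating $C^+$ on an indecomposable $V$ must eventually produce a non-positive vector, which forces $V$ to be killed. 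Unwinding this shows $V \cong C^{-r} S_{i_1}^- \cdots S(i_j)$ for some $r$ and $j$. Consequently $\dim V$ is a positive root, meaning $q_Q(\dim V) = 1$, and $V$ is determined up to isomorphism by its dimension vector. A positive definite integral form has only finitely many vectors with $q = 1$, so there are finitely many indecomposables.

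The main obstacle is this last direction: checking that the reflection functors preserve indecomposability and act on dimension vectors by $s_i$, which needs the exactness computation at a sink or source, and then running the finiteness argument for the Coxeter element cleanly. For the paper only the $A_n$ case is needed in \cref{rank-invariant}. There one could instead directly classify indecomposables of a path quiver as interval modules by an elementary splitting argument, which works over any field and is the route I would actually take for the application.
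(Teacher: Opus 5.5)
The paper does not prove this theorem. It is quoted in the appendix as background, with pointers to Schiffler and to Derksen--Weyman. What the paper actually uses is only the $A_n$ case in the form ``over an arbitrary field $\F$, every representation of $1\to 2\to\cdots\to n$ decomposes uniquely into interval modules $V_{[i,j]}$'', from which \cref{rank-invariant} is read off.

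Your sketch is the standard textbook proof, and it is sound at this level of detail: Tits' orbit-dimension argument applied to the null root of an extended Dynkin subgraph for (1)$\Rightarrow$(2), and Bernstein--Gelfand--Ponomarev reflection functors with a fixed-point-free Coxeter element for (2)$\Rightarrow$(1). It buys the full ADE classification, which the paper never needs. Your (2)$\Rightarrow$(1) half already gives exactly what the paper uses, over every field. For $A_n$ the positive roots are precisely the interval vectors $e_i+\cdots+e_j$, so every indecomposable is some $V_{[i,j]}$, and Krull--Schmidt gives uniqueness. The elementary splitting argument you propose is a lighter route to the same fact. Either way, reaching \cref{rank-invariant} also needs the multiplicities to be determined by ranks. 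This follows from $r(i,i')=\sum_{a\le i,\ b\ge i'} m_{a,b}$, where $r(i,i')$ is the rank of the composite map from vertex $i$ to vertex $i'$, plus inclusion--exclusion. The paper glosses over this as well.

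Two small points to tighten.

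First, condition (2) is not componentwise as literally stated. The quiver with two vertices and no arrows has exactly two indecomposables, yet its graph is not a Dynkin diagram. So say explicitly that you treat connected $Q$, or read (2) as ``every component is Dynkin''.

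Second, ``$V$ is determined by its dimension vector'' needs one more sentence. The stage at which the iterated reflections kill $V$ depends only on $\dim V$, because an indecomposable with dimension vector $e_i$ must be $S(i)$. Together with $S_i^-S_i^+W\cong W$ for indecomposable $W\not\cong S(i)$, this gives $V\cong (C^-)^r S^-_{i_1}\cdots S^-_{i_t}S(i_{t+1})$ with $(r,t)$ a function of $\dim V$. Alternatively, let $h$ be the order of $c$. Then $\sum_{j<h}c^j(\dim V)$ is $c$-fixed and hence zero, which forces $r<h$ and bounds the number of indecomposables by $h\,|Q_0|$ directly.
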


Dynkin diagrams are undirected graphs that encode the structure of a so-called root system. Here is it only important that $A_n$ is the undirected path of length $n$. 
Consider the orientation of $A_n$ of the form
\[
1 \longrightarrow 2 \longrightarrow \cdots \longrightarrow n.
\]
Gabriel's theorem in its full generality only holds for algebraically closed fields. But it is true over any field $\F$ for path quivers $A_n$. In particular, every indecomposable representation of $A_n$
is an \emph{interval representation}
\[
V_{[i,j]}, \qquad 1 \le i \le j \le n,
\]
defined as follows:
\begin{align*}
V_{[i,j]}(k) &= 
\begin{cases}
\mathbb{F}, & \text{if } i \le k \le j,\\[4pt]
0,          & \text{otherwise},
\end{cases} \\
V_{[i,j]}(k \to k+1) &= \mathrm{id}_{\mathbb{F}}
\quad \text{for } i \le k < j.
\end{align*}
Every representation $V$ of $A_n$ over $\mathbb{F}$ decomposes uniquely,
up to isomorphism and permutation of summands, as a direct sum of interval
representations:
\[
V \;\cong\; \bigoplus_{1 \le i \le j \le n}
    V_{[i,j]}^{\, m_{i,j}},
\]
for uniquely determined multiplicities $m_{i,j} \in \mathbb{N}$. From the uniqueness of the representation, Theorem~\ref{rank-invariant} follows.

\end{appendix}

\end{document}